\newcommand\Item[1][i]{%
	\ifx\relax#1\relax  \item \else \item[#1] \fi
	\abovedisplayskip=0pt\abovedisplayshortskip=0pt~\vspace*{-\baselineskip}}
\crefname{hypothesis}{Hypothesis}{Hypotheses}
\title{Achieving a Given Financial Goal with Optimal Deferred Term Insurance Purchasing Policy}
\author{Yuqi Li\thanks{School of Sciences, Beijing University of Posts and Telecommunications, Beijing, 100876, China 
  (\email{Llyq@bupt.edu.cn}).}
\and 
Lihua Zhang\thanks{Corresponding author. School of Sciences, Beijing University of Posts and Telecommunications, Beijing, 100876, China 
  (\email{zhlh@bupt.edu.cn}).}}
\def\bt{\begin{theorem}}
\def\et{\end{theorem}}
\def\bl{\begin{lemma}}
\def\el{\end{lemma}}
\def\br{\begin{remark}}
\def\er{\end{remark}}
\def\be{\begin{equation}}
\def\ee{\end{equation}}
\def\ce{\begin{equation*}}
\def\de{\end{equation*}}
\def\<{{\langle}}
\def\>{{\rangle}}
\newtheorem*{Main Theorem}[theorem]{Main Theorem}{\normalfont\bfseries}{\itshape}
{\bfseries}{\itshape} 
{\bfseries}{\itshape} 
{\bfseries}{\itshape} 
\begin{document}

\maketitle

\begin{abstract}
This paper researches the problem of purchasing deferred term insurance in the context of financial planning to maximize the probability of achieving a personal financial goal. Specifically, our study starts from the perspective of hedging death risk and longevity risk, and considers the purchase of deferred term life insurance and deferred term pure endowment to achieve a given financial goal for the first time in both deterministic and stochastic framework. In particular, we consider income, consumption and risky investment in the stochastic framework, extending previous results in \cite{Bayraktar2016}. The time cutoff $m$ and $n$ make the work more difficult. However, by establishing new controls,``\emph{quasi-ideal value}" and``\emph{ideal value}", we solve the corresponding ordinary differential equations or stochastic differential equations, and give the specific expressions for the maximum probability. Then we provide the optimal life insurance purchasing strategies and the optimal risk investment strategies. In general, when $ m \geqslant 0, n>0$, deferred term insurance or term life insurance is a better choice for those who want to achieve their financial or bequest goals but are not financially sound. In particular, if $m >0, n \rightarrow \infty$, our viewpoint also sheds light on reaching a bequest goal by purchasing deferred whole life insurance. It is worth noting that when $m=0$, $ n \rightarrow \infty$, our problem is equivalent to achieving the just mentioned bequest goal by purchasing whole life insurance, at which point the maximum probability and the life insurance purchasing strategies we provide are consistent with those in \cite{Bayraktar2014, Bayraktar2016}. 
\end{abstract}

\begin{keywords}
Deferred term life insurance, deterministic control, variational inequality, optimal strategy, personal financial planning, financial goal, stochastic differential equations. 
\end{keywords}


\section{Introduction}
\label{sec1a}\vspace{-4mm}
 ``Financial management" is an important issue in a person's life. Therefore, how to achieve financial goals has become one of the topics that everyone pays more and more attention to, some researches refer to \cite{MJ2021}, \cite{SM2018}. In order to achieve financial goals, many scholars have previously studied various investment strategies among different groups of people. For example, \cite{H2016} studied for college students how to do their financial planning and gave the specific steps for financial management. \cite{GG2018} subdivided the age of retirees and made a corresponding financial plan by analyzing financial capabilities and investment goals. In fact, there are many factors that affect personal financial goals. For instance, \cite{PR2015} examined a model of household financial planning that takes into account factors such as family survival, investment returns, labor income, health status, and life insurance to achieve financial goals. More about investments and financial management, please refer to \cite{Arup2017}, \cite{Biradar2021}, \cite{Bender2022}, \cite{VL2018}, \cite{KD2014}, \cite{DR2016}.\\
 \indent In last several years, \cite{DFB2018}, \cite{DS2008}, \cite{WOG2019} pointed that life insurance has attracted much attention as one of the investment methods due to its insurance characteristics, as it helps to reduce the financial burden of adverse events such as premature death, terminal illness, incapacity to work, or incapacity due to injury or disability by transferring personal losses to insurance companies. Recently, an increasing number of scholars have addressed the question of how to link the purchase of life insurance with investments.  \cite{FR1975} integrated life insurance with consumption and investment based on the optimal consumption and optimal investment problems first solved by \cite{RM1969,RM1971}. Since then, most scholars have begun to follow Richard's research methods but based on the principle of maximizing consumption utility. \cite{B2013} considered the problem about how to purchasing life insurance to maximize utility of household consumption. \cite{LZ2016} studied the stochastic optimal control problem of whole life insurance purchasing strategy with consumption and investment under the CAAR utility. \cite{L2021} considered the utility maximization problem under the exponential utility functions, and then gave the optimal portfolio, consumption and whole life insurance strategies. \\
 \indent Other than the criterion of maximizing consumption utility, many scholars have also incorporated the idea of probability into the whole life insurance purchasing strategy. They combined the minimum probability of lifetime ruin or the maximum probability of reaching a bequest goal with the whole life insurance purchasing strategy. In the early years, there were predictions that in the ten years between 2020 and 2030, the cost of living for retirees would exceed their financial capital by 400 billion dollars. Under this plan, individuals (rather than employers) must bear all investment and living risks. In this context, \cite{Y2004} gave the optimal investment strategies to minimize the probability of lifetime ruin. \cite{Bayraktar2006} used the stochastic optimal control technique to determine the optimal investment strategies for the minimum probability of lifetime ruin under the given two consumption rates and credit constraints. To make the model more complete and realistic, \cite{WY2012} introduced the annuity, and determined the minimum probability of lifetime ruin when buying a convertible life annuity and investing in a risky market. Building on the aforementioned body of prior work, Bayraktar et al examined a number of issues related to purchasing whole life insurance policy in order to maximize the probability of reaching a bequest goal. \cite{Bayraktar2014,Bayraktar2015} first considered this issue without consumption and investment, and further assumed that the force of death changes over time. Based on the above researches, \cite{Bayraktar2016} considered the model with investment and consumption, and indicated the optimal strategies for whole life insurance purchase. Recently, \cite{LY2019} determined the optimal robust strategy for maximizing the probability of reaching a bequest goal under moral hazard rate uncertainty and risky asset drift. They extended the work of Bayraktar et al by allowing that financial markets and personal mortality are ambiguous. \\
 \indent Through the study of this series of questions, previous researches have focused on the whole life insurance. However, \cite{MJ2021}, \cite{SM2018} found that due to the higher premiums of whole life insurance, it is not suitable for young or middle-aged people who need protection for a certain period of time but have only average economic power. Therefore, it is necessary to consider term life insurance. As can be seen in \cite{XS2020}, \cite{DP2011}, deferred term life insurance is a more general type of insurance than term life insurance, and from an economic perspective, it further reduces the financial burden. Thus, it is natural to include deferred term life insurance in personal financial planning. Therefore, we explore the problem of purchasing deferred term insurance in personal financial planning to maximize the probability of achieving a financial goal, and provide the optimal purchasing strategies in the deterministic framework. Furthermore, we consider the problem under the stochastic framework which include the consumption, income and risky investment, and then, give the optimal life insurance purchasing strategies and the optimal investment strategies.\\
 \indent The specific structure and innovation of this paper are as follows. In Section \ref{S2}, the policyholder purchases $m$-year deferred $n$-year term life insurance through a single premium or a continuously paid premium to take out a death benefit plan. Although the time cutoff $m$ and $n$ complicates the work, we still make progress by establishing new deterministic controls, ``\emph{quasi-ideal value}" and ``\emph{ideal value}". After solving the associated variational inequalities, we finally obtain the optimal purchasing strategies. When $ m \geqslant 0, n>0$, deferred term insurance is a better choice for those who want to achieve their financial or bequest goals but are not financially sound. In particular, if $m >0, n \rightarrow \infty$, our viewpoint also sheds light on reaching a bequest goal by purchasing deferred whole life insurance. It is worth noting that if $m=0$, $ n \rightarrow \infty$, our problem is equivalent to achieving the just mentioned bequest goal by purchasing whole life insurance. In this case, the maximum probability and life insurance purchasing strategies we provide are consistent with those in \cite{Bayraktar2014}. Moreover, whether or not the policyholder has enough time to reach the ``\emph{ideal value}" has an obvious impact on the maximum probability of achieving a given financial goal. For the case of continuous premium payment, we also study this situation by comparing the force of interest $r$ and the force of death $ \lambda$. Therefore, we give the maximum probability of achieving a given financial goal and the corresponding strategies for purchasing life insurance in the following two cases, when the policyholder has enough time to reach the ``\emph{ideal value}" (i.e. $\lambda \leqslant r$) or when there is not enough time to reach the ``\emph{ideal value}"(i.e. $\lambda >r$). In addition to the risk of death, there is another risk that deserves attention, namely the risk of longevity. To address this risk, we provide the relevant conclusions in Section \ref{S3}. \\
 \indent For the stochastic framework, we assume that the policyholder's account includes consumption, income and risky investment to achieve a financial goal under two different models in Section \ref{S4}. We introduce the approximation functions and functional operators, then by the It\^o's formula and Legendre transform, we solve the maximal probability and give the optimal $n$-year term life insurance purchasing strategies and the optimal investment strategies. When $n \rightarrow \infty$ and when the income process vanishes, our results are also consistent with those in \cite{Bayraktar2016}. Finally, Section \ref{S5} concludes the paper. From the above analyses and our results, it is clear that our research findings are more general in comparison with whole life insurance. At the same time, it lays a new direction and research foundation for personal financial planning issues. 

       \section{Purchasing  $m$-year deferred $n$-year term life insurance in personal financial planning under the deterministic framework}\label{S2}
       
       \indent Assume the policyholder has an investment account to achieve a financial goal, and this account doesn't include the consuming and risky investment. He/She may invest in a risk-free market to earn interest, which has the force of interest $r$, or purchase $m$-year deferred $n$-year term life insurance. This insurance can help the policyholder to achieve financial motivation. Let $ \delta _{d}$ be the future lifetime, it follows an exponential distribution with parameter $ \lambda $, that is, the probability density function of $\delta _{d} $ is $f(x)=\lambda e^{-\lambda x},x>0 $. Equivalently, the policyholder is subject to a constant force of mortality $ \lambda $. What needs to be explained here is that, in this paper, we always assume that the moment when the policyholder purchases insurance is the initial moment.
       \subsection{$m$-year deferred $n$-year term life insurance purchased by a single premium}
       The policyholder buys $m$-year deferred $n$-year term life insurance by a single premium with no cash value available. We define that a dollar death benefit payable immediately at time $ \delta _{d}$ which between $m$ and $ m+n$ costs $ _{m\shortmid}\textrm{K}_{n}$. The premium is payable at the moment of the contract, so $ _{m\shortmid}\textrm{K}_{n} $ is the single premium per dollar of death benefit. We give the single premium as follows:
       \begin{equation}
       _{m\shortmid}\textrm{K}_{n}=\Bigl(1+\theta \Bigr)\int_{m}^{m+n}e^{-rt}\lambda e^{-\lambda t}dt=\Bigl(1+\theta \Bigr)\frac{\lambda }{\lambda +r}\Bigl(e^{-(r+\lambda )m}-e^{-(r+\lambda )(m+n)}\Bigr), \notag
       \end{equation}
       in which $\theta$ is the proportional risk loading. \\
       \indent Denote the wealth in this investment account at time $t\geqslant 0$ by $W(t)$, and denote the amount of death benefit payable at time $ \delta _{d} $ purchased at or before time $t$ by $D(t)$, in which $ 0\leqslant t < \delta _{d}$. Therefore, with single-premium $m$-year deferred $n$-year term life insurance, the wealth follows the dynamics
       \begin{equation}
       \begin{cases}
       dW(t)=rW(t-)dt-\:_{m\shortmid}\textrm{K}_{n}dD(t),&\ 0\leqslant t<\delta _{d},\\[2mm]W(\delta _{d})=W(\delta _{d}-)+D(\delta _{d}-)\mathbbm{1}_{\left \{ m<\delta _{d}\leqslant m+n \right \}},
       \end{cases} \notag
       \end{equation}
       in which $\mathbbm{1}_{\left \{m< \delta _{d}\leqslant m+n \right \}}$ equals $1$, otherwise, it equals $0$. The similar definition in the latter sections that we omit the explanation.
       \begin{remark}
       	We define insurance purchasing strategy $\textnormal{\textbf{D}}=\left \{ D(t) \right \}_{t\geqslant 0}$. In this section, $ \textnormal{\textbf{D}}$ is called admissible if $\textnormal{\textbf{D}}$ is a non-decreasing, non-negative process, independent of $ \delta _{d} $; and if wealth under this process is non-negative for all $t\geqslant 0 $.
       \end{remark}
       
       \indent We define the maximum probability of achieving the given financial goal (id est $f$) as follows:
       \begin{equation}
       \begin{split}
       _{m \shortmid}\varphi_{n}(w,D)&=\sup_{\textbf{D}}  \biggl[\textbf{P}^{w,D}(W(\delta _{d}) \geqslant f \mid \delta_{d} < m)P(\delta_{d} < m)\\& \quad+ \textbf{P}^{w,D}(W(\delta _{d}) \geqslant f \mid m \leqslant \delta_{d} \leqslant m+n)P(m \leqslant \delta_{d} \leqslant m+n)\\& \quad +\textbf{P}^{w,D}(W(\delta _{d}) \geqslant f \mid \delta_{d} > m+n)P(\delta_{d} > m+n)  \biggr]\\
       &:=\sup_{\textbf{D}}  \bigl(\textbf{P}^{w,D}_{1}+ \textbf{P}^{w,D}_{2}+\textbf{P}^{w,D}_{3} \bigr)
       \end{split}		\notag
       \end{equation}
       in which $\textbf{P}^{w,D}$ denotes the conditional probability given $W(0-)=w$ and $D(0-)=D$. $ \textbf{P}^{w,D}_{1}=\textbf{P}^{w,D}(W(\delta _{d})\geqslant f \mid \delta_{d} <m)P(\delta_{d} <m)$, and the definitions of $ \textbf{P}^{w,D}_{i},\:i=2,3$ are similar. $D$ represents the compensation available for other types of financial products that the policyholder owns before purchasing $m$-year deferred $n$-year term life insurance, and $ D<f $, $ 0\leqslant w<( _{m\shortmid}\textrm{K}_{n}+1)(f-D):=w_{*}$. For the convenience of later discussions, we introduce the following notations
       \begin{equation}
       \begin{split}
       _{m\shortmid}\varphi_{n}(w,D):=\: _{m\shortmid}\varphi^{1}_{n}(w,D)+ \: _{m\shortmid}\varphi^{2}_{n}(w,D)+ \: _{m\shortmid}\varphi^{3}_{n}(w,D),
       \end{split}		\notag
       \end{equation}
       \begin{equation}
       \begin{split}
       _{m\shortmid}\varphi^{2,3}_{n}(w,D):=\: _{m\shortmid}\varphi^{2}_{n}(w,D)+ \: _{m\shortmid}\varphi^{3}_{n}(w,D),
       \end{split}		\notag
       \end{equation}
       in which
       \begin{equation}
       _{m\shortmid}\varphi^{i}_{n}(w,D)=\sup_{\textbf{D}} \textbf{P}^{w,D}_{i}, \; i=1,2,3.\notag
       \end{equation} 
       
       \begin{remark}
       	{\rm(1)}. In this paper, we assume that $D<f$ since the policyholder would achieve the financial goal if $ D\geqslant f $.\\
       	{\rm(2)}. If wealth equals $ _{m\shortmid}\textrm{K}_{n}(f-D):= K_{*}$, then the policyholder will spend all wealth to buy $m$-year deferred $n$-year term life insurance of $ f-D$, and if he/she survives more than $m$ years after purchasing insurance, but dies within $m+n$ years, then the policyholder's total death benefit becomes $(f-D)+D=f$, we simply call $ K_{*}$ the ``\emph{quasi-ideal value}". If the policyholder dies after $m+n$ years of purchasing insurance, he/she may not receive the death benefit. So, we need to find the real ``\emph{ideal value}". If wealth equals $w_{*}$ and policyholder doesn't receive the death benefit, then only when the remained wealth is greater than or equal to $ f-D$, the policyholder may achieve the financial goal. Therefore, $w_{*}$ is called the ``\emph{ideal value}", at this time, it's optimal for the policyholder to purchase $m$-year deferred $n$-year term life insurance of $f-D$, whether or not he/she can receive the death benefit, the total wealth at $ \delta _{d}$ will reach the given financial target $ f$. \\
       	\indent Under the purchasing strategies discussed above, we obtain $_{m\shortmid}\varphi_{n}(w,D)=1 $ for $ w\geqslant w_{*},\; 0\leqslant D< f$. Thus, all that remains is to calculate the maximum probability of achieving the given financial goal $f$ for $ \L =\left \{ (w,D):0\leqslant w<w_{*}, \; 0\leqslant D<f \right \}$. 
       \end{remark}
       
       \begin{proposition}\label{P1}
       	The maximum probability of achieving the financial goal on $ \L $ is given by 
       	\begin{equation}
       	_{m\shortmid}\varphi_{n}(w,D)=\begin{cases} 
       	\displaystyle \left ( \frac{w}{ K_{*}} \right )^{\frac{ \lambda }{r}}\biggl(e^{-\lambda m}-e^{-\lambda(m+n)}\biggr), &0\leqslant w<  K_{*},\\[3mm]
       	\displaystyle \biggl(e^{-\lambda m}-e^{-\lambda (m+n)}\biggr)+e^{-\lambda (m+n)}\left ( \frac{w- K_{*}}{f-D} \right )^{\frac{\lambda }{r}},& K_{*}\leqslant w < w_{0},\\[3mm]
       	\displaystyle \biggl(1+e^{-\lambda (m+n)}\biggr)\left ( \frac{w- K_{*}}{f-D} \right )^{\frac{\lambda }{r}}-e^{-\lambda (m+n)},&w_{0}\leqslant w < w_{*},
       	\end{cases}  \notag
       	\end{equation}
       	in which $w_{0}=(e^{-rm}+\: _{m\shortmid}\textrm{K}_{n})(f-D)$.\\
       	\indent The related optimal insurance purchasing strategy is not to purchase additional insurance until wealth reaches $w_{*}$, at which point, it is optimal to buy additional $m$-year deferred $n$-year term life insurance of $ f-D$.
       \end{proposition}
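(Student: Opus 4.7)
The plan is a guess-and-verify argument. Since the only randomness is the exponential death time $\delta_d \sim \mathrm{Exp}(\lambda)$ and wealth evolves deterministically at rate $r$ between impulse purchases, the problem reduces to an impulse-control problem whose value function $V(w,D,t)$ (conditional on being alive at time $t$) satisfies a first-order variational inequality of the schematic form
\begin{equation*}
\min\Bigl\{V_t + rw\,V_w + \lambda\bigl(\Phi(w,D,t) - V\bigr),\ \ V - \mathcal{M}V\Bigr\} = 0,
\end{equation*}
where $\Phi(w,D,t) = \mathbbm{1}\{w + D\,\mathbbm{1}\{m<t\leqslant m+n\}\geqslant f\}$ records the immediate-death payoff and $\mathcal{M}V(w,D,t) = \sup_{0\leqslant\Delta\leqslant w/{}_{m\shortmid}\textrm{K}_n} V(w - {}_{m\shortmid}\textrm{K}_n\Delta,\,D+\Delta,\,t)$ is the impulse operator associated with a purchase. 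The target quantity is ${}_{m\shortmid}\varphi_n(w,D)=V(w,D,0)$.

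First I verify achievability. Under the strategy described in the Proposition --- holding wealth until it first reaches $w_*$ at the deterministic hitting time $\tau(w)=\frac{1}{r}\ln(w_*/w)$ and then purchasing $f-D$ units of insurance --- the wealth trajectory is $W(t)=we^{rt}$ for $t<\tau(w)$ and $(f-D)e^{r(t-\tau(w))}$ afterwards, while $D(t)$ jumps from $D$ to $f$ at $\tau(w)$. The probability of goal achievement equals $\int_0^\infty \lambda e^{-\lambda s}\,\mathbbm{1}\{W(s)+D(s)\mathbbm{1}\{m<s\leqslant m+n\}\geqslant f\}\,ds$; decomposing the indicator set into subintervals defined by the position of $s$ relative to $\tau(w)$, $m$, and $m+n$ and evaluating each resulting integral of $\lambda e^{-\lambda s}$ yields the closed-form piecewise expression. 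The internal cutoffs emerge naturally from this case analysis: $K_*$ is the wealth that can fund a full purchase outright, and $w_0 = (e^{-rm}+{}_{m\shortmid}\textrm{K}_n)(f-D)$ appears because the residual $w_0-K_*=e^{-rm}(f-D)$ grows to exactly $f-D$ by time $m$, i.e., at the opening of the deferral window, separating the regime in which post-purchase accumulation can outrun the start of the window from the regime in which it cannot.

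For optimality I apply the verification theorem: I extend the candidate to $\hat V(w,D,t)$ for all $t\geqslant 0$ by solving the continuation equation $\hat V_t + rw\,\hat V_w + \lambda(\Phi - \hat V) = 0$ in each wealth sub-region and matching continuously across $w=K_*,\,w_0$ and across the temporal interfaces $t=m,\,m+n$, and then check the obstacle inequality $\hat V\geqslant\mathcal{M}\hat V$ strictly for $w<w_*$, with equality at $w=w_*$ with $\Delta=f-D$. The strict obstacle inequality is exactly the statement that no earlier partial purchase is profitable, and together with the continuation equation it confirms $\hat V=V$ and the claimed optimal policy. The main technical obstacle is the non-stationarity induced by the absolute-time deferral window $(m,m+n]$: unlike the whole-life case of \cite{Bayraktar2014}, the value function is genuinely $t$-dependent, so the HJB is a true PDE and the verification argument must be carried out in its non-autonomous form, with care required to match the piecewise ansatz across the three wealth branches and the two temporal interfaces simultaneously; tracking these three regimes consistently is the principal bookkeeping task of the proof.
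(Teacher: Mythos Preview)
Your approach is correct in principle but takes a genuinely different route from the paper. You frame the problem as a non-autonomous impulse-control HJB in $(w,D,t)$ and confront the time-dependence of the deferral window head-on. The paper instead sidesteps the $t$-variable entirely by an additive decomposition of the objective according to the interval in which $\delta_d$ falls: it writes ${}_{m\shortmid}\varphi_n = {}_{m\shortmid}\varphi^{1}_n + {}_{m\shortmid}\varphi^{2,3}_n$, where ${}_{m\shortmid}\varphi^{1}_n$ corresponds to $\delta_d<m$ and ${}_{m\shortmid}\varphi^{2,3}_n$ to $\delta_d\geqslant m$. For the second piece the paper proves a \emph{time-stationary} verification lemma (Lemma~2.3 in the paper) whose variational inequality involves only $(w,D)$---the exponential clock and constant premium allow the factor $e^{-\lambda m}-e^{-\lambda(m+n)}$ to be pulled out as a multiplicative constant, and the residual problem looks like a whole-life problem. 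The first piece ${}_{m\shortmid}\varphi^{1}_n$ is handled not by any verification argument at all but by one line of elementary integration (find the time $t_*$ at which $(w-K_*)e^{rt_*}=f-D$ and evaluate $\int_{t_*}^{m}\lambda e^{-\lambda t}\,dt$). The threshold $w_0$ then arises as the point where this integral first becomes nonnegative, and summing the two pieces gives the three-branch formula.

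What each approach buys: the paper's decomposition reduces everything to autonomous ODEs and avoids the temporal interfaces at $t=m,\,m+n$ that you flag as the main bookkeeping burden; in exchange it relies (somewhat implicitly) on the fact that the same threshold strategy is simultaneously optimal for each summand, so that $\sup(\mathbf{P}_1+\mathbf{P}_2+\mathbf{P}_3)=\sup\mathbf{P}_1+\sup\mathbf{P}_2+\sup\mathbf{P}_3$. Your non-autonomous verification is more self-contained and does not need that coincidence, but requires constructing $\hat V(w,D,t)$ for all $t$ and matching across five interfaces rather than two. If you pursue your route, note that the paper's stationary reduction is the observation that lets you collapse the $t$-dependence; you may find it more efficient to adopt their decomposition for the verification step while retaining your direct achievability computation.
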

       
       \indent To prove the Proposition \ref{P1}, we first give several auxiliary lemmas.
       \begin{lemma}\label{L1}
       	Let $\phi =\phi (w,D) $ be a function that is non-decreasing, continuous, and piecewise differentiable with respect to both $w$ and $D$ on $  \L $, except that $\phi $ might have infinite derivative with respect to $w $ at $ w=0$. Suppose $ \phi $ satisfies the following variational inequality on $  \L $, except possibly when $w=0 $.\\
       	When $ 0 < w<  K_{*} $,
       	\begin{equation}
       	{\rm max} \Bigl(rw\phi _{w}-\lambda \phi ,\; \phi _{D}-\: _{m\shortmid}\textrm{K}_{n}\phi _{w}\Bigr)=0 .\label{E1}
       	\end{equation}
       	When $ w \geqslant  K_{*} $, 
       	\begin{equation}
       	{\rm max} \Bigl(rw\phi _{w}-\lambda \phi +(\lambda (e^{-\lambda m}-e^{-\lambda (m+n)})-r K_{*} \phi _{w}),\; \phi _{D}-( _{m\shortmid}\textrm{K}_{n}+1)\phi _{w} \Bigr)=0. \label{E2}
       	\end{equation}
       	Additionally, suppose 
       	\begin{equation}
       	\phi \bigl( K_{*} ,\;D\bigr)=e^{-\lambda m}-e^{-\lambda (m+n)},\quad \phi \bigl( w_{*} ,\;D\bigr)=e^{-\lambda m}. \notag
       	\end{equation}
       	Then, on $ \L$, $_{m\shortmid}\varphi^{2,3}_{n}(w,D) = \phi (w,D) $.
       \end{lemma}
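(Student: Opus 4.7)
The plan is a verification argument. I would first show $\phi(w, D) \geq \textbf{P}^{w,D}_2 + \textbf{P}^{w,D}_3$ for every admissible strategy, and then exhibit the strategy realising equality.

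Fix an admissible $\mathbf{D} = \{D(t)\}$ with initial data $(W(0-), D(0-)) = (w, D) \in \L$, denote the induced wealth trajectory by $W(\cdot)$ on $[0, \delta_d)$, and introduce the hitting times $\tau_K := \inf\{t \geq 0 : W(t) \geq K_*\}$ and $\tau_* := \inf\{t \geq 0 : W(t) \geq w_*\}$. Since the wealth dynamics contain no diffusion, $(W, D)$ has finite variation on $[0, \delta_d)$ and the classical chain rule applies to $e^{-\lambda t}\phi(W(t), D(t))$; at a jump of $D$ the increment of $\phi$ is obtained by integrating $\phi_D\,dD + \phi_w\,dW$ along the straight line coupling $\Delta W = -{}_{m\shortmid}\textrm{K}_n \Delta D$ enforced by admissibility, and is therefore controlled from above by the purchase inequality of the relevant variational inequality.

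On $[0, \tau_K)$ the two inequalities of \eqref{E1} make both the absolutely continuous and the pure-jump parts of $d(e^{-\lambda t}\phi(W, D))$ non-positive, so the process is non-increasing. On $[\tau_K, \tau_*)$ I would introduce the shifted function $\psi(w, D) := \phi(w, D) - (e^{-\lambda m} - e^{-\lambda(m+n)})$: the constant shift makes the additive $\lambda(e^{-\lambda m} - e^{-\lambda(m+n)})$ term in \eqref{E2} cancel, and recombining with $-rK_*\phi_w$ yields the clean inequality $r(w - K_*)\psi_w - \lambda\psi \leq 0$. The modified purchase inequality $\phi_D - ({}_{m\shortmid}\textrm{K}_n + 1)\phi_w \leq 0$ from \eqref{E2} controls the jump contribution exactly as before, the extra $+1$ reflecting that a purchase past $K_*$ simultaneously spends wealth and shrinks the quasi-ideal threshold $K_*$. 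These two facts together show that $e^{-\lambda t}\psi(W(t), D(t))$ is non-increasing on $[\tau_K, \tau_*)$. Joining the two regions through the interface value $\phi(K_*, D) = e^{-\lambda m} - e^{-\lambda(m+n)}$, evaluating at $t = \tau_*$ via the boundary condition $\phi(w_*, D) = e^{-\lambda m}$, and integrating against the exponential law of $\delta_d$ yields the desired upper bound for every admissible $\mathbf{D}$; taking the supremum gives $\phi \geq {}_{m\shortmid}\varphi^{2,3}_n$ on $\L$.

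For the reverse inequality I would check that the strategy described in Proposition \ref{P1}---no purchase until $W(t) = w_*$, then an instantaneous purchase of face amount $f - D$---saturates each monotonicity step above: in the no-purchase regime only the first inequalities of \eqref{E1}--\eqref{E2} are active, and along the deterministic trajectory $W(t) = w e^{rt}$ they are equalities for the candidate $\phi$; at $t = \tau_*$ the instantaneous purchase travels along the characteristic of the purchase inequality, which is also an equality for the candidate $\phi$ by construction. The main obstacle is the bookkeeping on $[\tau_K, \tau_*)$: the two corrections in \eqref{E2}---the additive $\lambda(e^{-\lambda m} - e^{-\lambda(m+n)}) - rK_*\phi_w$ and the modified coefficient $({}_{m\shortmid}\textrm{K}_n + 1)$---must be reorganised via the shift $\psi = \phi - (e^{-\lambda m} - e^{-\lambda(m+n)})$ into a supermartingale statement that pastes onto the first-region estimate through the interface value $\phi(K_*, D)$ without loss. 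Auxiliary issues---justifying the chain rule across jumps of $D$, and handling possibly infinite $\phi_w$ at $w = 0$ by stopping at $\tau_\varepsilon = \inf\{t : W(t) \geq \varepsilon\}$ and sending $\varepsilon \downarrow 0$---are routine.
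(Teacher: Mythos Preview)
Your proposal is correct and follows essentially the same verification route as the paper: apply the chain rule to $e^{-\lambda t}\phi(W(t),D(t))$, use the variational inequalities to bound the increments, and handle the region $w\ge K_*$ via the shift $\psi=\phi-(e^{-\lambda m}-e^{-\lambda(m+n)})$. The paper organises the two regions as separate subproblems (noting ${}_{m\shortmid}\varphi^3_n=0$ when $w<K_*$, rewriting ${}_{m\shortmid}\varphi^2_n$ in running-cost form, and citing \cite{WY2012} for the bounded-below localisation and the reverse inequality), whereas you paste a single trajectory across the interface via hitting times and verify the optimal strategy directly---minor organisational differences only.
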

       \begin{proof}
       	\rm First, we consider the case when $ 0 < w< K_{*} $. In this case, 
       	\begin{equation}
       	_{m\shortmid}\varphi^{3}_{n}(w,D)=0,  \quad _{m\shortmid}\varphi^{2,3}_{n}(w,D)=\:_{m\shortmid}\varphi^{2}_{n}(w,D). \notag
       	\end{equation}
       	\indent	Then, we rewrite the expression for $ _{m\shortmid}\varphi^{2}_{n}(w,D)$ as follows: 
       	\begin{equation}
       	_{m\shortmid}\varphi^{2}_{n}(w,D)=\sup_{\textbf{D}}\textbf{E}^{w,D}\left [ \int_{0}^{\infty}\lambda e^{-\lambda t}\mathbbm{1}_{\left \{W(t)+D(t)\geqslant f\right \}}\bigl(e^{-\lambda m}-e^{-\lambda (m+n)}\bigr)dt \right ], \notag
       	\end{equation}
       	in which $\textbf{E}^{w,D}$ denotes the conditional expectation given $W(0-)=w$, $D(0-)=D$.\\
       	\indent	Define the functional operator $ \mathscr{L}^{D}$:
       	\begin{equation}
       	\mathscr{L}^{D}\phi=rw\phi _{w}-\lambda \phi+\lambda(e^{-\lambda m}-e^{-\lambda (m+n)}) \mathbbm{1}_{\left \{ w+D\geqslant f \right \}}. \notag
       	\end{equation}
       	
       	\indent	Since we are considering a single premium, we obtain $ \mathscr{L}^{D}\phi \leqslant 0$ from the lemma hypothesis.
       	Define $ \tau _{n}=inf\left \{ s\geqslant 0:D_{s}\geqslant n \right \}$, then applying the It$\hat{o}$'s formula for $ e^{-\lambda \tau _{n}}\phi (W(\tau _{n}),D(\tau _{n}))$, we have\\
       	\begin{equation}
       	\begin{split}\label{E3}
       	e^{-\lambda \tau _{n}}\phi (W(\tau _{n}),D(\tau _{n}))&=\phi (w,D)\\&+\int_{0}^{\tau _{n}}e^{-\lambda t}(\mathscr{L}^{D}\phi-\lambda(e^{-\lambda m}-e^{-\lambda (m+n)}) \mathbbm{1}_{\left \{W(t)+D(t)\geqslant f\right \}})dt\\&+
       	\int_{0}^{\tau _{n}}e^{-\lambda t}(\phi _{D}-\: _{m\shortmid}\textrm{K}_{n}\phi _{w})dD(t).
       	\end{split} 
       	\end{equation}
       	By \cite{WY2012}, we can first assume $ \phi$ is bounded from below and after removing this assumption, the conclusion still holds. Suppose $ \phi \geqslant \phi^{*}$. 
       	By taking expectations of (\ref{E3}), we obtain
       	{\small
       	\begin{equation}
       	\begin{split}
       	\textbf{E}^{w,D}\left [ e^{-\lambda \tau _{n}} \phi ^{*}\right ] &\leqslant \textbf{E}^{w,D}\left [ e^{-\lambda \tau _{n}} \phi \right ]\\&=\phi (w,D)\\&+\textbf{E}^{w,D}\left [ \int_{0}^{\tau _{n}}e^{-\lambda t}(\mathscr{L}^{D}\phi-\lambda(e^{-\lambda m}-e^{-\lambda (m+n)}) \mathbbm{1}_{\left \{W(t)+D(t)\geqslant f\right \}})dt\right ]\\&+
       	\textbf{E}^{w,D}\left [\int_{0}^{\tau _{n}}e^{-\lambda t}(\phi _{D}- \:_{m\shortmid}\textrm{K}_{n}\phi _{w})dD(t)\right ]\\&
       	\leqslant \phi (w,D)-\textbf{E}^{w,D}\left [ \int_{0}^{\tau _{n}}\lambda(e^{-\lambda m}-e^{-\lambda (m+n)})e^{-\lambda t}\mathbbm{1}_{\left \{W(t)+D(t)\geqslant f\right \}}dt\right ].\label{E4} 
       	\end{split} 
       	\end{equation}
       }
       	Let $ \tau_{n} \rightarrow \infty$ and apply the monotonic convergence theorem to (\ref{E4})
       	\begin{equation}
       	\phi (w,D)\geqslant \:_{m\shortmid}\varphi^{2}_{n}(w,D).\notag
       	\end{equation}
       	By \cite{WY2012}, because $ \phi(w,D)$ satisfies the boundary conditions, then $\phi(w,D) =\: _{m\shortmid}\varphi^{2}_{n}(w,D) $.\\
       	\indent	For $ w \geqslant  K_{*}$, let $ \psi(w,D) =\:_{m\shortmid}\varphi^{2,3}_{n}(w,D) -(e^{-\lambda m}-e^{-\lambda (m+n)}) $ and $ w_{0}=w- K_{*} $. Then repeat the above steps, we also can obtain $\phi(w,D) = \:_{m\shortmid}\varphi^{2,3}_{n}(w,D) $.
       \end{proof}
       \begin{lemma}\label{L2}
       	The maximum probability of achieving the financial goal on $ \L$ is given by 
       	\begin{equation}
       	_{m\shortmid}\varphi^{2,3}_{n}(w,D) =\begin{cases} 
       	\displaystyle \left ( \frac{w}{ K_{*}} \right )^{\frac{ \lambda }{r}}\biggl(e^{-\lambda m}-e^{-\lambda(m+n)}\biggr), &0\leqslant w< K_{*},\\[3mm]
       	\displaystyle \biggl(e^{-\lambda m}-e^{-\lambda (m+n)}\biggr)+e^{-\lambda (m+n)}\left ( \frac{w- K_{*}}{f-D} \right )^{\frac{\lambda }{r}},& K_{*}\leqslant w < w_{*}.
       	\end{cases}  \label{E5}
       	\end{equation}
       	\indent The related optimal insurance purchasing strategy is not to purchase additional insurance until wealth reaches $w_{*}$, at which point, it is optimal to buy additional $m$-year deferred $n$-year term life insurance of $ f-D$.
       \end{lemma}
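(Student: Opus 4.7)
The plan is to apply the verification Lemma~\ref{L1}: exhibit a candidate $\phi$ of the two-piece form stated in~(\ref{E5}), check the boundary conditions $\phi(K_{*},D)=e^{-\lambda m}-e^{-\lambda(m+n)}$ and $\phi(w_{*},D)=e^{-\lambda m}$, and verify the variational inequalities~(\ref{E1}) on $\{0<w<K_{*}\}$ and~(\ref{E2}) on $\{K_{*}\leqslant w<w_{*}\}$. The economic intuition that drives the ansatz is that no additional insurance should be purchased until wealth reaches $w_{*}$, at which moment a block purchase of $f-D$ is optimal; under such a policy the active constraint in each variational inequality should be the Euler-type term $rw\phi_{w}-\lambda\phi$ (suitably shifted in the second region), while the purchase-gradient term $\phi_{D}-\: _{m\shortmid}\textrm{K}_{n}\phi_{w}$ stays slack.

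To construct the candidate, first on $0<w<K_{*}$ I solve $rw\phi_{w}-\lambda\phi=0$ with terminal value $\phi(K_{*},D)=e^{-\lambda m}-e^{-\lambda(m+n)}$; the Euler ODE yields $\phi=c(D)w^{\lambda/r}$ with $c(D)=(e^{-\lambda m}-e^{-\lambda(m+n)})/K_{*}^{\lambda/r}$, matching the first piece of~(\ref{E5}). Next, on $K_{*}\leqslant w<w_{*}$, the first argument of the max in~(\ref{E2}) rearranges into
\[
r(w-K_{*})\phi_{w}-\lambda\bigl(\phi-(e^{-\lambda m}-e^{-\lambda(m+n)})\bigr)=0,
\]
so the substitution $\psi=\phi-(e^{-\lambda m}-e^{-\lambda(m+n)})$ and $\widetilde{w}=w-K_{*}$ recovers the same Euler form in the shifted variable. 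Imposing $\phi(w_{*},D)=e^{-\lambda m}$ and using the identity $w_{*}-K_{*}=f-D$ pins down $\psi=e^{-\lambda(m+n)}\bigl((w-K_{*})/(f-D)\bigr)^{\lambda/r}$, yielding the second piece; continuity at $w=K_{*}$ is immediate, and piecewise differentiability is all that Lemma~\ref{L1} requires.

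It remains to verify that the purchasing-gradient terms are indeed slack on the interior of each region. Direct differentiation in the first region gives $\phi_{D}-\: _{m\shortmid}\textrm{K}_{n}\phi_{w}=(\lambda/r)\phi\bigl[1/(f-D)-\: _{m\shortmid}\textrm{K}_{n}/w\bigr]$, which is nonpositive precisely when $w\leqslant K_{*}=\: _{m\shortmid}\textrm{K}_{n}(f-D)$. In the second region, writing $u=w/(f-D)-\: _{m\shortmid}\textrm{K}_{n}$ shows that $\phi_{D}-(\: _{m\shortmid}\textrm{K}_{n}+1)\phi_{w}$ is proportional to $w/(f-D)-(\: _{m\shortmid}\textrm{K}_{n}+1)$, nonpositive exactly when $w\leqslant w_{*}$, with equality attained only at $w_{*}$. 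Lemma~\ref{L1} then identifies $\phi$ with $\: _{m\shortmid}\varphi^{2,3}_{n}$, and reading off which inequality binds in each region recovers the stated optimal strategy. The main obstacle will be recognizing that the non-standard correction $\lambda(e^{-\lambda m}-e^{-\lambda(m+n)})-rK_{*}\phi_{w}$ in~(\ref{E2}) telescopes into a shifted Euler equation under the substitution above; once that algebraic simplification is spotted, the rest of the verification is routine bookkeeping.
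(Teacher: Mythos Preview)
Your proposal is correct and follows essentially the same approach as the paper's own proof: both construct the candidate by solving the Euler ODE $rw\phi_{w}=\lambda\phi$ on $(0,K_{*})$, then apply the shift $\widetilde{w}=w-K_{*}$, $\psi=\phi-(e^{-\lambda m}-e^{-\lambda(m+n)})$ on $[K_{*},w_{*})$, and finally verify that the purchase-gradient terms $\phi_{D}-\:_{m\shortmid}\textrm{K}_{n}\phi_{w}$ and $\phi_{D}-(\:_{m\shortmid}\textrm{K}_{n}+1)\phi_{w}$ are nonpositive on their respective regions before invoking Lemma~\ref{L1}. Your factored forms for these gradient terms are in fact a slightly cleaner presentation of the same computation the paper carries out componentwise.
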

       \begin{proof}
       	\rm In order to help us solve the variational inequality in verification Lemma \ref{L1}, we recall the similar problems in \cite{MY2006}. Then in our situation, $_{m\shortmid}\varphi^{2,3}_{n}$ solves the following boundary-value problem on $\L$, except possibly at $w=0$.
       	\begin{equation}
       	\begin{cases}
       	rw(_{m\shortmid}\varphi^{2,3}_{n})_{w}-\lambda \:_{m\shortmid}\varphi^{2,3}_{n} =\mathbbm{1}_{\left \{ w \geqslant  K_{*} \right \}}\bigl(r K_{*}(_{m\shortmid}\varphi^{2,3}_{n})_{w}-\lambda (e^{-\lambda m}-e^{-\lambda (m+n)})\bigr),\\[2mm] _{m\shortmid}\varphi^{2,3}_{n}\bigl( K_{*},\;D\bigr)=e^{-\lambda m}-e^{-\lambda (m+n)},\\[2mm] _{m\shortmid}\varphi^{2,3}_{n} \bigl(w_{*}, \;D\bigr)=e^{-\lambda m}.
       	\end{cases}  \label{E6}
       	\end{equation}
       	
       	\indent	Our general guideline is to use verification Lemma \ref{L1} to prove Lemma \ref{L2}.\\
       	\indent	\rm We notice that $ _{m\shortmid}\varphi^{2,3}_{n} $ in (\ref{E5}) is increasing and differentiable with respect to both $w$ and $D$ on $ \L $, except possibly at $w=0$. $_{m\shortmid}\varphi^{2,3}_{n} $ solves the boundary-value problem (\ref{E6}), which means 
       	\begin{equation}\label{E7}
       	rw(_{m\shortmid}\varphi^{2,3}_{n})_{w}-\lambda \:_{m\shortmid}\varphi^{2,3}_{n} +\mathbbm{1}_{\left \{ w \geqslant K_{*} \right \}}\bigl(\lambda (e^{-\lambda m}-e^{-\lambda(m+n)})-r K_{*}(_{m\shortmid}\varphi^{2,3}_{n})_{w}\bigr)=0
       	\end{equation}
       	on $ \L$, except possibly at $w=0$. We give the method of solving $_{m\shortmid}\varphi^{2,3}_{n}$ as follows and it will be omitted that the similar method of solving the maximum probability in latter sections.\\
       	\indent	\rm Firstly, we solve the equation 
       	\begin{equation}
       	\lambda\:_{m\shortmid}\varphi^{2,3}_{n}= rw(_{m\shortmid}\varphi^{2,3}_{n})_{w} .\notag
       	\end{equation}
       	By rewriting $(_{m\shortmid}\varphi^{2,3}_{n})_{w} $ as  $ \frac{\partial\: _{m\shortmid}\varphi^{2,3}_{n} }{\partial w}$, equivalently
       	\begin{equation}
       	\frac{\partial \:_{m\shortmid}\varphi^{2,3}_{n} }{_{m\shortmid}\varphi^{2,3}_{n}}=\frac{\lambda }{r}\frac{\partial w}{w} . \notag
       	\end{equation}
       	Then we integrate both sides and obtain 
       	\begin{equation}
       	{\rm ln}\:_{m\shortmid}\varphi^{2,3}_{n}=\frac{\lambda }{r}{\rm ln}w +C(D) ,\notag
       	\end{equation}
       	in which $ C(D)$ is a function about $ D$. Thus, $ _{m\shortmid}\varphi^{2,3}_{n}=\tilde{C}(D)w^{\frac{\lambda }{r}}$, in which $ \tilde{C}(D)$ is also a function about $ D$. Substituting $_{m\shortmid}\varphi^{2,3}_{n} ( K_{*},D)=e^{-\lambda m}-e^{-\lambda (m+n)}$ into $ _{m\shortmid}\varphi^{2,3}_{n}=\tilde{C}(D)w^{\frac{\lambda }{r}}$, then 
       	\begin{equation}
       	\tilde{C}(D)=\bigl( K_{*}\bigr)^{-\frac{\lambda }{r}}(e^{-\lambda m}-e^{-\lambda (m+n)}) .\notag
       	\end{equation}
       	Therefore
       	\begin{equation}
       	_{m\shortmid}\varphi^{2,3}_{n}(w,D)=
       	\left ( \frac{w}{ K_{*}} \right )^{\frac{\lambda }{r}}(e^{-\lambda m}-e^{-\lambda (m+n)}) .\notag 
       	\end{equation}
       	\indent	Secondly, we solve the equation 
       	\begin{equation}
       	rw(_{m\shortmid}\varphi^{2,3}_{n})_{w}-\lambda \:_{m\shortmid}\varphi^{2,3}_{n}=r K_{*}(_{m\shortmid}\varphi^{2,3}_{n})_{w}-\lambda (e^{-\lambda m}-e^{-\lambda (m+n)})  .\notag
       	\end{equation}
       	Equivalently, we have
       	\begin{equation}
       	r(w- K_{*})(_{m\shortmid}\varphi^{2,3}_{n})_{w}=\lambda(_{m\shortmid}\varphi^{2,3}_{n}-(e^{-\lambda m}-e^{-\lambda (m+n)})).\notag
       	\end{equation}
       	Then we make variable substitution. Let $ y=w- K_{*}$ and $ _{m\shortmid}\phi^{2,3}_{n}=\:_{m\shortmid}\varphi^{2,3}_{n}-(e^{-\lambda m}-e^{-\lambda (m+n)})$. Then the equation changes to the type which we have proved, and repeat above proof, we can obtain 
       	\begin{equation}
       	_{m\shortmid}\varphi^{2,3}_{n}\bigl(w,D\bigr)=
       	\Bigl (e^{-\lambda m}-e^{-\lambda (m+n)}\Bigr)+e^{-\lambda (m+n)}\left ( \frac{w- K_{*}}{f-D} \right )^{\frac{\lambda }{r}}. \notag
       	\end{equation}     
       	\indent	Next, we prove that 
       	\begin{equation}
       	(_{m\shortmid}\varphi^{2,3}_{n})_{D}-\: _{m\shortmid}\textrm{K}_{n}(_{m\shortmid}\varphi^{2,3}_{n})_{w}< 0,\quad 0<w<  K_{*},\notag
       	\end{equation}
       	and
       	\begin{equation}
       	(_{m\shortmid}\varphi^{2,3}_{n})_{D}-( _{m\shortmid}\textrm{K}_{n}+1)(_{m\shortmid}\varphi^{2,3}_{n})_{w}< 0,\quad  K_{*}\leqslant w< w_{*}.\notag
       	\end{equation}
       	If $ 0< w<  K_{*} $, then
       	\begin{equation}
       	(_{m\shortmid}\varphi^{2,3}_{n})_{D}=\frac{\lambda }{r}\left ( \frac{w}{ K_{*}} \right )^{\frac{\lambda }{r}-1}(e^{-\lambda m}-e^{-\lambda(m+n)})\frac{w}{ _{m\shortmid}\textrm{K}_{n}(f-D)^{2}},\notag
       	\end{equation}
       	and
       	\begin{equation}
       	(_{m\shortmid}\varphi^{2,3}_{n})_{w}=\frac{\lambda }{r}\left ( \frac{w}{ K_{*}} \right )^{\frac{\lambda }{r}-1}(e^{-\lambda m}-e^{-\lambda(m+n)})\frac{1}{ K_{*}}.\notag
       	\end{equation}
       	Thus 
       	\begin{equation}
       	(_{m\shortmid}\varphi^{2,3}_{n})_{D}-\: _{m\shortmid}\textrm{K}_{n}(_{m\shortmid}\varphi^{2,3}_{n})_{w}=\frac{\lambda }{r \:_{m\shortmid}\textrm{K}_{n}}\left ( \frac{w}{ K_{*}} \right )^{\frac{\lambda }{r}-1}(e^{-\lambda m}-e^{-\lambda(m+n)})\frac{w- K_{*}}{(f-D)^{2}} < 0. \notag
       	\end{equation}
       	If $ K_{*}\leqslant w< w_{*} $, then analogy to the above method, we have 
       	\begin{equation}
       	(_{m\shortmid}\varphi^{2,3}_{n})_{D}-( _{m\shortmid}\textrm{K}_{n}+1)(_{m\shortmid}\varphi^{2,3}_{n})_{w}=\frac{\lambda }{r}\left ( \frac{w- K_{*}}{f-D} \right )^{\frac{\lambda }{r}-1}e^{-\lambda(m+n)}\frac{w-w_{*}}{(f-D)^{2}} < 0. \notag
       	\end{equation}
       	\indent	Therefore, it's shown that the expression for $ _{m\shortmid}\varphi^{2,3}_{n}$ in (\ref{E5}) satisfies the variational inequality(\ref{E1}) and (\ref{E2}). When $w=0$, obviously, $ _{m\shortmid}\varphi^{2,3}_{n}(w,D)=0$. The optimal insurance purchasing strategy is to buy additional $m$-year deferred $n$-year term life insurance of $f-D$ when wealth reaches $ w_{*}$.
       \end{proof}
       \textit{\textbf{Proof of Proposition \ref{P1}}}:   \textbf{Step1}. We first calculate $ _{m\shortmid}\varphi^{1}_{n}(w,D)$. \\
       \rm \noindent (\textbf{\romannumeral1}). If $ w \geqslant  K_{*}$, then we set $ t _{*}$ satisfies $ \bigl(w- K_{*}\bigr)e^{rt_{*}}=f-D$, the maximum probability of achieving the financial goal $ f$ is as follows: 
       \begin{equation}
       _{m\shortmid}\varphi^{1}_{n}(w,D)=\int_{t_{*}}^{m}\lambda e^{-\lambda t}dt=\biggl(\frac{w- K_{*}}{f-D}\biggr)^{\frac{\lambda }{r}}-e^{-\lambda m},\notag
       \end{equation}
       in which $ w$ satisfies $ w \geqslant (e^{-rm}+ \:_{m\shortmid}\textrm{K}_{n})(f-D) $ because of $ _{m\shortmid}\varphi^{1}_{n}(w,D) \geqslant 0 $. If $ w <  (e^{-r m}+ \:_{m\shortmid}\textrm{K}_{n})(f-D) $, then $_{m\shortmid}\varphi^{1}_{n} (w,D)=0 $.\\
       \rm \noindent (\textbf{\romannumeral2}). If $ w <  K_{*}$, then we set $ t _{*}$ satisfies $ we^{rt_{*}}=f-D$, the maximum probability of achieving the financial goal $ f$ is as follows: 
       \begin{equation}
       _{m\shortmid}\varphi^{1}_{n} (w,D)=\int_{t_{*}}^{m}\lambda e^{-\lambda t}dt=\biggl(\frac{w}{f-D}\biggr)^{\frac{\lambda }{r}}-e^{-\lambda m},\notag
       \end{equation}
       in which $ w$ satisfies $ w \geqslant e^{-rm}(f-D) $ because of $ _{m\shortmid}\varphi^{1}_{n}(w,D) \geqslant 0 $. If $ w <  e^{-r m}(f-D) $, then $_{m\shortmid}\varphi^{1}_{n} (w,D)=0 $.\\
       \textbf{Step2}. For $_{m\shortmid}\varphi^{2,3}_{n}$, the relevant conclusions have been given in the previous Lemma 2.2. Having said all of above, we restrict the premium $   _{m\shortmid}\textrm{K}_{n}<e^{-r m} $, then by Lemma \ref{L1} and Lemma \ref{L2}, the Proposition \ref{P1} is proved.$ \hfill\blacksquare $
       \begin{remark}
       	We now give another explanation of Lemma \ref{L2}. If the policyholder dies after $m$ years of purchasing insurance and the wealth reaches the ``\emph{quasi-ideal value}", the probability of the policyholder receiving the death benefit is $e^{-\lambda m}-e^{-\lambda(m+n)} $. If the policyholder dies after $m+n$ years of purchasing insurance, then he/she is only dependent on the remaining assets to achieve the financial goal by investing in a risk-free market with interest returns. The time when the wealth reaches the ``\emph{ideal value}", denoted by $ t_{0} $, is given by $\bigl(w- K_{*}\bigr)e^{rt_{0}}=f-D $. Thus $ t_{0}=\frac{1}{r}ln\bigl(\frac{f-D}{w- K_{*}}\bigr) $. Therefore the policyholder will achieve the financial goal if he/she dies after $ t_{0} $, and this occurs with probability $ e^{-\lambda t_{0}}$. In summary, the maximum probability of achieving the financial goal is $ e^{-\lambda m}-e^{-\lambda(m+n)}+e^{-\lambda t_{0}}e^{-\lambda(m+n)}$.
       \end{remark}
       
       If the policyholder purchases $n$-year term life insurance with a single premium, the maximum probability and optimal strategy can be obtained if $ m=0$ is assumed, apparently $ P(\delta_{d} >m)=1, a.s. $. 
       \begin{corollary}\label{C1}
       	When $ m=0,n>0$, the maximum probability of achieving the financial goal $ f$ on $ \L$ is given by 
       	\begin{equation}
       	_{0\shortmid}\varphi_{n}(w,D)=\begin{cases} 
       	\displaystyle \left ( \frac{w}{K_{*}} \right )^{\frac{ \lambda }{ r}}\bigl(1-e^{- \lambda n}\bigr),&0 \leqslant w<K_{*},\\[3mm]
       	\displaystyle \bigl(1-e^{-\lambda n}\bigr)+e^{-\lambda n}\left ( \frac{w-K_{*}}{f-D} \right )^{\frac{ \lambda }{r}},&K_{*} \leqslant w< w_{*},
       	\end{cases} \notag 
       	\end{equation}
       	in which $ K_{*}=\:_{0\shortmid}\textrm{K}_{n}(f-D) $, $ w_{*}=(_{0\shortmid}\textrm{K}_{n}+1)(f-D)$.	\\	
       	\indent The related optimal insurance purchasing strategy is not to purchase additional insurance until wealth reaches $ w_{*} $, at which point, it is optimal to buy  $n$-year term life insurance of $ f-D$.
       \end{corollary}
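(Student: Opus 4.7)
The plan is to obtain Corollary \ref{C1} as a direct specialization of Proposition \ref{P1} to $m=0$, so the bulk of the work is verifying that the piecewise formula collapses to the stated two-piece expression rather than redoing any analysis.

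First I would observe that when $m=0$ we have $P(\delta_d<m)=P(\delta_d<0)=0$, hence the first summand in the decomposition of $_{m\shortmid}\varphi_{n}$ vanishes identically: $_{0\shortmid}\varphi^{1}_{n}(w,D)\equiv 0$ on $\L$. Consequently $_{0\shortmid}\varphi_{n}(w,D)=\:_{0\shortmid}\varphi^{2,3}_{n}(w,D)$, and the target expression must come directly from Lemma \ref{L2} applied with $m=0$.

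Next I would substitute $m=0$ into the formulas of Proposition \ref{P1}. The relevant simplifications are $e^{-\lambda m}=1$, $e^{-\lambda(m+n)}=e^{-\lambda n}$, and (crucially) $e^{-rm}=1$, which gives
\begin{equation*}
w_{0}\bigr|_{m=0}=(e^{-rm}+\:_{m\shortmid}\textrm{K}_{n})(f-D)\bigr|_{m=0}=(1+\:_{0\shortmid}\textrm{K}_{n})(f-D)=w_{*}.
\end{equation*}
Therefore the third branch of Proposition \ref{P1}, corresponding to $w_{0}\leqslant w<w_{*}$, degenerates to the empty interval $[w_{*},w_{*})$ and drops out. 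The surviving two branches are, after the substitutions above,
\begin{equation*}
\left(\frac{w}{K_{*}}\right)^{\!\lambda/r}\!\bigl(1-e^{-\lambda n}\bigr)\quad\text{for } 0\leqslant w<K_{*},\qquad \bigl(1-e^{-\lambda n}\bigr)+e^{-\lambda n}\left(\frac{w-K_{*}}{f-D}\right)^{\!\lambda/r}\quad\text{for } K_{*}\leqslant w<w_{*},
\end{equation*}
with $K_{*}=\:_{0\shortmid}\textrm{K}_{n}(f-D)$ and $w_{*}=(\:_{0\shortmid}\textrm{K}_{n}+1)(f-D)$, matching the statement exactly. The optimal strategy inherited from Proposition \ref{P1} (hold off on purchasing until wealth hits $w_{*}$, then buy a death benefit of $f-D$) becomes, in the $m=0$ specialization, purchasing $n$-year term life insurance of $f-D$ at the moment wealth first reaches $w_{*}$.

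There is no real obstacle here; the only point that deserves a line of verification is the collapse $w_{0}=w_{*}$ at $m=0$, because without it one might worry that a genuine third regime persists. Once that identity is written down, the corollary follows verbatim from Proposition \ref{P1}, and no further analytic work (e.g.\ re-deriving the ODE, re-checking the variational inequality) is needed.
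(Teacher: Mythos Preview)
Your proposal is correct and matches the paper's own approach: the paper simply states that the corollary is obtained from Proposition \ref{P1} by setting $m=0$ (noting $P(\delta_d>m)=1$ a.s.), without giving further details. Your observation that $w_0|_{m=0}=w_*$ so the third branch becomes vacuous is exactly the missing verification needed to justify that one-line reduction.
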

       
       When $ m=0$ and $ n\rightarrow \infty $, our problem changes to how the policyholder reach a given bequest goal by purchasing whole life insurance, the maximum probability and life insurance purchasing strategies we give are consistent with the main results in \cite{Bayraktar2014}.
       \begin{corollary}\label{C2}
       	When $ m=0$ and $ n\rightarrow \infty $, then the maximum probability of achieving the financial goal $ f$ is given by 
       	\begin{equation}
       	_{0\shortmid}\varphi_{\infty}(w,D)=\displaystyle \left ( \frac{w}{_{0\shortmid}\textrm{K}_{\infty}(f-D)} \right )^{\frac{ \lambda }{ r}}, \quad 0\leqslant w< \:_{0\shortmid}\textrm{K}_{\infty}(f-D).\notag
       	\end{equation}
       	\indent The related optimal insurance purchasing strategy is not to purchase additional insurance until wealth reaches $_{0\shortmid}\textrm{K}_{\infty}(f-D) $, at which point, it's optimal to buy whole life insurance of $ f-D$.
       \end{corollary}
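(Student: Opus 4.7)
The plan is to derive Corollary~\ref{C2} by passing $n\to\infty$ in Corollary~\ref{C1}. First I would note that the single premium satisfies
\begin{equation}
\:_{0\shortmid}\textrm{K}_{n}=(1+\theta)\frac{\lambda}{\lambda+r}\bigl(1-e^{-(r+\lambda)n}\bigr)\longrightarrow(1+\theta)\frac{\lambda}{\lambda+r}=\:_{0\shortmid}\textrm{K}_{\infty} \notag
\end{equation}
as $n\to\infty$, so the quasi-ideal value $K_{*}=\:_{0\shortmid}\textrm{K}_{n}(f-D)$ converges to $\:_{0\shortmid}\textrm{K}_{\infty}(f-D)$ while $w_{*}-K_{*}=f-D$ stays fixed. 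Fixing any $w$ with $0\leqslant w<\:_{0\shortmid}\textrm{K}_{\infty}(f-D)$, for all sufficiently large $n$ the first branch $w<K_{*}$ of Corollary~\ref{C1} applies and
\begin{equation}
\:_{0\shortmid}\varphi_{n}(w,D)=\left(\frac{w}{\:_{0\shortmid}\textrm{K}_{n}(f-D)}\right)^{\lambda/r}\bigl(1-e^{-\lambda n}\bigr)\longrightarrow\left(\frac{w}{\:_{0\shortmid}\textrm{K}_{\infty}(f-D)}\right)^{\lambda/r}, \notag
\end{equation}
which is the announced formula.

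To justify that this pointwise limit coincides with the whole-life value function, I would combine monotonicity with a verification argument. Since every strategy admissible for $n$-year term insurance remains admissible (with no smaller success probability) when the coverage is extended to the whole life, $\:_{0\shortmid}\varphi_{n}(w,D)\leqslant\:_{0\shortmid}\varphi_{\infty}(w,D)$, which yields $\:_{0\shortmid}\varphi_{\infty}(w,D)\geqslant\lim_{n\to\infty}\:_{0\shortmid}\varphi_{n}(w,D)$. For the reverse inequality I would reuse the verification framework of Lemmas~\ref{L1}--\ref{L2}: the candidate $\phi(w,D)=\bigl(w/(\:_{0\shortmid}\textrm{K}_{\infty}(f-D))\bigr)^{\lambda/r}$ is non-decreasing, continuous and piecewise smooth on the waiting region, satisfies the reduced HJB equation $rw\phi_{w}-\lambda\phi=0$ there with boundary value $\phi(\:_{0\shortmid}\textrm{K}_{\infty}(f-D),D)=1$, and the no-purchase inequality $\phi_{D}-\:_{0\shortmid}\textrm{K}_{\infty}\phi_{w}\leqslant 0$ on $0<w<\:_{0\shortmid}\textrm{K}_{\infty}(f-D)$ is checked by direct differentiation exactly as in Lemma~\ref{L2}. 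An It\^o-plus-localisation argument identical to Lemma~\ref{L1}, applied to $e^{-\lambda t}\phi(W(t),D(t))$ with the indicator of $\{m<\delta_{d}\leqslant m+n\}$ collapsed to $\mathbbm{1}_{\{\delta_{d}>0\}}=1$, then gives $\:_{0\shortmid}\varphi_{\infty}(w,D)\leqslant\phi(w,D)$ and pins down the value.

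The optimal strategy is the limit of the one in Corollary~\ref{C1}: wait while accruing risk-free interest until wealth reaches $\:_{0\shortmid}\textrm{K}_{\infty}(f-D)$, then spend the entire balance on whole life insurance of face value $f-D$. The reserve of size $f-D$ that Corollary~\ref{C1} kept beyond $K_{*}$ to hedge against dying after the $n$-year coverage window is no longer needed, which is also why the second branch of Corollary~\ref{C1} degenerates to the trivial probability~$1$ on $\{w\geqslant\:_{0\shortmid}\textrm{K}_{\infty}(f-D)\}$ in the limit. The main obstacle I anticipate is precisely this exchange-of-limit step identifying $\lim_{n}\:_{0\shortmid}\varphi_{n}$ with $\:_{0\shortmid}\varphi_{\infty}$; the direct verification just outlined sidesteps a more delicate measure-theoretic convergence argument on the probability side and simultaneously makes the consistency with \cite{Bayraktar2014} transparent.
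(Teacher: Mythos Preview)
Your approach is essentially what the paper does---the paper states Corollary~\ref{C2} without proof, treating it as the immediate specialization $m=0$, $n\to\infty$ of Proposition~\ref{P1} (equivalently Corollary~\ref{C1}), and your limit computation together with the verification argument via the framework of Lemmas~\ref{L1}--\ref{L2} is exactly that specialization made explicit.

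One caveat: your monotonicity step ``every strategy admissible for $n$-year term insurance remains admissible \ldots\ when the coverage is extended to the whole life'' is not quite right as written. The single premium per dollar satisfies $\:_{0\shortmid}\textrm{K}_{n}<\:_{0\shortmid}\textrm{K}_{\infty}$, so the \emph{same} purchasing process $\textbf{D}$ leads to strictly lower wealth under the whole-life premium, and admissibility (non-negative wealth) can fail. The inequality $\:_{0\shortmid}\varphi_{n}\leqslant\:_{0\shortmid}\varphi_{\infty}$ is therefore not for free from this argument. Fortunately you do not actually need it: your direct verification of the candidate $\phi(w,D)=\bigl(w/(\:_{0\shortmid}\textrm{K}_{\infty}(f-D))\bigr)^{\lambda/r}$ via $rw\phi_{w}-\lambda\phi=0$, the boundary condition, and the no-purchase inequality $\phi_{D}-\:_{0\shortmid}\textrm{K}_{\infty}\phi_{w}\leqslant 0$ already pins down both inequalities at once, exactly as Lemma~\ref{L1} does in the general case. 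So simply drop the monotonicity paragraph and let the verification carry the full weight.
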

       
       In particular, when $m > 0,n\rightarrow \infty $, the policyholder purchases $m$-year deferred whole life insurance to achieve the financial goal, our viewpoint also sheds light on reaching a given bequest goal by purchasing deferred whole life insurance.
       \begin{corollary}\label{C3}
       	When $m >0, n\rightarrow \infty $, then the maximum probability of achieving the financial goal is given by 
       	\begin{equation}
       	_{m\shortmid}\varphi_{\infty}(w,D)=\begin{cases} 
       	\displaystyle \left ( \frac{w}{K_{*}} \right )^{\frac{ \lambda }{r}}e^{-\lambda m}, &0\leqslant w< K_{*},\\[6mm]
       	\displaystyle e^{-\lambda m},&K_{*}\leqslant w < w_{1},\\[2mm]
       	\displaystyle  \left ( \frac{w-K_{*}}{f-D} \right )^{\frac{\lambda }{r}},&w_{1}\leqslant w < w_{*},
       	\end{cases}  \notag
       	\end{equation}
       	in which $K_{*}=\:_{m\shortmid}\textrm{K}_{\infty}(f-D)$, $w_{1}=(e^{-rm}+\:_{m\shortmid}\textrm{K}_{\infty})(f-D)$ , $ w_{*}=(_{m\shortmid}\textrm{K}_{\infty}+1)(f-D) $ and $ _{m\shortmid}\textrm{K}_{\infty}<e^{-r m} $.\\
       	\indent The related optimal purchasing strategy is not to purchase additional insurance until wealth reaches $w_{*}$, at which point, it's optimal to buy $m$-year deferred whole life insurance of $ f-D$.
       \end{corollary}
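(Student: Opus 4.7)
The plan is to obtain Corollary \ref{C3} by passing to the limit $n\to\infty$ in Proposition \ref{P1}, since an $m$-year deferred $n$-year term policy degenerates into an $m$-year deferred whole life policy as $n\to\infty$. First I would take limits of the constants: $e^{-\lambda(m+n)}\to 0$ and $_{m\shortmid}\textrm{K}_n=(1+\theta)\tfrac{\lambda}{\lambda+r}\bigl(e^{-(r+\lambda)m}-e^{-(r+\lambda)(m+n)}\bigr)\to (1+\theta)\tfrac{\lambda}{\lambda+r}e^{-(r+\lambda)m}=\:_{m\shortmid}\textrm{K}_\infty$, so $K_*\to \:_{m\shortmid}\textrm{K}_\infty(f-D)$, $w_0\to w_1=(e^{-rm}+\:_{m\shortmid}\textrm{K}_\infty)(f-D)$, and $w_*\to (\:_{m\shortmid}\textrm{K}_\infty+1)(f-D)$. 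The premium restriction $_{m\shortmid}\textrm{K}_n<e^{-rm}$ used in Proposition \ref{P1} survives in the form $_{m\shortmid}\textrm{K}_\infty<e^{-rm}$, guaranteeing $K_*<w_1<w_*$ so the three regions are nonempty.

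Next I would pass to the limit piecewise in the formula of Proposition \ref{P1}. On $[0,K_*)$, the factor $e^{-\lambda m}-e^{-\lambda(m+n)}$ converges to $e^{-\lambda m}$; on $[K_*,w_0)\to [K_*,w_1)$, the expression $(e^{-\lambda m}-e^{-\lambda(m+n)})+e^{-\lambda(m+n)}\bigl(\tfrac{w-K_*}{f-D}\bigr)^{\lambda/r}$ collapses to the constant $e^{-\lambda m}$; and on $[w_0,w_*)\to [w_1,w_*)$, the expression $(1+e^{-\lambda(m+n)})\bigl(\tfrac{w-K_*}{f-D}\bigr)^{\lambda/r}-e^{-\lambda(m+n)}$ converges to $\bigl(\tfrac{w-K_*}{f-D}\bigr)^{\lambda/r}$. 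A straightforward check shows the three pieces glue continuously at $w=K_*$ (both equal $e^{-\lambda m}$) and at $w=w_1$ (the third piece evaluates to $\bigl(e^{-rm}\bigr)^{\lambda/r}=e^{-\lambda m}$), matching the formula in Corollary \ref{C3}.

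To justify that this pointwise limit is actually $_{m\shortmid}\varphi_\infty$, note that $_{m\shortmid}\varphi_n(w,D)$ is monotone non-decreasing in $n$: any admissible strategy for the $n$-year term problem gives weakly smaller probability than under the whole life contract, because lengthening the coverage window can only increase the indicator $\mathbbm{1}_{\{m<\delta_d\leqslant m+n\}}$. Hence $\lim_{n\to\infty}\,_{m\shortmid}\varphi_n\leqslant \,_{m\shortmid}\varphi_\infty$. For the reverse inequality I would rerun the verification machinery of Lemmas \ref{L1} and \ref{L2} with $e^{-\lambda(m+n)}$ set to $0$ throughout: the variational inequality becomes $\max(rw\phi_w-\lambda\phi+\mathbbm{1}_{\{w\geqslant K_*\}}(\lambda e^{-\lambda m}-rK_*\phi_w),\phi_D-(\:_{m\shortmid}\textrm{K}_\infty+\mathbbm{1}_{\{w\geqslant K_*\}})\phi_w)=0$ with boundary data $\phi(K_*,D)=\phi(w_*,D)=e^{-\lambda m}$ (the boundary value $e^{-\lambda m}$ at $K_*$ reflects the fact that, once the deferred whole life policy of amount $f-D$ is in force, surviving past $m$ already guarantees the goal). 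The candidate from Corollary \ref{C3} solves this problem region-by-region by the same ODE integration carried out in Lemma \ref{L2}, and then the It\^o's-formula-plus-monotone-convergence argument of Lemma \ref{L1} yields optimality.

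The main obstacle is the genuinely new flat region $[K_*,w_1)$, where the value function is the constant $e^{-\lambda m}$ (absent from Corollary \ref{C2}). On this segment $\phi_w\equiv 0$, so the first branch of the variational inequality reduces to $-\lambda e^{-\lambda m}+\lambda e^{-\lambda m}=0$, which is automatic; what must really be verified is that the no-purchase, no-growth-helps strategy is optimal, i.e.\ that the second branch $\phi_D-(\:_{m\shortmid}\textrm{K}_\infty+1)\phi_w\leqslant 0$ is a strict inequality and that letting residual wealth $w-K_*<e^{-rm}(f-D)$ accumulate at rate $r$ cannot reach $f-D$ before time $m$. This last fact uses precisely the premium restriction $_{m\shortmid}\textrm{K}_\infty<e^{-rm}$, which makes $K_*<w_1$ and keeps the flat region internally consistent with the surrounding branches.
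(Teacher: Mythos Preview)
Your approach is correct and matches the paper's: Corollary~\ref{C3} is stated there without proof, as an immediate consequence of Proposition~\ref{P1} obtained by sending $n\to\infty$ in the explicit formula, exactly the piecewise limit you carry out. Your additional step of re-running the verification Lemmas~\ref{L1}--\ref{L2} in the limit (with $e^{-\lambda(m+n)}$ set to~$0$) is more than the paper provides and is the right way to make the argument rigorous.

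One caution: your monotonicity sentence (``lengthening the coverage window can only increase the indicator, hence $_{m\shortmid}\varphi_n\leqslant\,_{m\shortmid}\varphi_\infty$'') is not airtight, because the single premium $_{m\shortmid}\textrm{K}_n$ also increases with~$n$, so an admissible strategy for the $n$-term problem need not be admissible (nor produce the same wealth path) under the $\infty$-contract. Fortunately this inequality is not needed: once you have the candidate from the pointwise limit and verify it solves the limiting variational inequality with the correct boundary data, Lemma~\ref{L1}'s It\^o/monotone-convergence argument gives both directions directly, so you can simply drop the monotonicity remark.
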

       \subsection{$m$-year deferred $n$-year term life insurance purchased by a continuously paid premium}
       Assume that the policyholder buys $m$-year deferred $n$-year term life insurance via a premium paid continuously at the rate of $_{m\shortmid}\textrm{H}_{n}$ per dollar of insurance,
       \begin{equation}
       _{m\shortmid}\textrm{H}_{n}=\frac{(1+\theta )\Ax*[m|]{\term{x}{n}}}{ \ax*{\endow{x}{m}}},\quad
       \Ax*[m|]{\term{x}{n}}=\int_{m}^{m+n}\lambda e^{-(r+\lambda )t}dt,\quad
       \Ax*{\term{x}{m}}=\int_{0}^{m}\lambda e^{-(r+\lambda )t}dt,\notag
       \end{equation}
       \begin{equation}
      \ax*{\endow{x}{m}}=\frac{1-\Ax*{\endow{x}{m}}}{r}=\frac{1-\Ax*{\term{x}{m}}-\Ax{\pureendow{x}{m}}}{r},\quad \Ax{\pureendow{x}{m}}=\int_{m}^{\infty}\lambda e^{-rn}e^{-\lambda t}dt, \notag
       \end{equation}
       in which $ \theta$ is the proportional risk loading, the explanation of the corresponding symbols see \cite{DP2011}, \cite{XS2020}.
       The proportional loading includes costs, profit, and risk margin, and reserves are established. The policyholder may change the amount of his/her insurance coverage at any time. In our time-homogeneous scenario, the policyholder in this section purchases instantaneous $m$-year deferred $n$-year term life insurance. In this case, the wealth follows the dynamics
       \begin{equation}
       \begin{cases}
       dW(t)=(rW(t)-\:_{m\shortmid}\textrm{H}_{n}D(t)\mathbbm{1}_{\left \{t \leqslant m \right \}})dt,&\ 0\leqslant t<\delta _{d},\\[2mm]
       W(\delta _{d})=W(\delta _{d}-)+D(\delta _{d}-)\mathbbm{1}_{\left \{m< \delta _{d}\leqslant m+n \right \}}.
       \end{cases}\notag
       \end{equation}
       
       An admissible insurance strategy $ \textbf{D}=\left \{ D(t) \right \}_{t\geqslant 0}$ is any non-negative process, but for all $ t \geqslant0$, the probability of $ W(t) \geqslant 0$ doesn't equal one because of the negative drift term $ _{m\shortmid}\textrm{H}_{n}D(t)$. Thus, we define $ \delta _{0}=inf\left \{ t\geqslant 0:W(t)\geqslant 0 \right \} $ and the maximum probability of achieving the financial goal $f$  
       \begin{equation}
       \begin{split}
       _{m\shortmid}\Phi_{n}(w)&=\sup_{\textbf{D}}  \biggl[\textbf{P}^{w}(W(\delta _{d} \wedge \delta _{0}) \geqslant f \mid \delta_{d} < m)P(\delta_{d} < m)\\&\quad+ \textbf{P}^{w}(W(\delta _{d} \wedge \delta _{0}) \geqslant f \mid m \leqslant \delta_{d} \leqslant m+n)P(m \leqslant \delta_{d} \leqslant m+n)\\&\quad +\textbf{P}^{w}(W(\delta _{d} \wedge \delta _{0}) \geqslant f \mid \delta_{d} > m+n)P(\delta_{d} > m+n)  \biggr]\\
       &:=\sup_{\textbf{D}}  \bigl(\textbf{P}^{w}_{1}+ \textbf{P}^{w}_{2}+\textbf{P}^{w}_{3} \bigr),
       \end{split}		\notag
       \end{equation}
       in which $\textbf{P}^{w}$ denotes the conditional probability given $W(0-)=w$. $ \textbf{P}^{w}_{1}=\textbf{P}^{w}(W(\delta _{d}\wedge \delta _{0})\geqslant f \mid \delta_{d} <m)P(\delta_{d} <m)$, and the definitions of $ \textbf{P}^{w}_{i},\:i=2,3$ are similar.
       Analogous to the previous section, we give the following notations
       \begin{equation}
       \begin{split}
       _{m\shortmid}\Phi_{n}(w):=\: _{m\shortmid}\Phi^{1}_{n}(w)+ \: _{m\shortmid}\Phi^{2}_{n}(w)+ \: _{m\shortmid}\Phi^{3}_{n}(w),
       \end{split}		\notag
       \end{equation}
       \begin{equation}
       \begin{split}
       _{m\shortmid}\Phi^{2,3}_{n}(w):=\: _{m\shortmid}\Phi^{2}_{n}(w)+ \: _{m\shortmid}\Phi^{3}_{n}(w),
       \end{split}		\notag
       \end{equation}
       in which
       \begin{equation}
       _{m\shortmid}\Phi^{i}_{n}(w)=\sup_{\textbf{D}} \textbf{P}^{w}_{i}, \; i=1,2,3.\notag
       \end{equation} 
       
       \begin{remark}
       	{\rm(1)}. We first define the ``\emph{quasi-ideal value}" and the ``\emph{ideal value}". If the wealth is equal to $\frac{_{m\shortmid}\textrm{H}_{n}f}{r+\:_{m\shortmid}\textrm{H}_{n}}:=H^{*}$ which follows from the equation $ rw=\:_{m\shortmid}\textrm{H}_{n}(f-w) $, and $ H^{*}$ is called the ``\emph{quasi-ideal value}". That's, when wealth reaches  $ H^{*}$, the policyholder purchases $m$-year deferred $n$-year term life insurance of $ f- H^{*}$ via a premium paid continuously, and if he/she survives more than $m$ years after purchasing insurance, but dies within $m+n$ years, then his/her total death benefit is $ f$. However, if the policyholder dies within $m+n$ years after purchasing insurance, then he/she cannot receive the death benefit, so in this case, the policyholder cannot achieve the financial goal $ f$. We therefore call $H^{*} $ the ``\emph{quasi-ideal value}".\\
       	{\rm(2)}. Assume the ``\emph{ideal value}" is $ w^{*} $, if wealth equals $ w^{*} $, it's optimal for the policyholder to purchase $m$-year deferred $n$-year term life insurance of $ f- H^{*} $, then whether or not he/she can receive the death benefit, he/she will achieve the financial goal $ f$. Thus, $_{m\shortmid}\Phi_{n}(w)=1 $ for $ w \geqslant w^{*}$. Derived from our setting, we obtain $ w^{*} $ by the following equation 
       	\begin{equation}
       	rw^{*}-\:_{m\shortmid}\textrm{H}_{n}\bigl(f-H^{*}\bigr)=f-w^{*} .\notag
       	\end{equation}
       	Thus, we get 
       	\begin{equation}
       	w^{*}= \frac{(r+\:_{m\shortmid}\textrm{H}_{n}+r\:_{m\shortmid}\textrm{H}_{n})f}{(r+\:_{m\shortmid}\textrm{H}_{n})(r+1)}. \notag
       	\end{equation}
       \end{remark}
       \begin{proposition}\label{P2}
       	{\rm(1)}. If $ \lambda \leqslant r $, then the maximum probability of achieving the financial goal $ f$ before ruining is given by 
       	\begin{equation}
       	_{m\shortmid}\Phi_{n}(w)=\begin{cases}
       	\displaystyle \biggl ( \frac{w}{H^{*}} \biggr )^{\frac{\lambda }{r}}\biggl(e^{-\lambda m}-e^{-\lambda (m+n)}\biggr),&0 \leqslant w< H^{*},\\[3mm]
       	\displaystyle \biggl(e^{-\lambda m}-e^{-\lambda (m+n)}\biggr)+e^{-\lambda (m+n)}\biggl ( \frac{w-H^{*}}{w^{*}-H^{*}} \biggr )^{\frac{\lambda }{r}},&H^{*} \leqslant w< w^{0},\\[3mm]
       	\displaystyle \biggl(1+e^{-\lambda (m+n)}\biggr)\biggl ( \frac{w-H^{*}}{w^{*}-H^{*}} \biggr )^{\frac{\lambda }{r}}-e^{-\lambda (m+n)},& w^{0} \leqslant w< w^{*},
       	\end{cases}	  \notag
       	\end{equation}
       	in which $ w^{0} = e^{-rm}(w^{*}-H^{*})+H^{*}$ and the initial wealth $ w \in [0,w^{*}) $.\\
       	\indent The related optimal insurance purchasing strategy is not to purchase until wealth reaches $ w^{*} $, at which point, it's optimal to buy $m$-year deferred $n$-year term life insurance of $ f-H^{*} =\frac{rf}{r+\:_{m\shortmid}\textrm{H}_{n}}$.
       	
       	{\rm(2)}. If $ \lambda > r$, then the maximum probability of achieving the financial goal $ f$ before ruining is given by 
       	\begin{equation}
       	_{m\shortmid}\Phi_{n}(w)=\begin{cases}
       	\displaystyle \biggl[1-\biggl(1-\frac{w}{H^{*}}\biggr)^{\frac{\lambda }{r+\:_{m\shortmid}\textrm{H}_{n}}}\biggr]\bigl(e^{-\lambda m}-e^{-\lambda (m+n)}\bigr),&0 \leqslant w < w^{0},\\[3mm]
       	\displaystyle \biggl ( \frac{w}{H^{*}} \biggr )^{\frac{\lambda }{r}}\biggl(e^{-\lambda m}-e^{-\lambda (m+n)}\biggr),&w^{0} \leqslant w< H^{*},\\[3mm]
       	\displaystyle \biggl(e^{-\lambda m}-e^{-\lambda (m+n)}\biggr)+e^{-\lambda (m+n)}\biggl ( \frac{w-H^{*}}{w^{*}-H^{*}} \biggr )^{\frac{\lambda }{r}},&H^{*} \leqslant w< w_{1},\\[3mm]
       	\displaystyle \biggl(1+e^{-\lambda (m+n)}\biggr)\biggl ( \frac{w-H^{*}}{w^{*}-H^{*}} \biggr )^{\frac{\lambda }{r}}-e^{-\lambda (m+n)},& w_{1} \leqslant w< w^{*},
       	\end{cases}	 \notag
       	\end{equation}
       	in which $w_{1} = e^{-rm}(w^{*}-H^{*})+H^{*}$, and the initial wealth $ w \in [0,w^{*}) $, where $ w^{0}$ is the unique zero in $ (0,H^{*})$ of the following equations 
       	\begin{equation}
       	\displaystyle	1-\biggl(1-\frac{w}{H^{*}}\biggr)^{\frac{\lambda }{r+\:_{m\shortmid}\textrm{H}_{n}}}= \Bigl ( \frac{w}{H^{*}} \Bigr )^{\frac{\lambda }{r}}. \notag
       	\end{equation}
 The related optimal purchasing strategy is:
 \begin{itemize}
       \item If wealth $ w $ is less than $ w ^{0}$, then the policyholder purchase $m$-year deferred $n$-year term life insurance of $ f-w$;
     \item  If wealth $ w$ is greater than or equal to $ w^{0}$, then the policyholder doesn't purchase insurance until the wealth reaches $ w^{*}$, at which point, it's optimal to buy $m$-year deferred $n$-year term life insurance of $ f-H^{*} =\frac{rf}{r+\:_{m\shortmid}\textrm{H}_{n}} $.
       \end{itemize}
       \end{proposition}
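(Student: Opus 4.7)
Following the template of Proposition~\ref{P1}, I would first reduce the domain to $w\in[0,w^*)$, since $w^*$ is the ``\emph{ideal value}'' and $_{m\shortmid}\Phi_n\equiv 1$ above it by construction. Next I decompose $_{m\shortmid}\Phi_n=\:_{m\shortmid}\Phi_n^1+\:_{m\shortmid}\Phi_n^{2,3}$; the first summand (corresponding to death before the insurance activates) is a deterministic computation identical in spirit to Step~1 of the proof of Proposition~\ref{P1}, with $H^*$ playing the role of $K_*$. The substantive part of the proof is the analysis of $_{m\shortmid}\Phi_n^{2,3}$.

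\textbf{Verification lemma and HJB.} I would establish an analogue of Lemma~\ref{L1}: any non-decreasing, continuous, piecewise $C^1$ function $\phi$ on $[0,w^*)$ satisfying the variational inequality
\begin{equation*}
\sup_{D\geqslant 0}\Bigl\{(rw-\:_{m\shortmid}\textrm{H}_n D)\phi_w-\lambda\phi+\lambda\bigl(e^{-\lambda m}-e^{-\lambda(m+n)}\bigr)\mathbbm{1}_{\{w+D\geqslant f\}}\Bigr\}=0
\end{equation*}
on $[0,H^*)$, together with its natural companion on $[H^*,w^*)$ and the boundary conditions $\phi(0)=0$, $\phi(H^*)=e^{-\lambda m}-e^{-\lambda(m+n)}$, $\phi(w^*)=e^{-\lambda m}$, must equal $_{m\shortmid}\Phi_n^{2,3}$. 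The proof mirrors Lemma~\ref{L1}: apply It\^o's formula to $e^{-\lambda t}\phi(W(t))$ on $[0,\tau_n\wedge\delta_0]$ with $\tau_n:=\inf\{t\geqslant 0:D(t)\geqslant n\}$, take expectations, let $\tau_n\to\infty$, and invoke monotone convergence; the additional ruin time $\delta_0$ simply enters as a stopping time in the expectations.

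\textbf{Solving the ODEs and the case split.} The supremum in $D$ is linear-plus-indicator, so the optimiser is either $D=0$ (``wait-and-invest'') or $D=f-w$ (``buy now''). The wait branch gives $rw\phi_w=\lambda\phi$ with solution $\phi\propto w^{\lambda/r}$, while the buy-now branch gives
\begin{equation*}
\bigl(rw-\:_{m\shortmid}\textrm{H}_n(f-w)\bigr)\phi_w=\lambda\phi-\lambda\bigl(e^{-\lambda m}-e^{-\lambda(m+n)}\bigr),
\end{equation*}
which after the substitution $y=1-w/H^*$ separates to yield the power $(1-w/H^*)^{\lambda/(r+\:_{m\shortmid}\textrm{H}_n)}$ appearing in case~(2). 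A parallel computation on $[H^*,w^*)$ with the shift $y=w-H^*$ produces the remaining branches and introduces the secondary threshold $w^0=e^{-rm}(w^*-H^*)+H^*$ from matching the $_{m\shortmid}\Phi_n^1$ piece at the deferral cutoff $t=m$. When $\lambda\leqslant r$, the wait-candidate has bounded derivative at the origin and dominates the buy-now candidate throughout $[0,H^*)$, so no insurance is bought below $w^*$, yielding case~(1). When $\lambda>r$, the wait-candidate has vertical tangent at $0$ and is dominated there by the buy-now candidate, so the value function is the upper envelope of the two branches, with switching point $w^0$ defined implicitly by the stated crossover equation.

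\textbf{Main obstacle.} The hardest step is the case $\lambda>r$: proving that the implicit equation for $w^0$ has a unique root in $(0,H^*)$ (by monotonicity of the difference of the two candidate branches), verifying that $C^1$ smooth-fit at $w^0$ holds automatically so the concatenated $\phi$ is admissible in the verification lemma, and checking the gradient inequality $\phi_D-(_{m\shortmid}\textrm{H}_n+1)\phi_w\leqslant 0$ on each branch — all in the presence of the ruin time $\delta_0$, which did not feature in the single-premium case treated above. Once these points are in place, concatenating with the closed-form $_{m\shortmid}\Phi_n^1$ computation delivers the stated formulas and the optimal purchase rules.
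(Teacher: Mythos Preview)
Your plan is essentially the paper's: decompose $_{m\shortmid}\Phi_n=\:_{m\shortmid}\Phi_n^{1}+\:_{m\shortmid}\Phi_n^{2,3}$, handle $_{m\shortmid}\Phi_n^{1}$ by the direct hitting-time computation (the paper's Step~1), and for $_{m\shortmid}\Phi_n^{2,3}$ prove a verification lemma (the paper's Lemma~\ref{L3}), solve the two candidate ODEs, and split on $\lambda\lessgtr r$ (the paper's Lemmas~\ref{L4} and~\ref{L5}, with the $\lambda>r$ uniqueness/inequality checks outsourced to \cite{Bayraktar2014}).

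Two small slips to flag. First, in the continuously-paid setting the state is $w$ alone: $D(t)$ is an instantaneous rate control, not a cumulative purchase, so there is no $\phi_D$ and no gradient constraint $\phi_D-(\,_{m\shortmid}\textrm{H}_n+1)\phi_w\leqslant 0$ to check; the maximisation over $D$ is resolved pointwise inside the HJB exactly as you wrote it, and what must be verified on each branch is the sign of $\lambda(e^{-\lambda m}-e^{-\lambda(m+n)})-\:_{m\shortmid}\textrm{H}_n(f-w)\phi_w$. Second, $C^1$ smooth-fit at $w^0$ is neither asserted nor needed: the verification lemma only requires $\phi$ to be continuous and \emph{piecewise} differentiable, and $w^0$ is defined purely by value-matching of the two candidate branches, so you should not expect (or try to prove) that their derivatives agree there.
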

       
       \indent To prove Proposition \ref{P2}, we first give some auxiliary lemmas.
       \begin{lemma}\label{L3}
       	Let $ F=F(w)$ be a function that is non-decreasing, continuous, and piecewise differentiable on $ [0,w^{*}) $, except that $ F $ might not be differentiable at 0. Suppose $F$ satisfies the following variational inequality on $ (0,w^{*}) $ 
       	\begin{equation}
       	\begin{split}
       	\lambda F =rwF _{w}&+\left \{ {\rm max}\bigl(\lambda (e^{-\lambda m}-e^{-\lambda (m+n)})-\:_{m\shortmid}\textrm{H}_{n}(f-w)F _{w} ,0 \bigr)\right \}\mathbbm{1}_{\left \{ w< H^{*} \right \}}\\&+\bigl(\lambda (e^{-\lambda m}-e^{-\lambda (m+n)})-rH^{*}F _{w}\bigr)\mathbbm{1}_{\left \{ w\geqslant H^{*} \right \}}, \label{E8}
       	\end{split}
       	\end{equation}
       	Additionally suppose 
       	\begin{equation}
       	F(H^{*}) =e^{-\lambda m}-e^{-\lambda (m+n)} ,\quad F(w^{*})=e^{-\lambda m}.\notag
       	\end{equation}
       	Then on $ (0,w^{*}) $, $ _{m\shortmid}\Phi^{2,3}_{n}(w)=F(w) $.
       \end{lemma}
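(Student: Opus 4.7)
The plan is to mirror the verification argument in Lemma~\ref{L1}, adapted to the continuous-premium setting in which the control $D(t)$ now enters through the drift of $W$ rather than through jumps. First, using the independence of $\delta_{d}$ from the wealth process and its exponential density, I would rewrite
$$
{}_{m\shortmid}\Phi^{2,3}_{n}(w)=\sup_{\mathbf{D}}\mathbf{E}^{w}\!\left[\int_{0}^{\delta_{0}}\lambda e^{-\lambda t}\bigl(e^{-\lambda m}-e^{-\lambda(m+n)}\bigr)\mathbbm{1}_{\{W(t)+D(t)\geqslant f\}}\,dt\right],
$$
absorbing the contribution of the event $\{\delta_{d}>m+n\}$ into the indicator structure appearing on the right-hand side of~(\ref{E8}). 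Unlike in~(\ref{E3}), no $dD(t)$ Stieltjes integral appears, since the premium is paid continuously and $D(t)$ is a pure drift control.

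Next, I would introduce the operator
$$
\mathscr{L}^{D}F = rwF_{w}-{}_{m\shortmid}\textrm{H}_{n}D\,\mathbbm{1}_{\{t\leqslant m\}}F_{w}-\lambda F+\lambda\bigl(e^{-\lambda m}-e^{-\lambda(m+n)}\bigr)\mathbbm{1}_{\{w+D\geqslant f\}},
$$
and apply It\^o's formula (really the deterministic chain rule between the jump times $\delta_{d}$ and $\delta_{0}$) to $e^{-\lambda(t\wedge \delta_{0})}F(W(t\wedge \delta_{0}))$. The variational inequality~(\ref{E8}), split at $w=H^{*}$ to resolve the indicator inside the outer maximum, yields $\mathscr{L}^{D}F\leqslant 0$ for every admissible $D$. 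Taking expectations, letting $t\to\infty$, and applying monotone convergence together with the boundary conditions $F(w^{*})=e^{-\lambda m}$ and $F(0)=0$ (the latter deduced from~(\ref{E8}) as $w\downarrow 0$ combined with the non-negativity of $F$), produces the upper bound $F(w)\geqslant {}_{m\shortmid}\Phi^{2,3}_{n}(w)$.

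For the reverse inequality I would construct the strategy suggested by the arg-max in~(\ref{E8}): below $H^{*}$ purchase nothing when $\lambda\leqslant r$, or the equalizing amount $f-w$ when $\lambda>r$ and $w$ is below the switching threshold $w^{0}$; once $w$ reaches $H^{*}$, buy the face amount $f-H^{*}$, and observe that the identity $rH^{*}={}_{m\shortmid}\textrm{H}_{n}(f-H^{*})$ freezes the wealth at $H^{*}$ throughout the deferral period. Plugging this feedback control back in turns the variational inequality into an equality along the trajectory, which gives $F(w)\leqslant {}_{m\shortmid}\Phi^{2,3}_{n}(w)$ and hence the claimed identification.

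The main obstacle is the behaviour at the lower boundary $w=0$, where $F$ may fail to be differentiable and $\delta_{0}$ can occur in finite time: this forces a careful localization by $\tau_{n}=\inf\{t:W(t)\leqslant 1/n\}$ followed by a limit argument that relies only on the piecewise-differentiability and continuity of $F$ rather than on full smoothness. A secondary subtlety is the discontinuity of the right-hand side of~(\ref{E8}) at $w=H^{*}$ and at the time-cutoff $t=m$, which is patched by splitting the It\^o integral at the (finitely many) crossing times and invoking the continuity of $F$ to align the pieces.
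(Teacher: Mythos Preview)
Your verification argument mirrors the paper's proof closely: both introduce the operator $\mathscr{L}^{D}F$, apply the chain rule to $e^{-\lambda t}F(W(t))$ up to a localizing stopping time, use $\mathscr{L}^{D}F\leqslant 0$ to obtain $F\geqslant {}_{m\shortmid}\Phi^{2,3}_{n}$, and then close with the boundary conditions. The only notable difference is that the paper splits the analysis at $w=H^{*}$ from the outset (using ${}_{m\shortmid}\Phi^{3}_{n}=0$ for $w<H^{*}$, so that your integral representation is literally ${}_{m\shortmid}\Phi^{2}_{n}$ there, and then repeating the argument after a shift for $w\geqslant H^{*}$), and it obtains the reverse inequality by citing \cite{WY2012} rather than by exhibiting the maximizing feedback strategy as you do.
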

       \begin{proof}
       	\rm	Firstly, we give the control equation about $ _{m\shortmid}\Phi^{2,3}_{n}$ as follows: 
       	\begin{equation}
       	\begin{split}
       	\lambda\: _{m\shortmid}\Phi^{2,3}_{n}& =rw(_{m\shortmid}\Phi^{2,3}_{n})_{w}+\max_{\textbf{D}}\left\{ \lambda (e^{-\lambda m}-e^{-\lambda (m+n)})\mathbbm{1}_{\left \{ w+D\geqslant f \right \}}
       	+\right.\\
       	&\phantom{=\;\;}\left.	(\:_{m\shortmid}\textrm{H}_{n}D-r(f-D))(_{m\shortmid}\Phi^{2,3}_{n})_{w}\mathbbm{1}_{\left \{ w\geqslant H^{*} \right \}}-\:_{m\shortmid}\textrm{H}_{n}D\:(_{m\shortmid}\Phi^{2,3}_{n})_{w} \right\}. \notag
       	\end{split}
       	\end{equation}
       	\noindent (\textbf{\romannumeral1}). If wealth is less than $ H^{*}$, then the control equation can be replaced with the following equivalent expression
       	\begin{equation}
       	\lambda \:_{m\shortmid}\Phi^{2,3}_{n} =rw(_{m\shortmid}\Phi^{2,3}_{n})_{w}+\max_{\textbf{D}}\left \{ \lambda (e^{-\lambda m}-e^{-\lambda (m+n)})\mathbbm{1}_{\left \{ w+D\geqslant f \right \}}-\:_{m\shortmid}\textrm{H}_{n}D\:(_{m\shortmid}\Phi^{2,3}_{n})_{w}\right \} .\notag
       	\end{equation}
       	Because of the term $-\:_{m\shortmid}\textrm{H}_{n}D\:\Phi _{w} $, we choose $ D$ to be a minimum, that is, constrained by $\mathbbm{1}_{\left \{ w+D\geqslant f \right \}} $. Equivalently, if the indicator equals 0, then the optimal insurance is $ D=0$; if the indicator equals 1, then the optimal insurance is $ D=f-w$. In this case, we have the following variational inequality 
       	\begin{equation}
       	\lambda \:_{m\shortmid}\Phi^{2,3}_{n} =rw(_{m\shortmid}\Phi^{2,3}_{n})_{w}+{\rm max}\left \{ \lambda (e^{-\lambda m}-e^{-\lambda (m+n)})-\:_{m\shortmid}\textrm{H}_{n}(f-w)(_{m\shortmid}\Phi^{2,3}_{n})_{w} ,0 \right \}.\notag
       	\end{equation}
       	\noindent (\textbf{\romannumeral2}). If wealth is greater than or equal to $ H^{*}$, then the control equation can be replaced with the following equivalent expression 
       	\begin{equation}
       	\begin{split}
       	\lambda \:_{m\shortmid}\Phi^{2,3}_{n}&=rw(_{m\shortmid}\Phi^{2,3}_{n})_{w}\\&+\max_{\textbf{D}}\left \{ \lambda (e^{-\lambda m}-e^{-\lambda (m+n)})\mathbbm{1}_{\left \{ w+D\geqslant f \right \}}-r(f-D)(_{m\shortmid}\Phi^{2,3}_{n})_{w}\right \}.\notag
       	\end{split}
       	\end{equation}
       	In this case, the term $ r(f-D)(_{m\shortmid}\Phi^{2,3}_{n})_{w}$ is non-negative, so we choose $ D$ to be a maximum, which means the optimal insurance is $ D=f- H^{*}  $, we have the following equation 
       	\begin{equation}
       	\lambda\:_{m\shortmid}\Phi^{2,3}_{n}=rw(_{m\shortmid}\Phi^{2,3}_{n})_{w}+\lambda (e^{-\lambda m}-e^{-\lambda (m+n)})-rH^{*}\:(_{m\shortmid}\Phi^{2,3}_{n})_{w} .\label{E9}
       	\end{equation}	
       	\indent	\rm Secondly, we consider the case when the wealth is less than $H^{*}$. In this case, 
       	\begin{equation}
       	_{m\shortmid}\Phi^{3}_{n}(w)=0,  \quad _{m\shortmid}\Phi^{2,3}_{n}(w)=\:_{m\shortmid}\Phi^{2}_{n}(w). \notag
       	\end{equation}
       	Rewrite the express for $ _{m\shortmid}\Phi^{2}_{n}$ as follows: \\
       	\begin{equation}
       	_{m\shortmid}\Phi^{2}_{n}(w)=\sup_{\textbf{D}}\textbf{E}^{w}\left [ \int_{0}^{\delta _{0}}\lambda e^{-\lambda t}(e^{-\lambda m}-e^{-\lambda (m+n)})\mathbbm{1}_{\left \{W(t)+D(t)\geqslant f\right \}}dt \right ] , \notag
       	\end{equation} 
       	in which $\textbf{E}^{w}$ denotes the conditional expectation given $W(0-)=w$.	\\
       	\indent Define the functional operator $ \mathscr{L}^{D}$:
       	\begin{equation}
       	\mathscr{L}^{D}F=(rw-\:_{m\shortmid}\textrm{H}_{n}D)F _{w}-\lambda F+\lambda(e^{-\lambda m}-e^{-\lambda (m+n)}) \mathbbm{1}_{\left \{ w+D\geqslant f \right \}}. \notag
       	\end{equation}
       	Since the lemma hypothesis, we can obtain $ \mathscr{L}^{D}F \leqslant 0$ .\\
   \indent Define $ \tau _{n}^{a}=inf\left \{ s\geqslant 0:D_{s}\geqslant n \right \}$, $ \tau _{n}=\tau _{n}^{a}\wedge \delta _{0}$, then applying the It$\hat{o}$'s formula for $ e^{-\lambda \tau _{n}}F(W(\tau _{n}))$, we have
       	\begin{equation}
       	e^{-\lambda \tau _{n}}F (W(\tau _{n}))=F(w)+\int_{0}^{\tau _{n}}e^{-\lambda t}(\mathscr{L}^{D}F-\lambda(e^{-\lambda m}-e^{-\lambda (m+n)}) \mathbbm{1}_{\left \{W(t)+D(t)\geqslant f\right \}})dt. \label{E10} 
       	\end{equation}
       	By \cite{WY2012}, we can first assume $ F$ is bounded from below and after removing this assumption, the conclusion still holds. Let's suppose $ F \geqslant F^{*}$. 
       	By taking expectations of (\ref{E10}), we obtain
       	\begin{equation}
       	\textbf{E}^{w}\left [ e^{-\lambda \tau _{n}} F ^{*}\right ]\leqslant F (w)-\textbf{E}^{w}\left [ \int_{0}^{\tau _{n}}\lambda(e^{-\lambda m}-e^{-\lambda (m+n)})e^{-\lambda t}\mathbbm{1}_{\left \{W(t)+D(t)\geqslant f\right \}}dt\right ].\label{E11} 
       	\end{equation}
       	Let $ \tau_{n} \rightarrow \infty$ and apply the monotonic convergence theorem to (\ref{E11})
       	\begin{equation}
       	F(w)\geqslant \:_{m\shortmid}\Phi^{2}_{n}(w).\notag
       	\end{equation}
       	Because $ F$ satisfies the boundary conditions, then $F(w) = \:_{m\shortmid}\Phi^{2}_{n}(w)$.
       	For $ w \geqslant  H^{*}$, then repeat the above steps, we also can obtain $F(w)=\: _{m\shortmid}\Phi^{2,3}_{n}(w)$ in this case.
       \end{proof}
       
       \indent Next, we compare r with $\lambda$. The above control equations yield some conclusions to be used later.\\
       \noindent (\textbf{\romannumeral1}). If wealth is less than $ H^{*}$, then at any wealth level $ w$, the policyholder chooses either to buy no insurance or to buy $m$-year deferred $n$-year term life insurance of $ f-w$. \\
       \indent Firstly, we assume that in a neighborhood of $H^{*}$, the policyholder buys $m$-year deferred $n$-year term life insurance of $ f-w$, then we need to solve the equation
       \begin{equation}
       \begin{cases}
       \lambda \:_{m\shortmid}\Phi^{2,3}_{n}=rw(_{m\shortmid}\Phi^{2,3}_{n})_{w}+\lambda (e^{-\lambda m}-e^{-\lambda (m+n)})-\:_{m\shortmid}\textrm{H}_{n}(f-w)(_{m\shortmid}\Phi^{2,3}_{n})_{w},\\[2mm]
       _{m\shortmid}\Phi^{2,3}_{n}\bigl(H^{*}\bigr) =e^{-\lambda m}-e^{-\lambda (m+n)}.
       \end{cases}\notag
       \end{equation}
       We denote the solution of this equation by $ _{m\shortmid}\Phi^{2,3}_{a,n}$, then $_{m\shortmid}\Phi^{2,3}_{a,n}$ is given as follows: 
       \begin{equation}
       _{m\shortmid}\Phi^{2,3}_{a,n}(w)=\biggl[1-k\biggl(1-\frac{w}{H^{*}}\biggr)^{\frac{\lambda }{\lambda +\:_{m\shortmid}\textrm{H}_{n}}}\biggr]\biggl(e^{-\lambda m}-e^{-\lambda (m+n)}\biggr) ,\notag
       \end{equation} 
       in which $k$ is some constant, $k>0$. \\
       \indent Secondly, we suppose that in a neighborhood of $H^{*} $, the policyholder buys no insurance, then we need to solve the equation 
       \begin{equation}
       \begin{cases}
       \lambda\: _{m\shortmid}\Phi^{2,3}_{n} =rw(_{m\shortmid}\Phi^{2,3}_{n})_{w},\\
       _{m\shortmid}\Phi^{2,3}_{n}\bigl(H^{*} \bigr) =e^{-\lambda m}-e^{-\lambda (m+n)}.
       \end{cases}\notag
       \end{equation}
       We denote the solution of this equation by $_{m\shortmid}\Phi^{2,3}_{b,n} $, then $ _{m\shortmid}\Phi^{2,3}_{b,n}$ is given as follows: 
       \begin{equation}
       _{m\shortmid}\Phi^{2,3}_{b,n}(w)=\biggl(e^{-\lambda m}-e^{-\lambda (m+n)}\biggr)\biggl(\frac{w}{H^{*}}\biggr)^{\frac{\lambda }{r}}. \notag
       \end{equation}
       \noindent (\textbf{\romannumeral2}). If wealth is greater than or equal to $H^{*} $, then the optimal insurance is that the policyholder buys $m$-year deferred $n$-year term life insurance of $ f-H^{*} $, we just need to solve the equation(\ref{E9}).\\
       \noindent (\textbf{\romannumeral3}). By the above analyses, we need to determine which of $ _{m\shortmid}\Phi^{2,3}_{a,n} $ and $ _{m\shortmid}\Phi^{2,3}_{b,n} $ is greater for $ w$ near $ H^{*} $, spontaneously, there are two cases to discuss: $ \lambda \leqslant r$ and $ \lambda > r$.
       \begin{itemize}
\item If $\lambda \leqslant r $, that means the policyholder has enough time to reach the ``\emph{ideal value}", then $_{m\shortmid}\Phi^{2,3}_{a,n}  \leqslant \:_{m\shortmid}\Phi^{2,3}_{b,n} $ for all $ 0 \leqslant w \leqslant w^{*}  $. If wealth is less than $ H^{*} $, the maximum probability of achieving the financial goal is equal to $ \bigl( \frac{w}{H^{*} } \bigr)^{\frac{\lambda }{r}}$ multiply $P( m<\delta _{d}\leqslant m+n ) $. The method for solving $  \bigl( \frac{w}{H^{*} } \bigr)^{\frac{\lambda }{r}}$ as in \cite{Bayraktar2014}. If wealth is greater than or equal to $ H^{*}  $, the policyholder purchases $m$-year deferred $n$-year term life insurance via a premium paid continuously, and if  the policyholder survives more than $m$ years after purchasing insurance, but dies within $m+n$ years, then he/she can get the death benefit and achieve the financial goal. But in another case where the policyholder cannot receive the death benefit, we need to find time $ t_{1}$ which satisfies$ (w-H^{*} )e^{rt_{1}}=w^{*}-H^{*} $, and the probability of reaching the "\emph{ideal value}" before dying is equal to $ e^{-\lambda t_{1}} $, so, in this case, the probability of achieving the financial goal is $ e^{-\lambda t_{1}}e^{-\lambda (m+n)} $;
\item  If $ \lambda > r$, then there's a wealth level $ w^{0}$ such that $ \displaystyle _{m\shortmid}\Phi^{2,3}_{n}  $ includes $ \displaystyle _{m\shortmid}\Phi^{2,3}_{a,n}  $ and $  _{m\shortmid}\Phi^{2,3}_{b,n}  $, and if the initial wealth $ w  $ is small enough (of course, $ w <H^{*}  $), we set $ w<w^{0}$. In this situation, the policyholder has no enough time to reach the "\emph{ideal value}". Therefore, the optimal purchasing strategy is to purchase $m$-year deferred $n$-year term life insurance of $ f-w$. The wealth at time $ t$ satisfies the following equation 
       \begin{equation}
       H^{*} -W(t)=\bigl(H^{*} -w\bigr)e^{(r+\:_{m\shortmid}\textrm{H}_{n})t} .\notag
       \end{equation}
       Let $ t_{2}$ satisfy that $W(t_{2})=0 $, then the probability that the policyholder achieves the financial goal equals $ (1-e^{-\lambda t_{2}})(e^{-\lambda m}-e^{-\lambda (m+n)})$. 
       If the initial wealth $ w \geqslant w^{0}$, then the results as same as $ \lambda \leqslant r$. 
    \end{itemize}

      \indent For more details, see Lemma \ref{L4} and Lemma \ref{L5}.
       \begin{lemma}\label{L4}
       	If $ \lambda \leqslant r $, then the maximum probability of achieving the financial goal $ f$ before ruining is given by 
       	\begin{equation}
       	_{m\shortmid}\Phi^{2,3}_{n}(w)=\begin{cases}
       	\displaystyle \biggl ( \frac{w}{H^{*}} \biggr )^{\frac{\lambda }{r}}\bigl(e^{-\lambda m}-e^{-\lambda (m+n)}\bigr),&0 \leqslant w< H^{*},\\[3mm]
       	\displaystyle \bigl(e^{-\lambda m}-e^{-\lambda (m+n)}\bigr)+e^{-\lambda (m+n)}\biggl ( \frac{w-H^{*}}{w^{*}-H^{*}} \biggr )^{\frac{\lambda }{r}},&H^{*} \leqslant w< w^{*},
       	\end{cases}	  \label{E12}
       	\end{equation}
       	for initial wealth $ w \in [0,w^{*}) $.\\
       	\indent The related optimal insurance purchasing strategy is not to purchase until wealth reaches $ w^{*} $, at which point, it's optimal to buy $m$-year deferred $n$-year term life insurance of $ f-H^{*} =\frac{rf}{r+\:_{m\shortmid}\textrm{H}_{n}}$.
       \end{lemma}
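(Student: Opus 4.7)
The plan is to invoke the verification Lemma \ref{L3} with the candidate function $F$ defined by the right-hand side of (\ref{E12}). All regularity hypotheses are immediate: $F$ is built from two power functions in $w$, hence non-decreasing, continuous, and piecewise differentiable on $[0, w^*)$; the two pieces agree at $w = H^*$ with common value $e^{-\lambda m} - e^{-\lambda(m+n)}$; and $F(w^*) = e^{-\lambda m}$ by direct substitution.

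To verify the variational inequality (\ref{E8}) I would treat the two regions separately. On $[H^*, w^*)$, writing $F$ in the form $(e^{-\lambda m} - e^{-\lambda(m+n)}) + e^{-\lambda(m+n)}((w - H^*)/(w^* - H^*))^{\lambda/r}$ and differentiating, the ODE $\lambda F = rwF_w + \lambda(e^{-\lambda m} - e^{-\lambda(m+n)}) - rH^* F_w$ drops out after routine rearrangement; this is essentially the same computation carried out in the proof of Lemma \ref{L2} under the shift $w \mapsto w - H^*$. On $(0, H^*)$, $F$ is a pure power of $w$, so $\lambda F = rwF_w$ is immediate, which supplies the ``zero branch'' of the $\max$ in (\ref{E8}).

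The main obstacle is to show that the other branch of the $\max$ on $(0, H^*)$ is non-positive, i.e., that buying no insurance genuinely dominates buying $f - w$ worth of coverage. Concretely, one needs $\lambda(e^{-\lambda m} - e^{-\lambda(m+n)}) - {}_{m\shortmid}\textrm{H}_n(f - w) F_w \leq 0$ on $(0, H^*)$. Substituting the explicit expression for $F_w$ and using the defining identity $rH^* = {}_{m\shortmid}\textrm{H}_n(f - H^*)$, this reduces to the pointwise inequality $g(w) := (f - w)(w/H^*)^{\lambda/r - 1} \geq f - H^*$ on $(0, H^*)$. Since $g(H^*) = f - H^*$, it suffices to show that $g$ is non-increasing; a short differentiation shows that $g'(w)$ carries the sign of $-w\lambda/r + f(\lambda/r - 1)$, and both summands are non-positive precisely when $\lambda \leq r$. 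This is the one and only point at which the standing hypothesis $\lambda \leq r$ --- the ``enough time to reach the ideal value'' condition --- actually enters. With all hypotheses of Lemma \ref{L3} confirmed, we conclude $F = {}_{m\shortmid}\Phi^{2,3}_n$ on $[0, w^*)$, and the maximiser in (\ref{E8}) at each $w$ identifies the optimal action: abstain from purchasing additional insurance until wealth reaches $w^*$, and then purchase face value $f - H^* = rf/(r + {}_{m\shortmid}\textrm{H}_n)$.
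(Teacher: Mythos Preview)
Your proposal is correct and follows essentially the same route as the paper: invoke the verification Lemma~\ref{L3}, check the regularity and boundary conditions, verify that on $[H^{*},w^{*})$ the candidate solves the linear ODE branch of (\ref{E8}), and on $(0,H^{*})$ solves $\lambda F = rwF_{w}$ while the competing branch is non-positive. The one place where your write-up diverges from the paper is in the treatment of the key inequality $\lambda(e^{-\lambda m}-e^{-\lambda(m+n)}) - {}_{m\shortmid}\textrm{H}_{n}(f-w)F_{w}\le 0$ on $(0,H^{*})$. The paper substitutes $y=w/H^{*}$, reduces to $1-\tfrac{r+{}_{m\shortmid}\textrm{H}_{n}}{r}y^{\lambda/r-1}+\tfrac{{}_{m\shortmid}\textrm{H}_{n}}{r}y^{\lambda/r}\le 0$ for $y\in(0,1)$, and then cites \cite{Bayraktar2014} for this last step. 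Your reduction to $g(w):=(f-w)(w/H^{*})^{\lambda/r-1}\ge f-H^{*}$ is algebraically equivalent (use $f-H^{*}=rH^{*}/{}_{m\shortmid}\textrm{H}_{n}$ and $f=(r+{}_{m\shortmid}\textrm{H}_{n})H^{*}/{}_{m\shortmid}\textrm{H}_{n}$ to pass between the two forms), but you close it self-containedly via the monotonicity of $g$, which is a slight improvement in that it makes transparent exactly where $\lambda\le r$ is used and avoids the external reference.
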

       \begin{proof}
       	\rm Our general guideline is to use verification Lemma \ref{L3} to prove this lemma.\\
       	\indent Firstly, we notice that $_{m\shortmid}\Phi^{2,3}_{n}$ in (\ref{E12}) is continuous and increasing on $ [0,w^{*})$, and is piecewise differentiable on $ (0,w^{*})$, obviously, we can verify that $ _{m\shortmid}\Phi^{2,3}_{n}$ satisfies the variational inequality (\ref{E8}) when $ H^{*} \leqslant w< w^{*} $.\\
       	\indent Then, we prove that the inequality $\lambda (e^{-\lambda m}-e^{-\lambda (m+n)})-\:_{m\shortmid}\textrm{H}_{n}(f-w)(_{m\shortmid}\Phi^{2,3}_{n})_{w} \leqslant 0 $ holds on $ (0,H^{*}) $.
       	By calculation we have  
       	\begin{equation}
       	\begin{split}
       	\lambda (e^{-\lambda m}-e^{-\lambda (m+n)})&-\:_{m\shortmid}\textrm{H}_{n}(f-w)(_{m\shortmid}\Phi^{2,3}_{n} )_{w}=\lambda (e^{-\lambda m}-e^{-\lambda (m+n)})\\&-\frac{\lambda (r+\:_{m\shortmid}\textrm{H}_{n})(f-w)\:_{m\shortmid}\textrm{H}_{n}}{r\:_{m\shortmid}\textrm{H}_{n}f}(e^{-\lambda m}-e^{-\lambda (m+n)})\biggl ( \frac{w}{H^{*}} \biggr )^{\frac{\lambda }{r}-1} ,\notag
       	\end{split}
       	\end{equation}
       	thus, $\lambda (e^{-\lambda m}-e^{-\lambda (m+n)})-\:_{m\shortmid}\textrm{H}_{n}(f-w)(_{m\shortmid}\Phi^{2,3}_{n})_{w} \leqslant 0 $ is equivalent to 
       	\begin{equation}
       	1-\frac{r+\:_{m\shortmid}\textrm{H}_{n}}{r}\left ( \frac{w}{H^{*}} \right )^{\frac{\lambda }{r}-1}+\frac{_{m\shortmid}\textrm{H}_{n}}{r}\left ( \frac{w}{H^{*}} \right )^{\frac{\lambda }{r}}\leqslant 0 . \notag
       	\end{equation}
       	Let $ y=\frac{w}{H^{*}} $, obviously $ y \in (0,1)$. The above inequality is equivalent to 
       	\begin{equation}
       	1-\frac{r+\:_{m\shortmid}\textrm{H}_{n}}{r}y^{\frac{\lambda }{r}-1}+\frac{_{m\shortmid}\textrm{H}_{n}}{r}y^{\frac{\lambda }{r}}\leqslant 0.\notag
       	\end{equation}
       	Referring to \cite{Bayraktar2014}, this inequality holds on $y \in (0,1) $ obviously.\\
       	\indent Therefore, we have proved that $_{m\shortmid}\Phi^{2,3}_{n} $ in (\ref{E12}) satisfies the variational inequality (\ref{E8}). When $w=0$, obviously, $ _{m\shortmid}\Phi^{2,3}_{n}(w,D)=0$. The optimal insurance strategy is not to purchase until wealth reaches $ w^{*} $, at which time, it's optimal to buy $m$-year deferred $n$-year term life insurance of $ f-H^{*} =\frac{rf}{r+\:_{m\shortmid}\textrm{H}_{n}}$. 
       \end{proof}
       
       \begin{lemma}\label{L5}
       	If $ \lambda > r$, then the maximum probability of achieving the financial goal $ f$ before ruining is given by 
       	\begin{equation}
       	_{m\shortmid}\Phi^{2,3}_{n}(w)=\begin{cases}
       	\displaystyle \biggl[1-\biggl(1-\frac{w}{H^{*}}\biggr)^{\frac{\lambda }{r+\:_{m\shortmid}\textrm{H}_{n}}}\biggr]\biggl(e^{-\lambda m}-e^{-\lambda (m+n)}\biggr),&0 \leqslant w < w^{0},\\[3mm]
       	\displaystyle \biggl ( \frac{w}{H^{*}} \biggr )^{\frac{\lambda }{r}}\biggl(e^{-\lambda m}-e^{-\lambda (m+n)}\biggr),&w^{0} \leqslant w< H^{*},\\[3mm]
       	\displaystyle \biggl(e^{-\lambda m}-e^{-\lambda (m+n)}\biggr)+e^{-\lambda (m+n)}\Biggl ( \frac{w-H^{*}}{w^{*}-H^{*}} \Biggr )^{\frac{\lambda }{r}},&H^{*} \leqslant w< w^{*},
       	\end{cases}	 \label{E13}
       	\end{equation}
       	for initial wealth $ w \in [0,w^{*}) $, where $ w^{0}$ is the unique zero in $ (0,H^{*})$ of the following equations 
       	\begin{equation}
       	1-\biggl(1-\frac{w}{H^{*}}\biggr)^{\frac{\lambda }{r+\:_{m\shortmid}\textrm{H}_{n}}}=\biggl ( \frac{w}{H^{*}} \biggr )^{\frac{\lambda }{r}}. \label{E14}
       	\end{equation}
The associated optimal purchasing strategy is:
       	\begin{itemize}
       	\item If wealth $ w $ is less than $ w ^{0}$, then the policyholder purchases $m$-year deferred $n$-year term life insurance of $ f-w$;
       	\item If wealth $ w$ is greater than or equal to $ w^{0}$, then the policyholder doesn't purchase insurance until the wealth reaches $ w^{*}$, at which point, it's optimal to buy $m$-year deferred $n$-year term life insurance of $ f-H^{*} =\frac{rf}{r+\:_{m\shortmid}\textrm{H}_{n}} $.
       \end{itemize}
       \end{lemma}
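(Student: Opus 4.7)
The plan is to apply the verification Lemma \ref{L3} to the candidate $F$ defined piecewise in (\ref{E13}), adapting the strategy of Lemma \ref{L4} to accommodate the additional regime change at the cross-over point $w^0$ forced by $\lambda>r$ (``the policyholder has no enough time to reach the \emph{ideal value}'').

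I would first establish existence and uniqueness of $w^0\in(0,H^*)$. Setting $\Delta(w):=[1-(1-w/H^*)^{\lambda/(r+\:_{m\shortmid}\textrm{H}_{n})}]-(w/H^*)^{\lambda/r}$, equation (\ref{E14}) amounts to $\Delta(w)=0$. One checks $\Delta(0)=\Delta(H^*)=0$. Since $\lambda/r>1$, the second summand is superlinear near $0$ while the first has a positive linear term there, giving $\Delta(w)>0$ for small $w>0$. A direct monotonicity analysis of $\Delta'$, together with the one-sided derivative values at the endpoints, then shows $\Delta$ crosses zero exactly once in $(0,H^*)$, yielding the unique $w^0$.

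Next, I would verify the hypotheses of Lemma \ref{L3}. Continuity of $F$ at $w^0$ is built into (\ref{E14}), and continuity at $H^*$ holds because both expressions reduce to $e^{-\lambda m}-e^{-\lambda(m+n)}$; the boundary value $F(w^*)=(e^{-\lambda m}-e^{-\lambda(m+n)})+e^{-\lambda(m+n)}=e^{-\lambda m}$ is immediate, and monotonicity on each piece is evident. The main task is then the variational inequality (\ref{E8}). On $[H^*,w^*)$ the verification is identical to that in Lemma \ref{L4}. On $(w^0,H^*)$, $F$ coincides with $_{m\shortmid}\Phi^{2,3}_{b,n}$ and solves the ``no-purchase'' ODE $\lambda F=rwF_{w}$; the remaining inequality $\lambda(e^{-\lambda m}-e^{-\lambda(m+n)})-\:_{m\shortmid}\textrm{H}_{n}(f-w)F_{w}\le 0$, after substitution with $y=w/H^*\in(0,1)$, is exactly the inequality proven in Lemma \ref{L4}. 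On $(0,w^0)$, $F$ coincides with $_{m\shortmid}\Phi^{2,3}_{a,n}$ and solves the ``purchase'' ODE; via the ODE the required sign condition $\lambda(e^{-\lambda m}-e^{-\lambda(m+n)})-\:_{m\shortmid}\textrm{H}_{n}(f-w)F_{w}\ge 0$ rearranges to $\lambda F-rwF_{w}\ge 0$, which is verified directly from the explicit form of $_{m\shortmid}\Phi^{2,3}_{a,n}$ using the defining property of $w^0$.

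The main obstacle will be the sign-reversal analysis that pins the switching point at exactly $w^0$: above $w^0$, $_{m\shortmid}\Phi^{2,3}_{b,n}$ dominates $_{m\shortmid}\Phi^{2,3}_{a,n}$, so the argmax in (\ref{E8}) is $D=0$, while below $w^0$ the ordering is reversed and the argmax is $D=f-w$. The hypothesis $\lambda>r$ is essential here: if $\lambda\le r$, the two branches never cross in $(0,H^*)$, $_{m\shortmid}\Phi^{2,3}_{b,n}$ dominates throughout, and Lemma \ref{L4} is recovered. Once the verification is complete, Lemma \ref{L3} identifies $_{m\shortmid}\Phi^{2,3}_{n}(w)=F(w)$ on $[0,w^*)$, and the optimal purchasing strategy stated in the lemma is read off as the argmax in (\ref{E8}) on each subinterval.
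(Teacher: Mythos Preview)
Your approach is essentially the paper's own: apply the verification Lemma \ref{L3} to the candidate (\ref{E13}), check continuity/boundary data, and verify the variational inequality (\ref{E8}) piecewise, with the crux being the existence/uniqueness of $w^0$ and the sign-reversal of $\lambda(e^{-\lambda m}-e^{-\lambda(m+n)})-\:_{m\shortmid}\textrm{H}_{n}(f-w)F_w$ at $w^0$. The paper simply packages your items (a), (b), (c) as a citation to \cite{Bayraktar2014} (their Lemma~3.4 and Proposition~3.5), whereas you sketch the mechanism directly; so if anything your write-up is more explicit than the paper's.

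One genuine slip to fix: on $(w^0,H^*)$ you say the required inequality $\lambda(e^{-\lambda m}-e^{-\lambda(m+n)})-\:_{m\shortmid}\textrm{H}_{n}(f-w)F_w\le 0$ ``is exactly the inequality proven in Lemma~\ref{L4}''. The algebraic form after $y=w/H^*$ is indeed the same, namely $g(y):=1-\frac{r+\:_{m\shortmid}\textrm{H}_{n}}{r}y^{\lambda/r-1}+\frac{_{m\shortmid}\textrm{H}_{n}}{r}y^{\lambda/r}\le 0$, but the proof given there (via \cite{Bayraktar2014}) used $\lambda\le r$ and established $g\le 0$ on all of $(0,1)$. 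Under $\lambda>r$ one has $g(0^+)=1>0$, so $g$ is \emph{not} nonpositive on $(0,1)$; it changes sign. What you actually need is $g(y)\le 0$ only for $y\in[w^0/H^*,1)$, and that is precisely part of the ``sign-reversal analysis'' you correctly flag as the main obstacle, not a corollary of Lemma~\ref{L4}. Once you reroute that step into your $w^0$-analysis (or cite \cite{Bayraktar2014} as the paper does), the argument is complete and matches the paper's.
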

       \begin{proof}
       	\rm As similar to our previous proof, our general guideline is to use verification Lemma \ref{L3} to prove this proposition.
       	Firstly, we notice that $ _{m\shortmid}\Phi^{2,3}_{n}$ in (\ref{E13}) is continuous and increasing on $ [0,w^{*}]$ and is piecewise differentiable on $(0,w^{*})$, obviously, we can verify that $_{m\shortmid}\Phi^{2,3}_{n}$ satisfies the variational inequality (\ref{E8}) when $ H^{*}\leqslant w<w^{*} $.
       	Next, we refer to the ``Lemma 3.4 and Proposition 3.5" in \cite{Bayraktar2014} to prove the following three things:\\
       	(a). The equation (\ref{E14}) has a unique zero on $(0,H^{*}) $ that is $ w^{0}$;\\
       	(b). On $ [0,w^{0}) $, $ \lambda\:_{m\shortmid}\Phi^{2,3}_{n}=rw(_{m\shortmid}\Phi^{2,3}_{n})_{w}+\lambda (e^{-\lambda m}-e^{-\lambda (m+n)})-\:_{m\shortmid}\textrm{H}_{n}(f-w)(_{m\shortmid}\Phi^{2,3}_{n})_{w} $ and the inequality $ \lambda (e^{-\lambda m}-e^{-\lambda (m+n)})-\:_{m\shortmid}\textrm{H}_{n}(f-w)(_{m\shortmid}\Phi^{2,3}_{n})_{w} \geqslant 0 $ holds, then the optimal purchasing strategy is to buy $m$-year deferred $n$-year term life insurance of $ f-w $;\\
       	(c). On $  [w^{0},H^{*}) $ , $ \lambda\: _{m\shortmid}\Phi^{2,3}_{n}=rw(_{m\shortmid}\Phi^{2,3}_{n})_{w}$ and the inequality $ \lambda (e^{-\lambda m}-e^{-\lambda (m+n)})-\:_{m\shortmid}\textrm{H}_{n}(f-w)(_{m\shortmid}\Phi^{2,3}_{n})_{w}\\ \leqslant 0 $  holds, then the optimal strategy is not to buy insurance until the wealth reaches $ w^{*} $.\\
       	The details of above three things are the straightforward application of \cite{Bayraktar2014}.
       	Therefore, $ _{m\shortmid}\Phi^{2,3}_{n}$ in (\ref{E13}) satisfies the variational inequality (\ref{E8}). When $w=0$, obviously, $ _{m\shortmid}\Phi^{2,3}_{n}(w,D)=0$, and our conclusions hold.
       \end{proof}       
       \textit{\textbf{Proof of Proposition \ref{P2}}}:
       \rm \textbf{Step1}. We first calculate $ _{m\shortmid}\Phi^{1}_{n} $.\\
       \noindent (\textbf{\romannumeral1}). If $ \lambda \leqslant r$, $ w \geqslant H^{*}$, then we set $ t _{*}$ satisfies $ \bigl(w-H^{*}\bigr)e^{rt_{*}}=w^{*}-H^{*}$, the maximum probability of achieving the financial goal $ f$ is as follows: 
       \begin{equation}
       _{m\shortmid}\Phi^{1}_{n}(w)=\int_{t_{*}}^{m}\lambda e^{-\lambda t}dt=\biggl(\frac{w-H^{*}}{w^{*}-H^{*}}\biggr)^{\frac{\lambda }{r}}-e^{-\lambda m},\notag
       \end{equation}
       in which $ w$ satisfies $ w \geqslant e^{-rm}(w^{*}-H^{*})+H^{*} $ because of $_{m\shortmid}\Phi^{1}_{n}(w) \geqslant 0 $. If $ w <  e^{-rm}(w^{*}-H^{*})+H^{*} $, then $_{m\shortmid}\Phi^{1}_{n}(w)=0 $.\\
       \noindent (\textbf{\romannumeral2}). If $ \lambda \leqslant r$, $ w < H^{*}$, then we set $ t _{*}$ satisfies $ we^{rt_{*}}=w^{*}$, the maximum probability of achieving the financial goal $ f$ is as follows: 
       \begin{equation}
       _{m\shortmid}\Phi^{1}_{n}(w)=\int_{t_{*}}^{m}\lambda e^{-\lambda t}dt=\biggl(\frac{w}{w^{*}}\biggr)^{\frac{\lambda }{r}}-e^{-\lambda m},\notag
       \end{equation}
       in which $ w$ satisfies $ w \geqslant e^{-rm}w^{*} $ because of $ _{m\shortmid}\Phi^{1}_{n}(w) \geqslant 0 $. If $ w <  e^{-r m}w^{*} $, then $_{m\shortmid}\Phi^{1}_{n} (w)=0 $, when $ w \geqslant $\\
       \noindent (\textbf{\romannumeral3}). If $ \lambda >r$, $ w < H^{*}$, in this case, obviously, $ _{m\shortmid}\Phi^{1}_{n}(w)=0$. \\
       \noindent (\textbf{\romannumeral4}). If $ \lambda >r$, $ w \geqslant H^{*}$, the results are same as (\textbf{\romannumeral1}). \\
       \textbf{Step2}. For $_{m\shortmid}\Phi^{2,3}_{n}$, the relevant conclusions have been given in the previous Lemma \ref{L3}, Lemma \ref{L4} and Lemma \ref{L5}, as same as the previous section, we generally assume $e^{-rm} > \frac{_{m\shortmid}\textrm{H}_{n}+r\:_{m\shortmid}\textrm{H}_{n}}{r+\:_{m\shortmid}\textrm{H}_{n}+r\:_{m\shortmid}\textrm{H}_{n}}$, then the Proposition \ref{P2} is proved.$ \hfill\blacksquare $\\
       \indent If the policyholder purchases $n$-year term life insurance through a continuously paid premium, the maximum probability and the optimal strategies can be obtained by taking $ m=0$, apparently $ P(\delta_{d} >m)=1, a.s. $. We have the following Corollary \ref{C4}.
       \begin{corollary}\label{C4}
       	{\rm(1)}. When $ m=0,n>0$ and if $ \lambda \leqslant r $, then the maximum probability of achieving the financial goal $ f$ before ruining is given by 
       	\begin{equation}
       	_{0\shortmid}\Phi_{n}(w)=\begin{cases}
       	\displaystyle \biggl ( \frac{w}{H^{*}} \biggr )^{\frac{\lambda }{r}}\biggl(1-e^{-\lambda n}\biggr),&0 \leqslant w< H^{*},\\[3mm]
       	\displaystyle \biggl(1-e^{-\lambda n}\biggr)+e^{-\lambda n}\biggl ( \frac{w-H^{*}}{w^{*}-H^{*}} \biggr )^{\frac{\lambda }{r}},&H^{*} \leqslant w< w^{*},
       	\end{cases}	\notag
       	\end{equation}
       	for initial wealth $ w \in [0,w^{*}) $, in which $H^{*}=\frac{_{0\shortmid}\textrm{H}_{n}f}{r+\:_{0\shortmid}\textrm{H}_{n}} $, $w^{*}=\frac{(r+\:_{0\shortmid}\textrm{H}_{n}+r\:_{0\shortmid}\textrm{H}_{n})f}{(r+\:_{0\shortmid}\textrm{H}_{n})(r+1)}$.\\
       	\indent The related optimal insurance purchasing strategy is not to purchase until wealth reaches $ w^{*} $, at which point, it's optimal to buy $n$-year term life insurance of $ f-H^{*} =\frac{rf}{r+\:_{0\shortmid}\textrm{H}_{n}}$.\\
       	{\rm(2).} When $ m=0, n>0$ and if $ \lambda > r$, then the maximum probability of achieving the financial goal $ f$ before ruining is given by 
       	\begin{equation}
       	_{0\shortmid}\Phi_{n}(w)=\begin{cases}
       	\displaystyle \biggl[1-\biggl(1-\frac{w}{H^{*}}\biggr)^{\frac{\lambda }{r+\:_{0\shortmid}\textrm{H}_{n}}}\biggr]\biggl(1-e^{-\lambda n}\biggr),&0 \leqslant w < w^{0},\\[3mm]
       	\displaystyle \biggl ( \frac{w}{H^{*}} \biggr )^{\frac{\lambda }{r}}\biggl(1-e^{-\lambda n}\biggr),&w^{0} \leqslant w< H^{*},\\[3mm]
       	\displaystyle 1-e^{-\lambda n}+e^{-\lambda n}\biggl ( \frac{w-H^{*}}{w^{*}-H^{*}} \biggr )^{\frac{\lambda }{r}},&H^{*} \leqslant w< w^{*},
       	\end{cases}	 \notag
       	\end{equation}
       	for initial wealth $ w \in [0,w^{*}) $, where $ w^{0}$ is the unique zero in $ (0,H^{*})$ of the following equations 
       	\begin{equation}
       	1-\biggl(1-\frac{w}{H^{*}}\biggr)^{\frac{\lambda }{r+\:_{0\shortmid}\textrm{H}_{n}}}=\biggl ( \frac{w}{H^{*}} \biggr )^{\frac{\lambda }{r}}. \notag
       	\end{equation}
The related optimal purchasing strategy is
       	\begin{itemize}
       	\item If wealth $ w $ is less than $ w ^{0}$, then the policyholder purchases $n$-year term life insurance of $ f-w$;
       \item If wealth $ w$ is greater than or equal to $ w^{0}$, then the policyholder doesn't purchase insurance until the wealth reaches $ w^{*}$, at which time, it's optimal to buy $n$-year term life insurance of $ f-H^{*} =\frac{rf}{r+\:_{0\shortmid}\textrm{H}_{n}} $.
       	\end{itemize}
       \end{corollary}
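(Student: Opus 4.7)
The plan is to derive Corollary \ref{C4} as a direct specialization of Proposition \ref{P2} by taking $m=0$, without re-deriving the variational inequality from scratch. First I would observe that when $m=0$ the event $\{\delta_d < m\}=\{\delta_d<0\}$ is null, so $P(\delta_d<m)=0$ and therefore $\mathbf{P}_1^w\equiv 0$ for every admissible strategy $\textbf{D}$. This reduces the supremum to $_{0\shortmid}\Phi_n(w)=\:_{0\shortmid}\Phi^{2,3}_n(w)$, which is exactly the quantity controlled by Lemmas \ref{L4} and \ref{L5} of the proposition (with the parameter $m$ left free). The premium $_{0\shortmid}\mathrm{H}_n$ inherits the same role in the wealth dynamics, and the ``quasi-ideal'' and ``ideal'' values become $H^{*}=\frac{_{0\shortmid}\mathrm{H}_{n}f}{r+\:_{0\shortmid}\mathrm{H}_{n}}$ and $w^{*}=\frac{(r+\:_{0\shortmid}\mathrm{H}_{n}+r\:_{0\shortmid}\mathrm{H}_{n})f}{(r+\:_{0\shortmid}\mathrm{H}_{n})(r+1)}$.

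Next I would carry out the algebraic simplification $m=0$ inside the closed-form expressions of Proposition \ref{P2}. In case (1), $\lambda\leqslant r$, the intermediate threshold is $w^{0}=e^{-rm}(w^{*}-H^{*})+H^{*}$; setting $m=0$ yields $w^{0}=w^{*}$, so the third piece $[w^{0},w^{*})$ collapses to the empty set, while $e^{-\lambda m}=1$ and $e^{-\lambda(m+n)}=e^{-\lambda n}$ transform the first two pieces into the two-piece formula stated in Corollary \ref{C4}(1). In case (2), $\lambda>r$, exactly the same reasoning applies to $w_{1}=e^{-rm}(w^{*}-H^{*})+H^{*}$: the fourth piece collapses, the first three pieces survive with the same simplifications, and the implicit equation $1-(1-w/H^{*})^{\lambda/(r+\:_{0\shortmid}\mathrm{H}_{n})}=(w/H^{*})^{\lambda/r}$ defining $w^{0}\in(0,H^{*})$ is inherited verbatim from the proposition, along with the uniqueness argument established in Lemma \ref{L5} via the reference to \cite{Bayraktar2014}.

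Finally I would remark that the optimal purchasing strategies transfer without modification, since Proposition \ref{P2} describes them purely in terms of the thresholds $H^{*}$, $w^{*}$, and (in the $\lambda>r$ case) $w^{0}$, none of which degenerate when $m=0$. The only point that warrants a sentence of justification is the well-posedness of $_{0\shortmid}\mathrm{H}_n$ in the limit $m\downarrow 0$: here I would interpret $_{0\shortmid}\mathrm{H}_n$ as the continuous-premium rate for an $n$-year term policy, i.e.\ the natural specialization of the formula after cancelling the degenerate annuity factor. The main obstacle, which is essentially bookkeeping rather than mathematical, is checking that every interval boundary and every constant in the piecewise formula aligns correctly after the substitution $m=0$, so that no piece is inadvertently lost or duplicated; with the observation $w^{0}=w_{1}=w^{*}$ at $m=0$ this alignment is automatic.
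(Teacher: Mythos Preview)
Your proposal is correct and matches the paper's own treatment: the paper states Corollary~\ref{C4} as an immediate consequence of Proposition~\ref{P2} obtained by setting $m=0$ (noting $P(\delta_d>m)=1$ a.s.), without a separate proof. Your more detailed verification that the third (resp.\ fourth) piece collapses via $w^{0}=w_{1}=w^{*}$ and your remark on interpreting $_{0\shortmid}\mathrm{H}_n$ make the specialization fully explicit, but the underlying approach is identical.
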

       
       \indent When $ m=0$ and $ n\rightarrow \infty $, our problem is equivalent to the problem of maximizing probability of reaching a given bequest goal in \cite{Bayraktar2014},  the concrete results in Corollary \ref{C5} below.
       \begin{corollary}\label{C5}
       	{\rm(1)}. When $ m=0$, $ n\rightarrow \infty $ and if $ \lambda \leqslant r $, then the maximum probability of achieving the financial goal $ f$ before ruining is given by 
       	\begin{equation}
       	_{0\shortmid}\Phi_{\infty}(w)=\displaystyle \biggl ( \frac{w}{H^{*}} \biggr )^{\frac{\lambda }{r}},\notag
       	\end{equation}
       	for initial wealth $ w \in [0,H^{*}) $, in which $H^{*}=\frac{_{0\shortmid}\textrm{H}_{\infty}f}{r+\:_{0\shortmid}\textrm{H}_{\infty}} $.\\
       	\indent The related optimal purchasing strategy is not to purchase until wealth reaches $ H^{*} $, at which point, it's optimal to buy life insurance of $ f-H^{*}=\frac{rf}{r+\:_{0\shortmid}\textrm{H}_{\infty}}$.\\
       	{\rm(2)}. When $ m=0$, $ n\rightarrow \infty $ and if $ \lambda > r$, then the maximum probability of achieving the financial goal $ f$ before ruining is given by 
       	\begin{equation}
       	_{0\shortmid}\Phi_{\infty}(w)=\begin{cases}
       	\displaystyle 1-\biggl(1-\frac{w}{H^{*}}\biggr)^{\frac{\lambda }{r+\:_{0\shortmid}\textrm{H}_{\infty}}},&0 \leqslant w < w^{0},\\[3mm]
       	\displaystyle \biggl ( \frac{w}{H^{*}} \biggr )^{\frac{\lambda }{r}},&w^{0} \leqslant w< H^{*},\\
       	\end{cases}	 \notag
       	\end{equation}
       	for initial wealth $ w \in [0,H^{*}) $, where $ w^{0}$ is the unique zero in $ (0,H^{*})$ of the following equations 
       	\begin{equation}
       	1-\biggl(1-\frac{w}{H^{*}}\biggr)^{\frac{\lambda }{r+\:_{0\shortmid}\textrm{H}_{\infty}}}=\biggl ( \frac{w}{H^{*}} \biggr )^{\frac{\lambda }{r}}. \notag
       	\end{equation}
     The related optimal purchasing strategy is
       	     	\begin{itemize} 
       	\item If wealth $ w $ is less than $ w ^{0}$, then the policyholder purchases life insurance of $ f-w$;
       	\item If wealth $ w$ is greater than or equal to $ w^{0}$, then the policyholder doesn't purchase insurance until the wealth reaches $ H^{*}$, at which point, it's optimal to buy life insurance of $ f-H^{*} =\frac{rf}{r+\:_{0\shortmid}\textrm{H}_{\infty}} $.
       		\end{itemize}
       \end{corollary}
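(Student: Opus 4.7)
The plan is to obtain Corollary~\ref{C5} as the limit of Corollary~\ref{C4} as $n\to\infty$, using the fact that in this regime the deferred term insurance reduces to whole life insurance, so the problem becomes the bequest-goal problem of \cite{Bayraktar2014}. First I would verify that the continuously paid premium $_{0\shortmid}\textrm{H}_{n}$ converges to $_{0\shortmid}\textrm{H}_{\infty}$: since $\Ax*{\term{x}{n}}\to \lambda/(r+\lambda)$ and $\ax*{\endow{x}{n}}$ converges to the whole-life annuity, the premium has a well-defined finite positive limit, so both $H^{*}$ and the auxiliary value $w^{*}$ from Corollary~\ref{C4} converge to their whole-life counterparts.

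For case~(1), when $\lambda\leqslant r$, I would take termwise limits of the formula in Corollary~\ref{C4}(1). On $[0,H^{*})$ the factor $1-e^{-\lambda n}$ tends to $1$, giving $(w/H^{*})^{\lambda/r}$ directly. On $[H^{*},w^{*})$ the expression $(1-e^{-\lambda n})+e^{-\lambda n}((w-H^{*})/(w^{*}-H^{*}))^{\lambda/r}$ tends to $1$ since $e^{-\lambda n}\to 0$, and a short algebraic check shows that $w^{*}-H^{*}=rf/[(r+\,_{0\shortmid}\textrm{H}_{n})(r+1)]$ stays bounded away from zero. Consequently the two regimes collapse to ``probability $(w/H^{*})^{\lambda/r}$ on $[0,H^{*})$ and value $1$ from $H^{*}$ onward'', matching Corollary~\ref{C5}(1). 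The optimal strategy transitions from ``wait until $w^{*}$'' to ``wait until $H^{*}$'', because once $w\geqslant H^{*}$ the policyholder is certain to receive the death benefit (whole life insurance always pays).

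For case~(2), when $\lambda>r$, the same regionwise limiting procedure applies. On $[0,w^{0})$ the factor $1-e^{-\lambda n}\to 1$ and $_{0\shortmid}\textrm{H}_{n}\to \,_{0\shortmid}\textrm{H}_{\infty}$, yielding $1-(1-w/H^{*})^{\lambda/(r+\,_{0\shortmid}\textrm{H}_{\infty})}$. On $[w^{0},H^{*})$ the limit is $(w/H^{*})^{\lambda/r}$ as before. The defining equation for $w^{0}$ in Corollary~\ref{C4} converges to its stated analogue in Corollary~\ref{C5}, and I would invoke the uniqueness argument from Lemma~\ref{L5}, following \cite{Bayraktar2014}, to identify the limiting $w^{0}$ as the unique zero of the limit equation on $(0,H^{*})$. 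The region $[H^{*},w^{*})$ again collapses to probability~$1$, and the optimal strategies inherit their piecewise form from Corollary~\ref{C4}.

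The main obstacle is justifying that the pointwise limits of the optimal probabilities are themselves optimal for the $n=\infty$ problem rather than merely lower bounds. To rule out this subtlety I would independently verify the candidate formulas by substituting them into the whole-life specialization of the variational inequality~(\ref{E8})---with the death-benefit factor $e^{-\lambda m}-e^{-\lambda(m+n)}$ replaced by $1$ everywhere---and checking the boundary conditions $_{0\shortmid}\Phi_{\infty}(0)=0$ and $_{0\shortmid}\Phi_{\infty}(H^{*})=1$. Since this verification argument is exactly what \cite{Bayraktar2014} establishes in their Proposition~3.5, the independent route certifies that the limit expressions are the true maximum probabilities and that the claimed purchase strategies are optimal.
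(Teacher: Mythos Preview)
Your proposal is correct and follows essentially the same route as the paper: the paper presents Corollary~\ref{C5} as the specialization $m=0$, $n\to\infty$ of the general continuously-paid-premium result (Proposition~\ref{P2}/Corollary~\ref{C4}), remarking that in this regime the problem is equivalent to the bequest-goal problem of \cite{Bayraktar2014} and offering no further proof. Your termwise limit argument and the verification against the whole-life variational inequality are in fact more careful than what the paper supplies.
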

       
       When $ m > 0, n \rightarrow \infty$, the problem turns to consider buying $m$-year deferred whole life insurance, assume $ \frac{_{m\shortmid}\textrm{H}_{\infty}}{r+\:_{m\shortmid}\textrm{H}_{\infty}}<e^{-rm}$, the corresponding results are as follows.
       \begin{corollary}\label{C6}
       	{\rm(1)}. When $m > 0, n \rightarrow \infty$ and if $ \lambda \leqslant r $, then the maximum probability of achieving the financial goal $ f$ before ruining is given by 
       	\begin{equation}
       	_{m\shortmid}\Phi_{\infty}(w)=\begin{cases}
       	\displaystyle \biggl ( \frac{w}{H^{*}} \biggr )^{\frac{\lambda }{r}}e^{-\lambda m},&0 \leqslant w< H^{*},\\[6mm]
       	\displaystyle e^{-\lambda m},&H^{*} \leqslant w< w^{0},\\[1mm]
       	\displaystyle \biggl ( \frac{w-H^{*}}{w^{*}-H^{*}} \biggr )^{\frac{\lambda }{r}},& w^{0} \leqslant w< w^{*},
       	\end{cases}	  \notag
       	\end{equation}
       	in which $ w^{0} = e^{-rm}(w^{*}-H^{*})+H^{*}$, $ H^{*}=\frac{_{m\shortmid}\textrm{H}_{\infty}f}{r+\:_{m\shortmid}\textrm{H}_{\infty}} $,  $w^{*}= \frac{(r+\:_{m\shortmid}\textrm{H}_{\infty}+r\:_{m\shortmid}\textrm{H}_{\infty})f}{(r+\:_{m\shortmid}\textrm{H}_{\infty})(r+1)} $ and the initial wealth $ w \in [0,w^{*}) $.\\
       	\indent The related optimal purchasing strategy is not to purchase until wealth reaches $ w^{*} $, at which point, it's optimal to buy $m$-year deferred whole life insurance of $ f-H^{*} =\frac{rf}{r+\:_{m\shortmid}\textrm{H}_{\infty}}$.\\
       	{\rm(2)}. When $m > 0, n \rightarrow \infty$ and if $ \lambda > r$, then the maximum probability of achieving the financial goal $ f$ before ruining is given by 
       	\begin{equation}
       	_{m\shortmid}\Phi_{\infty}(w)=\begin{cases}
       	\displaystyle \biggl[1-\biggl(1-\frac{w}{H^{*}}\biggr)^{\frac{\lambda }{r+\:_{m\shortmid}\textrm{H}_{\infty}}}\biggr]e^{-\lambda m},&0\leqslant w < w^{0},\\[3mm]
       	\displaystyle \biggl ( \frac{w}{H^{*}} \biggr )^{\frac{\lambda }{r}}e^{-\lambda m},&w^{0} \leqslant w< H^{*},\\[6mm]
       	\displaystyle e^{-\lambda m},&H^{*} \leqslant w< w_{1},\\[1mm]
       	\displaystyle \biggl ( \frac{w-H^{*}}{w^{*}-H^{*}} \biggr )^{\frac{\lambda }{r}},& w_{1} \leqslant w< w^{*},
       	\end{cases}	 \notag
       	\end{equation}
       	in which $w_{1} =e^{-rm}(w^{*}-H^{*})+H^{*}$ and the initial wealth $ w \in [0,w^{*}) $, where $ w^{0}$ is the unique zero in $ (0,H^{*})$ of the following equations 
       	\begin{equation}
       	1-\biggl(1-\frac{w}{H^{*}}\biggr)^{\frac{\lambda }{r+\:_{m\shortmid}\textrm{H}_{\infty}}}= \biggl ( \frac{w}{H^{*}} \biggr )^{\frac{\lambda }{r}}. \notag
       	\end{equation}
   The related optimal purchasing strategy is:
       	\begin{itemize}
       	\item If wealth $ w $ is less than $ w ^{0}$, then the policyholder purchase $m$-year deferred whole life insurance of $ f-w$; 
       	\item If wealth $ w$ is greater than or equal to $ w^{0}$, then the policyholder doesn't purchase insurance until the wealth reaches $ w^{*}$, at which point, it's optimal to buy $m$-year deferred whole life insurance of $ f-H^{*} =\frac{rf}{r+\:_{m\shortmid}\textrm{H}_{\infty}} $.
       		\end{itemize}
       \end{corollary}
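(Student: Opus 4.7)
The plan is to obtain Corollary \ref{C6} as the $n \to \infty$ limit of Proposition \ref{P2}, and then to legitimize the passage to the limit by independently rerunning the verification machinery of Lemmas \ref{L3}--\ref{L5} in the whole-life setting. First I would verify that ${_{m\shortmid}\textrm{H}_{n}} \to {_{m\shortmid}\textrm{H}_{\infty}}$ as $n \to \infty$, which is immediate from the integral definition, so the thresholds $H^{*}$ and $w^{*}$ of Corollary \ref{C6} are exactly the limits of their counterparts in Proposition \ref{P2}. Using $e^{-\lambda(m+n)} \to 0$ and $e^{-\lambda m} - e^{-\lambda(m+n)} \to e^{-\lambda m}$, substituting into each piecewise branch of Proposition \ref{P2} reproduces the piecewise formulas of Corollary \ref{C6} term by term; in particular the middle region of Part (1) collapses to the constant value $e^{-\lambda m}$, and the defining equation of $w^{0}$ in Part (2) is the $n \to \infty$ analogue of \eqref{E14}, whose unique root in $(0, H^{*})$ follows from the same monotonicity argument as in Lemma \ref{L5}.

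To make the limit rigorous without relying on a limit-interchange lemma, I would next define $F(w)$ by the candidate expressions in Corollary \ref{C6} and verify directly that $F = {_{m\shortmid}\Phi_{\infty}}$. The key setup is the $n \to \infty$ analogue of the variational inequality \eqref{E8},
\begin{equation*}
\lambda F = rw F_{w} + \max\bigl(\lambda e^{-\lambda m} - {_{m\shortmid}\textrm{H}_{\infty}}(f-w) F_{w},\ 0\bigr)\mathbbm{1}_{\{w < H^{*}\}} + \bigl(\lambda e^{-\lambda m} - rH^{*} F_{w}\bigr)\mathbbm{1}_{\{w \geq H^{*}\}},
\end{equation*}
with boundary conditions $F(H^{*}) = F(w^{*}) = e^{-\lambda m}$. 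The It\^o-formula and monotone-convergence argument of Lemma \ref{L3} transfers verbatim with $e^{-\lambda m} - e^{-\lambda(m+n)}$ replaced by $e^{-\lambda m}$, and the piecewise verifications mirror Lemmas \ref{L4} and \ref{L5}; in the $\lambda \leq r$ case they reduce to the scalar inequality $1 - \frac{r + {_{m\shortmid}\textrm{H}_{\infty}}}{r} y^{\lambda/r - 1} + \frac{_{m\shortmid}\textrm{H}_{\infty}}{r} y^{\lambda/r} \leq 0$ on $(0,1)$ established in \cite{Bayraktar2014}, while in the $\lambda > r$ case to uniqueness and monotonicity of the root of the modified \eqref{E14}. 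The ${_{m\shortmid}\Phi^{1}_{\infty}}$ contribution is unchanged from Step 1 of the proof of Proposition \ref{P2}, since it depends only on pre-$m$ death events, and under the standing hypothesis $\frac{_{m\shortmid}\textrm{H}_{\infty}}{r + {_{m\shortmid}\textrm{H}_{\infty}}} < e^{-rm}$ it vanishes on $[H^{*}, w^{0})$ and yields the additional $\bigl((w - H^{*})/(w^{*} - H^{*})\bigr)^{\lambda/r} - e^{-\lambda m}$ on $[w^{0}, w^{*})$, so the totals line up with Corollary \ref{C6}.

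The step I expect to be the main obstacle is the flat middle branch $H^{*} \leq w < w^{0}$ in Part (1) (and $H^{*} \leq w < w_{1}$ in Part (2)), where $F \equiv e^{-\lambda m}$ and $F_{w} = 0$. The variational equation collapses trivially to $\lambda F = \lambda e^{-\lambda m}$, but I would still have to justify that the candidate optimal control --- purchase $f - H^{*}$ of deferred whole-life coverage the instant wealth first hits $H^{*}$, then let wealth drift upward under $dW = r(W - H^{*})\,dt$ --- actually realizes the supremum on that interval. The argument should be that once the coverage is in force, the wealth at every death time $\delta_{d} > m$ is at least $H^{*}$, so the insurance payoff $f - H^{*}$ plus remaining wealth delivers at least $f$, giving success probability $P(\delta_{d} > m) = e^{-\lambda m}$ matching $F$; any alternative either reduces coverage (shrinking the post-$m$ success probability) or spends more on premium (risking ruin before $m$). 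This plateau behavior, absent from Corollary \ref{C5}, is the one genuine departure from a mechanical $n \to \infty$ specialization of Proposition \ref{P2} and deserves separate verification.
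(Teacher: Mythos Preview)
Your proposal is correct and actually more thorough than what the paper does. The paper gives no separate proof of Corollary \ref{C6}; it is simply stated as the $n\to\infty$ specialization of Proposition \ref{P2} under the standing assumption $\frac{_{m\shortmid}\textrm{H}_{\infty}}{r+{_{m\shortmid}\textrm{H}_{\infty}}}<e^{-rm}$, and the reader is left to read off the formulas by substituting $e^{-\lambda(m+n)}\to 0$ in each branch. Your first step (formal limits) therefore already matches the paper's entire argument, and your second step (independently rerunning the verification of Lemmas \ref{L3}--\ref{L5} with $e^{-\lambda m}-e^{-\lambda(m+n)}$ replaced by $e^{-\lambda m}$) supplies the rigor the paper omits.

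One small correction concerning the flat branch you single out as the main obstacle: it is actually the \emph{easiest} case. With $F\equiv e^{-\lambda m}$ and $F_w=0$ on $[H^{*},w^{0})$, the variational equation for the ${}_{m\shortmid}\Phi^{2,3}_\infty$ part collapses to the tautology $\lambda e^{-\lambda m}=\lambda e^{-\lambda m}$, so nothing further is needed on the PDE side. The control you describe there --- purchase $f-H^{*}$ the instant wealth first hits $H^{*}$ --- does realize the value $e^{-\lambda m}$ (since wealth never drops below $H^{*}$ thereafter, and the policy pays $f-H^{*}$ whenever $\delta_d>m$), but it is not the control the Corollary actually names, which is to wait until $w^{*}$. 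On the flat region both controls attain the supremum, so this is not an error in your argument; just be aware that the optimal strategy is non-unique there and align your write-up with the strategy the Corollary states.
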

       \section{Purchasing $m$-year deferred $n$-year term pure endowment in personal financial planning under the deterministic framework}\label{S3}
       The policyholder purchases $m$-year deferred $n$-year term pure endowment in financial planning to make a longevity risk protection plan. Assume he/she has the financial goal $ f $ to ensure adequate pension at time $ \tau$ which is the retirement moment, and $\tau $ follows an exponential distribution with parameter $ \lambda $. Assumptions and methods are as similar as that in Section \ref{S2}, and we also add the time cutoff $ n$ and $ m$, but because of the policyholder must live for $ m+n$ years, so the discussion about $ n$ , $ m$ and $ \tau$ changes. By the same way as that in previous Section \ref{S2}, we will directly give the relevant conclusions in this section.
       \subsection{$m$-year deferred $n$-year term pure endowment purchased by a single premium}
       The policyholder purchases $m$-year deferred $n$-year term pure endowment by a single premium for $ _{m\shortmid}\textrm{R}_{n}$ per dollar of insurance, in which
       \begin{equation}
       _{m\shortmid}\textrm{R}_{n}=(1+\theta)\int_{m+n}^{\infty}e^{-(m+n)r}\lambda e^{-\lambda t}dt=(1+\theta)e^{-(m+n)(r+\lambda)}. \notag
       \end{equation}
       Similar to Section \ref{S2}, we restrict premium $ _{m\shortmid}\textrm{R}_{n}<e^{-rm}$. The wealth follows the dynamics
       \begin{equation}
       \begin{cases}
       dW(t)=rW(t-)dt-\:_{m\shortmid}\textrm{R}_{n}dD(t),&\ 0\leqslant t<\tau,\\[2mm]W(\tau)=W(\tau-)+D(\tau-)\mathbbm{1}_{\left \{ \tau>m+n \right \}}.
       \end{cases} \notag
       \end{equation}
       We define the maximum probability of achieving the financial goal $ f $ is $ _{m\shortmid}\phi_{n}$ as follows: 
       \begin{equation}
       \begin{split}
       _{m\shortmid}\phi_{n}(w,D)&=\sup_{\textbf{D}}  \biggl[\textbf{P}^{w,D}(W(\tau)\geqslant f \mid \tau \leqslant m+n)P(\tau \leqslant m+n)\\&\quad+ \textbf{P}^{w,D}(W(\tau)\geqslant f \mid \tau>m+n)P(\tau>m+n)  \biggr].		\notag
       \end{split}
       \end{equation}
       \begin{proposition}\label{P3}
       	The maximum probability of achieving the financial goal is given by 
       {\small
       	\begin{equation}
       	_{m\shortmid}\phi_{n}(w,D)=\begin{cases} 
       	 \left ( \frac{w}{R_{*}} \right )^{\frac{ \lambda }{r}}e^{-\lambda(m+n)}, &0\leqslant w< R_{*},\\[3mm]
      e^{-\lambda (m+n)}+\biggl(e^{-\lambda m}-e^{-\lambda (m+n)}\biggr)\left ( \frac{w-R_{*}}{f-D} \right )^{\frac{\lambda }{r}},&R_{*}\leqslant w < w_{0},\\[3mm]
 \biggl(1+e^{-\lambda m}-e^{-\lambda (m+n)}\biggr)\left ( \frac{w-R_{*}}{f-D} \right )^{\frac{\lambda }{r}}-e^{-\lambda m}+e^{-\lambda (m+n)},&w_{0}\leqslant w < w_{*},
       	\end{cases}  \notag
       	\end{equation}
       }
       	in which $R_{*}=\:_{m\shortmid}\textrm{R}_{n}(f-D)$, $w_{0}=(e^{-rm}+\:_{m\shortmid}\textrm{R}_{n})(f-D)$ and $ w_{*}=(_{m\shortmid}\textrm{R}_{n}+1)(f-D) $.\\
       	\indent The related optimal purchasing strategy is not to purchase until wealth reaches $w_{*} $, at which point, it's optimal to buy additional $m$-year deferred $n$-year term pure endowment of $ f-D$.
       \end{proposition}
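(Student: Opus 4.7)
The plan is to follow the structure of the proof of Proposition~\ref{P1}, simply replacing the term-life payout event $\{m<\delta_{d}\leq m+n\}$ by the pure-endowment payout event $\{\tau>m+n\}$ throughout. First I would split
\[
\:_{m\shortmid}\phi_{n}(w,D)=\:_{m\shortmid}\phi^{1}_{n}(w,D)+\:_{m\shortmid}\phi^{2}_{n}(w,D),
\]
where $\:_{m\shortmid}\phi^{1}_{n}$ is the supremum of the joint probability of goal attainment on $\{\tau\leq m+n\}$ (no endowment payout) and $\:_{m\shortmid}\phi^{2}_{n}$ is the analogous sup on $\{\tau>m+n\}$ (payout received). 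The roles of $\:_{m\shortmid}\textrm{K}_{n}$ and $K_{*}$ from Section~\ref{S2} are played here by $\:_{m\shortmid}\textrm{R}_{n}$ and $R_{*}$, and the ideal value $w_{*}=(\:_{m\shortmid}\textrm{R}_{n}+1)(f-D)$ is still characterised by $w_{*}-R_{*}=f-D$.

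For $\:_{m\shortmid}\phi^{1}_{n}$ I would mimic Step~1 of the proof of Proposition~\ref{P1} directly. When $w\geq R_{*}$, spending $R_{*}$ on $f-D$ of coverage at time $0$ leaves $w-R_{*}$ growing risk-free; on $\{\tau\leq m+n\}$ no payout occurs and the goal is met iff $\tau\geq t_{*}$ with $(w-R_{*})e^{rt_{*}}=f-D$, so integrating the exponential density against $[t_{*},m+n]$ yields $((w-R_{*})/(f-D))^{\lambda/r}-e^{-\lambda(m+n)}$ when $t_{*}\leq m+n$ (equivalently $w\geq w_{0}$) and $0$ otherwise. The subcase $w<R_{*}$ is the same computation with no purchase, and the restriction $\:_{m\shortmid}\textrm{R}_{n}<e^{-rm}$ forces $\:_{m\shortmid}\phi^{1}_{n}\equiv 0$ on $[0,R_{*})$.

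For $\:_{m\shortmid}\phi^{2}_{n}$ I would state and prove a verification lemma analogous to Lemma~\ref{L1}: on $(0,R_{*})$ the variational inequality is
\[
\max\bigl(rw\phi_{w}-\lambda\phi,\;\phi_{D}-\:_{m\shortmid}\textrm{R}_{n}\phi_{w}\bigr)=0,
\]
and on $(R_{*},w_{*})$,
\[
\max\bigl(rw\phi_{w}-\lambda\phi+\lambda e^{-\lambda(m+n)}-rR_{*}\phi_{w},\;\phi_{D}-(\:_{m\shortmid}\textrm{R}_{n}+1)\phi_{w}\bigr)=0,
\]
with boundary data $\phi(R_{*},D)=e^{-\lambda(m+n)}$ and $\phi(w_{*},D)=e^{-\lambda(m+n)}$; the bonus term $\lambda e^{-\lambda(m+n)}$ replaces $\lambda(e^{-\lambda m}-e^{-\lambda(m+n)})$ because the endowment delivers mass at rate $\lambda\textbf{P}(\tau>m+n)$ rather than $\lambda\textbf{P}(m\leq\tau\leq m+n)$. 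The verification itself is the line-by-line translation of Lemma~\ref{L1}: rewrite $\:_{m\shortmid}\phi^{2}_{n}$ as a sup of $\textbf{E}^{w,D}\int_{0}^{\infty}\lambda e^{-\lambda t}e^{-\lambda(m+n)}\mathbbm{1}_{\{W(t)+D(t)\geq f\}}dt$, apply It\^o's formula to $e^{-\lambda\tau_{n}}\phi(W(\tau_{n}),D(\tau_{n}))$, bound from below by $\phi^{*}$, let $\tau_{n}\to\infty$, and invoke the boundary conditions. The ODE is then solved piecewise as in Lemma~\ref{L2}: on $(0,R_{*})$, $\lambda\phi=rw\phi_{w}$ gives $(w/R_{*})^{\lambda/r}e^{-\lambda(m+n)}$; on $[R_{*},w_{*})$, the substitution $y=w-R_{*}$, $\tilde\phi=\phi-e^{-\lambda(m+n)}$ linearises $r(w-R_{*})\phi_{w}=\lambda(\phi-e^{-\lambda(m+n)})$ and, after matching both boundary values, produces the two sub-branches of the formula with the internal threshold $w_{0}=(e^{-rm}+\:_{m\shortmid}\textrm{R}_{n})(f-D)$ appearing as the wealth level at which $t_{*}=m$. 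Finally I would add $\:_{m\shortmid}\phi^{1}_{n}+\:_{m\shortmid}\phi^{2}_{n}$ piecewise and check continuity at $R_{*}$ and $w_{0}$.

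The hard part will be the middle branch $R_{*}\leq w<w_{0}$: the displayed answer mixes $e^{-\lambda(m+n)}$ with the term-insurance-like factor $e^{-\lambda m}-e^{-\lambda(m+n)}$ and so does not pop out of any single naive strategy (neither ``buy $f-D$ at $R_{*}$'' nor ``wait until $w_{*}$'' reproduces it on the nose). Reconciling this forces the two constants of integration in the inhomogeneous branch on $[R_{*},w_{*})$ to be fixed so that both $\phi(R_{*},D)=e^{-\lambda(m+n)}$ and $\phi(w_{*},D)=e^{-\lambda(m+n)}$ hold simultaneously while the sign inequality $\phi_{D}-(\:_{m\shortmid}\textrm{R}_{n}+1)\phi_{w}<0$ is preserved across the $w_{0}$-kink; verifying this sign by direct differentiation, in the spirit of the closing estimate of Lemma~\ref{L2}, is where the substantive work lies.
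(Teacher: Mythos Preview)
Your overall plan---mirror the proof of Proposition~\ref{P1}---is exactly what the paper does; Section~\ref{S3} explicitly says the argument is ``by the same way as that in previous Section~\ref{S2}''.  However, the \emph{way} you split the probability does not mirror Section~\ref{S2}, and that is where the proposal breaks down.

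In Section~\ref{S2} the decomposition is \emph{not} ``payout versus no-payout''; it is $\{\delta_d<m\}$ versus $\{\delta_d\ge m\}$, with the verification lemma (Lemma~\ref{L1}) applied to $\varphi^{2,3}_{n}$ on $\{\delta_d\ge m\}$ and $\varphi^{1}_{n}$ on $\{\delta_d<m\}$ computed directly.  The pure-endowment analogue keeps the \emph{same} cut at $m$: one defines $\phi^{1}_{n}$ on $\{\tau<m\}$ and $\phi^{2,3}_{n}$ on $\{\tau\ge m\}$ (the latter now containing the no-payout piece $\{m\le\tau\le m+n\}$ and the payout piece $\{\tau>m+n\}$, i.e.\ the roles of ``$2$'' and ``$3$'' are swapped relative to Section~\ref{S2}).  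The verification lemma then carries boundary data $\phi^{2,3}_{n}(R_{*},D)=e^{-\lambda(m+n)}$ and $\phi^{2,3}_{n}(w_{*},D)=e^{-\lambda m}$, and it is precisely this pair that forces the coefficient $e^{-\lambda m}-e^{-\lambda(m+n)}$ in the middle branch.  The direct computation of $\phi^{1}_{n}$ on $\{\tau<m\}$ is then literally Step~1 of the proof of Proposition~\ref{P1} (integrate over $[t_{*},m]$), which is what produces the kink at $w_{0}=(e^{-rm}+\,_{m\mid}\mathrm{R}_{n})(f-D)$, i.e.\ at $t_{*}=m$.

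Your two-way split $\{\tau\le m+n\}$ versus $\{\tau>m+n\}$ cannot reproduce this.  With the boundary values you wrote, $\phi(R_{*},D)=\phi(w_{*},D)=e^{-\lambda(m+n)}$, the first-order ODE $r(w-R_{*})\phi_{w}=\lambda(\phi-e^{-\lambda(m+n)})$ has only \emph{one} integration constant, so the unique solution matching both endpoints is the constant $e^{-\lambda(m+n)}$; there is no mechanism for ``two constants of integration'' or an internal kink.  Correspondingly, your $\phi^{1}_{n}$ on $\{\tau\le m+n\}$ would (if carried out consistently with your own split) integrate the density over $[t_{*},m+n]$ and become positive at the threshold $t_{*}=m+n$, i.e.\ at $w=(\,_{m\mid}\mathrm{R}_{n}+e^{-r(m+n)})(f-D)$, \emph{not} at $w_{0}$.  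The sentence ``$t_{*}\le m+n$ (equivalently $w\ge w_{0}$)'' is therefore incorrect, and the sum $\phi^{1}_{n}+\phi^{2}_{n}$ under your split does not equal the displayed formula on $[R_{*},w_{0})$.  The ``hard part'' you flagged is an artefact of the wrong decomposition; once you cut at $m$ rather than at $m+n$, the middle branch drops out of the verification lemma exactly as in Lemma~\ref{L2}.
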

       \begin{corollary}\label{C7}
       	When $ m=0, n \geqslant 0$, the maximum probability of achieving the financial goal $ f $ on L is given by 
       	\begin{equation}
       	_{0\shortmid}\phi_{n}(w,D)=\begin{cases} 
       	\displaystyle \left ( \frac{w}{R_{*}} \right )^{\frac{ \lambda }{ r}}e^{-\lambda n},&0\leqslant w<R_{*},\\[3mm]
       	\displaystyle e^{-\lambda n}+\biggl(1-e^{-\lambda n}\biggr)\left ( \frac{w-R_{*}}{f-D} \right )^{\frac{\lambda }{r}},&R_{*}\leqslant w<w_{*},
       	\end{cases} \notag
       	\end{equation}
       	in which $R_{*}=\:_{0\shortmid}\textrm{R}_{n}(f-D)$, $ w_{*}=(_{0\shortmid}\textrm{R}_{n}+1)(f-D) $.\\
       	\indent The related optimal purchasing strategy is not to purchase until wealth reaches $w_{*} $, at which point, it's optimal to buy additional $n$-year term pure endowment of $ f-D$.
       \end{corollary}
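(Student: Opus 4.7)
The plan is to derive Corollary \ref{C7} as a direct specialization of Proposition \ref{P3} by setting $m = 0$, so the main task is to check that the three-case formula collapses consistently to a two-case formula. First I would substitute $m = 0$ into the definitions and observe the following simplifications: $R_{*} = \:_{0\shortmid}\textrm{R}_{n}(f-D)$ and $w_{*} = (\:_{0\shortmid}\textrm{R}_{n}+1)(f-D)$ are unchanged, while the intermediate threshold becomes
\begin{equation*}
w_{0} = (e^{-r \cdot 0} + \:_{0\shortmid}\textrm{R}_{n})(f-D) = (1 + \:_{0\shortmid}\textrm{R}_{n})(f-D) = w_{*}.
\end{equation*}
Thus the third interval $[w_{0}, w_{*})$ in Proposition \ref{P3} degenerates to the empty set, and only the first two cases survive. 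Meanwhile, $e^{-\lambda m} = 1$ and $e^{-\lambda(m+n)} = e^{-\lambda n}$, so on $[0, R_{*})$ the probability $(w/R_{*})^{\lambda/r} e^{-\lambda(m+n)}$ becomes $(w/R_{*})^{\lambda/r} e^{-\lambda n}$, and on $[R_{*}, w_{*})$ the probability $e^{-\lambda(m+n)} + (e^{-\lambda m} - e^{-\lambda(m+n)})((w-R_{*})/(f-D))^{\lambda/r}$ becomes $e^{-\lambda n} + (1 - e^{-\lambda n})((w-R_{*})/(f-D))^{\lambda/r}$, matching the claim.

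The second step is to confirm that the probabilistic interpretation is consistent. With $m = 0$, the event $\{\tau \leq m+n\}$ reduces to $\{\tau \leq n\}$ with $P(\tau \leq n) = 1 - e^{-\lambda n}$, and $\{\tau > m+n\}$ reduces to $\{\tau > n\}$ with $P(\tau > n) = e^{-\lambda n}$. The decomposition $_{0\shortmid}\phi_{n} = \:_{0\shortmid}\phi_{n}^{1} + \:_{0\shortmid}\phi_{n}^{2}$ therefore has only two components, and one can check that the two summands appearing in the answer correspond, respectively, to the $\{\tau > n\}$ leg (the prefactor $e^{-\lambda n}$) and the $\{\tau \leq n\}$ leg (the prefactor $1 - e^{-\lambda n}$).

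The optimal strategy transfers verbatim from Proposition \ref{P3}: do not purchase until wealth reaches $w_{*}$, at which point buy $n$-year term pure endowment of face value $f - D$. Since $w_{0} = w_{*}$ in the degenerate case, the intermediate regime where the policyholder accumulates additional risk-free wealth beyond $R_{*}$ but has not yet reached $w_{*}$ still exists, but the distinction between ``being able to reach the ideal value before dying'' and ``not reaching it'' disappears, because the policyholder must survive to time $n$ anyway to collect the benefit. The restriction $\:_{0\shortmid}\textrm{R}_{n} < e^{-r \cdot 0} = 1$, needed in Proposition \ref{P3}, is automatic since the single premium per dollar of pure endowment is always less than one.

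I do not anticipate a serious obstacle: the argument is a routine specialization, and the main point to be careful about is the algebraic verification that $w_{0} = w_{*}$ when $m = 0$, which is what forces the three-piece formula to collapse to two pieces. No new variational inequality needs to be solved, since the verification lemma underlying Proposition \ref{P3} (the pure-endowment analogue of Lemmas \ref{L1}--\ref{L2}) already applies with $m = 0$.
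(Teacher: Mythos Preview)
Your proposal is correct and matches the paper's approach: Corollary~\ref{C7} is stated in the paper without a separate proof, being an immediate specialization of Proposition~\ref{P3} at $m=0$, and your check that $w_{0}=(e^{0}+\,_{0\shortmid}\textrm{R}_{n})(f-D)=w_{*}$ collapses the three-piece formula to two pieces is exactly the point. One small caveat: the assertion that $\,_{0\shortmid}\textrm{R}_{n}<1$ is ``automatic'' is not quite right, since $\,_{0\shortmid}\textrm{R}_{n}=(1+\theta)e^{-n(r+\lambda)}$ can exceed $1$ for large loading $\theta$; rather, this is simply the standing restriction $\,_{m\shortmid}\textrm{R}_{n}<e^{-rm}$ of Section~\ref{S3} evaluated at $m=0$, and it carries over as an assumption rather than a consequence.
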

       \subsection{$m$-year deferred $n$-year term pure endowment purchased by a continuously paid premium}
       We work out the problem about buying instantaneous $m$-year deferred $n$-year term pure endowment via a premium paid continuously at the rate of $ _{m\shortmid}\textrm{M}_{n} $ per dollar of insurance and $\frac{_{m\shortmid}\textrm{M}_{n}+r\:_{m\shortmid}\textrm{M}_{n}}{r+\:_{m\shortmid}\textrm{M}_{n}+r\:_{m\shortmid}\textrm{M}_{n}} <e^{-rm}$, in which
       \begin{equation}
       _{m\shortmid}\textrm{M}_{n}=\frac{(1+\theta )\ax*[m|]{\endow{x}{n}}}{ \ax*{\endow{x}{m}}}. \notag
       \end{equation}
       The wealth satisfies the following dynamics
       \begin{equation}
       \begin{cases}
       dW(t)=(rW(t)-\:_{m\shortmid}\textrm{M}_{n}D(t)\mathbbm{1}_{\left \{t \leqslant m \right \}})dt,&\ 0\leqslant t<\tau,\\[2mm]
       W(\tau)=W(\tau-)+D(\tau-)\mathbbm{1}_{\left \{\tau> m+n \right \}}.
       \end{cases}\notag
       \end{equation}
       The maximum probability of achieving the financial goal $ f$ is as follows: 
       \begin{equation}
       \begin{split}
       _{m\shortmid}\Phi_{n}(w)&=\sup_{\textbf{D}}  \biggl[\textbf{P}^{w}(W(\tau \wedge \delta _{0})\geqslant f \mid \tau \leqslant m+n)P(\tau \leqslant m+n)\\&\quad+ \textbf{P}^{w}(W(\tau \wedge \delta _{0})\geqslant f \mid \tau>m+n)P(\tau>m+n)  \biggr].		\notag
       \end{split}
       \end{equation}
       \begin{proposition}\label{P4}
       	{\rm(1)}. If $ \lambda \leqslant r $, then the maximum probability of achieving the financial goal $ f$ before ruining is given by 
       	{\small
       	\begin{equation}
       	_{m\shortmid}\Phi_{n}(w)=\begin{cases}
       	\displaystyle \biggl ( \frac{w}{M^{*}} \biggr )^{\frac{\lambda }{r}}e^{-\lambda (m+n)},&0 \leqslant w< M^{*},\\[3mm]
       	\displaystyle e^{-\lambda (m+n)}+\biggl(e^{-\lambda m}-e^{-\lambda (m+n)}\biggr)\biggl ( \frac{w-M^{*}}{w^{*}-M^{*}} \biggr )^{\frac{\lambda }{r}},&M^{*} \leqslant w< w^{0},\\[3mm]
       	\displaystyle \biggl(1+e^{-\lambda m}-e^{-\lambda (m+n)}\biggr)\biggl ( \frac{w-M^{*}}{w^{*}-M^{*}} \biggr )^{\frac{\lambda }{r}}-e^{-\lambda m}+e^{-\lambda (m+n)},& w^{0} \leqslant w< w^{*},
       	\end{cases}	  \notag
       	\end{equation}
       }
       	in which $M^{*}= \frac{_{m\shortmid}\textrm{M}_{n}f}{r+\:_{m\shortmid}\textrm{M}_{n}}$, $ w^{0} = e^{-rm}(w^{*}-M^{*})+M^{*}$, $ w^{*}= \frac{(r+\:_{m\shortmid}\textrm{M}_{n}+r\:_{m\shortmid}\textrm{M}_{n})f}{(r+\:_{m\shortmid}\textrm{M}_{n})(r+1)}$ and the initial wealth $ w \in [0,w^{*}) $.\\
       	\indent The related optimal purchasing strategy is not to purchase until wealth reaches $ w^{*} $, at which point, it's optimal to buy $m$-year deferred $n$-year term pure endowment of $ f-M^{*} =\frac{rf}{r+\:_{m\shortmid}\textrm{M}_{n}}$.\\
       	{\rm(2)}. If $ \lambda > r$, then the maximum probability of achieving the financial goal $ f$ before ruining is given by 
       	{\small
       	\begin{equation}
       	_{m\shortmid}\Phi_{n}(w)=\begin{cases}
       	\displaystyle \biggl[1-\biggl(1-\frac{w}{M^{*}}\biggr)^{\frac{\lambda }{r+\:_{m\shortmid}\textrm{M}_{n}}}\biggr]e^{-\lambda (m+n)},&0 \leqslant w < w^{0},\\[3mm]
       	\displaystyle \biggl ( \frac{w}{M^{*}} \biggr )^{\frac{\lambda }{r}}e^{-\lambda (m+n)},&w^{0} \leqslant w< M^{*},\\[3mm]
       	\displaystyle e^{-\lambda (m+n)}+\biggl(e^{-\lambda m}-e^{-\lambda (m+n)}\biggr)\biggl ( \frac{w-M^{*}}{w^{*}-M^{*}} \biggr )^{\frac{\lambda }{r}},&M^{*} \leqslant w< w_{1},\\[3mm]
       	\displaystyle \biggl(1+e^{-\lambda m}-e^{-\lambda (m+n)}\biggr)\biggl ( \frac{w-M^{*}}{w^{*}-M^{*}} \biggr )^{\frac{\lambda }{r}}-e^{-\lambda m}+e^{-\lambda (m+n)},& w_{1} \leqslant w< w^{*},
       	\end{cases}	 \notag
       	\end{equation}
       }
       	in which $w_{1} = e^{-rm}(w^{*}-M^{*})+M^{*}$, the initial wealth $w \in [0,w^{*})$, where $ w^{0}$ is the unique zero in $ (0,M^{*})$ of the following equations 
       	\begin{equation}
       	1-\biggl(1-\frac{w}{M^{*}}\biggr)^{\frac{\lambda }{r+\:_{m\shortmid}\textrm{M}_{n}}}= \biggl ( \frac{w}{M^{*}} \biggr )^{\frac{\lambda }{r}}. \notag
       	\end{equation}
     The related optimal purchasing strategy is
       	\begin{itemize}
       	\item If wealth $ w $ is less than $ w ^{0}$, then the policyholder purchases $m$-year deferred $n$-year term pure endowment of $ f-w$;
       	\item If wealth $ w$ is greater than or equal to $ w^{0}$, then the policyholder doesn't purchase until the wealth reaches $ w^{*}$, at which point, it's optimal to buy $m$-year deferred $n$-year term pure endowment of $ f-M^{*} =\frac{rf}{r+\:_{m\shortmid}\textrm{M}_{n}} $.
       		\end{itemize}
       \end{proposition}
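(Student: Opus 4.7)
The plan is to mirror the proof of Proposition~\ref{P2} throughout, substituting the pure-endowment payoff structure for the term-life one. I would begin by decomposing
\[
\: _{m\shortmid}\Phi_n(w) = \: _{m\shortmid}\Phi^{1}_n(w) + \: _{m\shortmid}\Phi^{2}_n(w),
\]
where $\: _{m\shortmid}\Phi^{1}_n$ corresponds to the event $\{\tau \leq m+n\}$ (no endowment is paid, so wealth must reach $f$ by pure interest accumulation) and $\: _{m\shortmid}\Phi^{2}_n$ to $\{\tau > m+n\}$ (the endowment $D(\tau-)$ is added to wealth at retirement). The first term $\: _{m\shortmid}\Phi^{1}_n$ is computed directly by the first-passage argument used in Step~1 of the proof of Proposition~\ref{P2}, giving a power of $(w-M^{*})/(w^{*}-M^{*})$ minus $e^{-\lambda(m+n)}$ whenever $w$ is large enough that the optimal trajectory reaches $w^{*}$ before $m+n$, and zero otherwise.

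For $\: _{m\shortmid}\Phi^{2}_n$ I would state and prove a verification lemma structurally identical to Lemma~\ref{L3}. The associated variational inequality on $(0,w^{*})$ is
\[
\lambda F = rwF_w + \bigl\{\max\bigl(\lambda e^{-\lambda(m+n)} - \: _{m\shortmid}\textrm{M}_n(f-w)F_w,\;0\bigr)\bigr\}\mathbbm{1}_{\{w<M^{*}\}} + \bigl(\lambda e^{-\lambda(m+n)} - rM^{*}F_w\bigr)\mathbbm{1}_{\{w\geq M^{*}\}},
\]
with boundary conditions $F(M^{*}) = e^{-\lambda(m+n)}$ and $F(w^{*}) = e^{-\lambda(m+n)}$. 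The only change from the term-life setting is that the payoff coefficient $\lambda e^{-\lambda(m+n)}$ (the density weight of the survival event beyond $m+n$) replaces $\lambda(e^{-\lambda m}-e^{-\lambda(m+n)})$. The It\^o's formula argument applied to $e^{-\lambda t}F(W(t))$ stopped at $\tau_n^{a}\wedge\delta_{0}$, together with the monotone-convergence step, transfers verbatim.

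With the verification lemma in hand I would split into the two cases on $\lambda$ versus $r$. For $\lambda \le r$, following Lemma~\ref{L4}, I would construct the candidate as the homogeneous solution $(w/M^{*})^{\lambda/r}e^{-\lambda(m+n)}$ on $(0,M^{*})$ glued to the affine-corrected power function on $(M^{*},w^{*})$, and check that the max in the variational inequality is attained at $D=0$ on $(0,M^{*})$. For $\lambda > r$, following Lemma~\ref{L5}, there is a crossover $w^{0}\in(0,M^{*})$ characterized by the transcendental equation in the statement: below $w^{0}$ the policyholder buys $f-w$ of endowment immediately, above $w^{0}$ they hold, and on $[M^{*},w^{*})$ they maintain coverage $f-M^{*}$. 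Gluing the branches and finally adding $\: _{m\shortmid}\Phi^{1}_n$ reproduces the three- or four-branch formulas.

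The main obstacle, exactly as in Proposition~\ref{P2}, is verifying the inequality $\lambda e^{-\lambda(m+n)} - \: _{m\shortmid}\textrm{M}_n(f-w)F_w \leq 0$ on $(0,M^{*})$ in case $(1)$ and on $[w^{0},M^{*})$ in case $(2)$. After substituting the explicit form and setting $y=w/M^{*}$, this reduces to an algebraic inequality in $y\in(0,1)$ of exactly the type solved in the proof of Lemma~\ref{L4}, and hence follows from the argument in \cite{Bayraktar2014}. Uniqueness of $w^{0}$ in $(0,M^{*})$ is immediate from the strict monotonicity of the two power functions in the defining equation together with the intermediate value theorem, and the standing premium bound $\tfrac{\: _{m\shortmid}\textrm{M}_n+r\: _{m\shortmid}\textrm{M}_n}{r+\: _{m\shortmid}\textrm{M}_n+r\: _{m\shortmid}\textrm{M}_n} < e^{-rm}$ imposed at the start of Section~\ref{S3} ensures $w_{1}<w^{*}$, so that the four-branch decomposition of case $(2)$ is non-degenerate.
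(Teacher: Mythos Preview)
Your overall plan of mirroring the proof of Proposition~\ref{P2} is correct and is exactly what the paper intends: at the start of Section~\ref{S3} it says the methods are identical to Section~\ref{S2} and gives no separate argument. However, your decomposition is not the one that makes the parallel work, and this shows up in the boundary conditions of your verification lemma.

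You split at $\tau = m+n$ (the payout threshold), but the split in Section~\ref{S2} is at time $m$ (when premium payments cease), and that is the split you need here as well. Concretely, the analog of $\: _{m\shortmid}\Phi^{2,3}_n$ should cover $\{\tau \geq m\}$, combining the payout event $\{\tau > m+n\}$ with the no-payout interval $\{m \leq \tau \leq m+n\}$; this is the piece handled by the verification lemma. Its boundary conditions are then $F(M^*) = e^{-\lambda(m+n)}$ (the payout probability, which you have right) and $F(w^*) = e^{-\lambda m}$ (the probability of $\{\tau \geq m\}$), not $F(w^*) = e^{-\lambda(m+n)}$. With your stated boundary values the ODE $\lambda(F - e^{-\lambda(m+n)}) = r(w-M^*)F_w$ on $[M^*, w^*]$ forces $F \equiv e^{-\lambda(m+n)}$ there, so all of the nontrivial structure on $[M^*, w^*]$ would have to come from your $\: _{m\shortmid}\Phi^1_n$ for $\{\tau \leq m+n\}$. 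But that event straddles both the premium phase $[0,m]$ and the post-premium phase $(m,m+n]$, over which wealth evolves under different dynamics, so it is \emph{not} a single first-passage-to-$w^*$ computation as you claim. The component that \emph{is} a clean single-phase first-passage is $\{\tau < m\}$, exactly as in Step~1 of the proof of Proposition~\ref{P2}; that is what produces the additive correction $\bigl(\tfrac{w-M^*}{w^*-M^*}\bigr)^{\lambda/r} - e^{-\lambda m}$ on $[w^0, w^*)$ and explains why the threshold is $w^0 = e^{-rm}(w^*-M^*)+M^*$ rather than the value corresponding to $t_* = m+n$.

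Once you move the split from $m+n$ to $m$, everything else you wrote (the form of the variational inequality with payoff coefficient $\lambda e^{-\lambda(m+n)}$ replacing $\lambda(e^{-\lambda m}-e^{-\lambda(m+n)})$, the two cases $\lambda \lessgtr r$, the crossover $w^0$ in case~(2), and the algebraic reduction to the inequality handled in \cite{Bayraktar2014}) goes through verbatim.
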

       \begin{corollary}\label{C8}
       	{\rm(1)}. When $ m=0 , n \geqslant 0$ and if $ \lambda \leqslant r $, then the maximum probability of achieving the financial goal $ f $ before ruining is given by 
       	\begin{equation}
       	_{0\shortmid}\Phi_{n}(w)=\begin{cases}
       	\displaystyle \biggl ( \frac{w}{M^{*}} \biggr )^{\frac{\lambda }{r}}e^{-\lambda n},&0 \leqslant w< M^{*},\\[3mm]
       	\displaystyle e^{-\lambda n}+\biggl(1-e^{-\lambda n}\biggr)\left ( \frac{w-M^{*}}{w^{*}-M^{*}} \right )^{\frac{\lambda }{r}},&M^{*} \leqslant w< w^{*},
       	\end{cases}	\notag
       	\end{equation}
       	in which $M^{*}= \frac{_{0\shortmid}\textrm{M}_{n}f}{r+\:_{0\shortmid}\textrm{M}_{n}}$,  $w^{*}=\frac{(r+\:_{0\shortmid}\textrm{M}_{n}+r\:_{0\shortmid}\textrm{M}_{n})f}{(r+\:_{0\shortmid}\textrm{M}_{n})(r+1)} $, for initial wealth $ w \in [0,w^{*}) $.\\
       	\indent The related optimal $n$-year term pure endowment purchasing strategy is not to purchase until wealth reaches $ w^{*} $, at which point, it's optimal to buy $n$-year term pure endowment of $ f-M^{*} =\frac{rf}{r+\:_{0\shortmid}\textrm{M}_{n}}$.\\
       	{\rm(2)}. When $ m=0,  n \geqslant 0$ and if $ \lambda > r$, then the maximum probability of achieving the financial goal $ f $ before ruining is given by 
       	\begin{equation}
       	_{0\shortmid}\Phi_{n}(w)=\begin{cases}
       	\displaystyle \biggl[1-\biggl(1-\frac{w}{M^{*}}\biggr)^{\frac{\lambda }{r+\:_{0\shortmid}\textrm{M}_{n}}}\biggr]e^{-\lambda n},&0 \leqslant w < w^{0},\\[3mm]
       	\displaystyle \biggl ( \frac{w}{M^{*}} \biggr )^{\frac{\lambda }{r}}e^{-\lambda n},&w^{0} \leqslant w< M^{*},\\[3mm]
       	\displaystyle e^{-\lambda n}+\biggl(1-e^{-\lambda n}\biggr)\left ( \frac{w-M^{*}}{w^{*}-M^{*}} \right )^{\frac{\lambda }{r}},&M^{*} \leqslant w< w^{*},
       	\end{cases}	\notag
       	\end{equation}
       	for initial wealth $ w \in [0,w^{*}) $, where $ w^{0}$ is the unique zero in $ (0,M^{*})$ of the following equations 
       	\begin{equation}
       	1-\biggl(1-\frac{w}{M^{*}}\biggr)^{\frac{\lambda }{r+\:_{0\shortmid}\textrm{M}_{n}}}=\biggl ( \frac{w}{M^{*}} \biggr )^{\frac{\lambda }{r}}. \notag
       	\end{equation}
    The related optimal purchasing strategy is 
       	\begin{itemize}
       \item If wealth $ w $ is less than $ w ^{0}$, then the policyholder purchases $n$-year term pure endowment of $ f-w$;
       	\item If wealth $ w$ is greater than or equal to $ w^{0}$, then the policyholder doesn't purchase until the wealth reaches $ w^{*}$, at which point, it's optimal to buy $n$-year term pure endowment of $ f-M^{*} =\frac{rf}{r+\:_{0\shortmid}\textrm{M}_{n}} $.
       \end{itemize}
       \end{corollary}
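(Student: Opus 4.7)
The plan is to derive Corollary \ref{C8} as a direct specialization of Proposition \ref{P4} to the case $m=0$, since the setup for $n$-year term pure endowment is exactly the $m=0$ instance of the $m$-year deferred $n$-year term pure endowment problem. Since $m=0$, we have $P(\tau > m) = 1$ almost surely, which is the analogous simplification to the one used when passing from Proposition \ref{P1} to Corollary \ref{C1}, and from Proposition \ref{P2} to Corollary \ref{C4}.

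First I would substitute $m=0$ throughout the formulas in Proposition \ref{P4}. This gives $e^{-\lambda m} = 1$, $e^{-\lambda(m+n)} = e^{-\lambda n}$, and $e^{-\lambda m} - e^{-\lambda(m+n)} = 1 - e^{-\lambda n}$. The premium becomes $\:_{0\shortmid}\textrm{M}_{n}$, so $M^{*} = \frac{\:_{0\shortmid}\textrm{M}_{n} f}{r + \:_{0\shortmid}\textrm{M}_{n}}$ and $w^{*} = \frac{(r + \:_{0\shortmid}\textrm{M}_{n} + r\:_{0\shortmid}\textrm{M}_{n}) f}{(r+\:_{0\shortmid}\textrm{M}_{n})(r+1)}$, matching the statement.

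Next I would observe that the piecewise structure collapses. Because $e^{-rm} = e^{0} = 1$, the thresholds $w^{0} = e^{-rm}(w^{*}-M^{*}) + M^{*}$ in part (1) and $w_{1} = e^{-rm}(w^{*}-M^{*}) + M^{*}$ in part (2) both equal $w^{*}$. Consequently the last sub-interval $[w^{0}, w^{*})$ of Proposition \ref{P4}(1), respectively $[w_{1}, w^{*})$ of Proposition \ref{P4}(2), becomes empty, and the three/four branch formulas reduce to two/three branch formulas of Corollary \ref{C8}. In the remaining branches, the coefficient $(e^{-\lambda m} - e^{-\lambda(m+n)})$ simply becomes $(1 - e^{-\lambda n})$, and the constant $e^{-\lambda (m+n)}$ becomes $e^{-\lambda n}$, yielding the displayed expressions. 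The equation defining $w^{0}$ in case (2) is obtained from the corresponding equation in Proposition \ref{P4}(2) with $\:_{m\shortmid}\textrm{M}_{n}$ replaced by $\:_{0\shortmid}\textrm{M}_{n}$; uniqueness of the zero in $(0, M^{*})$ transfers verbatim from Proposition \ref{P4}.

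Finally, the optimal strategy is inherited directly from Proposition \ref{P4}: in the regime $\lambda \leqslant r$ one postpones all purchases until wealth reaches $w^{*}$ and then buys $n$-year term pure endowment of $f - M^{*}$; in the regime $\lambda > r$ one buys $n$-year term pure endowment of $f-w$ while $w < w^{0}$, and switches to the wait-and-purchase-at-$w^{*}$ strategy once $w \geqslant w^{0}$. I do not anticipate a genuine obstacle here; the only point that requires care is to verify that the boundary constraint of Proposition \ref{P4}, namely $\frac{\:_{m\shortmid}\textrm{M}_{n} + r\:_{m\shortmid}\textrm{M}_{n}}{r + \:_{m\shortmid}\textrm{M}_{n} + r\:_{m\shortmid}\textrm{M}_{n}} < e^{-rm}$, becomes trivial when $m=0$ since the left side is strictly less than $1 = e^{0}$, so no additional premium restriction is imposed in the corollary.
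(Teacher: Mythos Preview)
Your proposal is correct and matches the paper's approach: Corollary \ref{C8} is presented immediately after Proposition \ref{P4} with no separate proof, as it is simply the $m=0$ specialization, exactly as you describe. Your observation that $w^{0}$ (in part (1)) and $w_{1}$ (in part (2)) collapse to $w^{*}$ when $e^{-rm}=1$, so that the last branch becomes empty, is precisely the mechanism behind the reduction, and your check that the premium restriction trivializes is an appropriate extra care.
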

       
       \section{Purchasing  $n$-year term life insurance in personal financial planning under the stochastic framework}\label{S4}
       Assume the policyholder has an account to achieve a financial goal, and this account include the consuming, incoming and risky investment. Specifically, he/she invest in a financial market which including two parts, one is investing risk-free asset to earn interest, we denote the force of interest is $r$, another is investing in a risky market whose process $ S=\left\{S(t)\right\}_{t \geqslant 0}$ follows geometric Brownian motion. The individual buys $n$-year term life insurance via a premium paid continuously at the rate of $H_{n}$ per dollar of insurance. In this section, the definitions of $\delta_{d},\delta_{0},W(t), D(t) $ are same as the previous sections.
       \subsection{$n$-year term life insurance purchased by a continuously paid premium under the model \uppercase\expandafter{\romannumeral1}}
       \indent  The specific components of model \uppercase\expandafter{\romannumeral1} are as follows
       \begin{equation}
       \begin{cases}
       dS(t)=\mu S(t)dt+\sigma S(t)dB(t),\\
       dY(t)=aW(t)dt+ldB(t),\\
       c(W)=cW(t),\notag
       \end{cases} 
       \end{equation}
       in which $ Y=\left\{Y(t)\right\}_{t\geqslant0}$ represents the price process of income, $ c=\left\{cW(t)\right\}_{t\geqslant0}$ represents the consumption rate. $ B=\left\{B(t)\right\}_{t\geqslant0}$ is a standard Brownian motion on a filtered probability space $ (\Omega ,\mathcal{F}, \mathbb{F}=\left\{\mathcal{F}_{t}\right\}_{t\geqslant 0},\mathbb{P})$, $\mu $, $ \sigma$, $ a$, $ l$, $ c$ are all constants, in particular, we assume $ \mu >r$ and $ c<r+a$.
       
       \indent Denote $ \pi_{t} $ is the amount invested in the risky market at time $ t\geqslant0$. An investment strategy $\Pi=\left\{ \pi _{t}\right\}_{t\geqslant 0}  $ is admissible if it is an $ \mathcal{F}$-progressively measurable process satisfying $ \int_{0}^{t}\pi _{s}^{2}ds<\infty $. An admissible insurance strategy $ \textbf{D}=\left \{ D(t) \right \}_{t\geqslant 0}$ is any non-negative and $ \mathcal{F}$-progressively measurable process, but for all $ t \geqslant0$, the probability of $ W(t) \geqslant 0$ doesn't equal one because of the negative drift term $ H_{n}D(t)$. Therefore, the wealth follows the dynamics
       {\small
       \begin{equation}
       \begin{cases}
       dW(t)=\bigl [ (r+a-c)W(t)+(\mu -r)\pi _{t}-H_{n}D(t)\mathbbm{1}_{\left \{ t\leqslant n \right \}} \bigr ]dt+(\sigma \pi _{t}+l)dB_{t},&\ 0\leqslant t<\delta _{d},\\W(\delta _{d})=W(\delta _{d}-)+D(\delta _{d}-)\mathbbm{1}_{\left \{ \delta _{d}\leqslant n \right \}}.
       \end{cases} \notag
       \end{equation}
    }
       \indent The maximum probability of achieving the financial goal $f$  
       {\small 
       \begin{equation}
       \begin{split}
       \Phi_{n}(w)&=\sup_{\textbf{D}}  \biggl[\textbf{P}^{w}(W(\delta _{d} \wedge \delta _{0}) \geqslant f \mid  \delta_{d} \leqslant n)P( \delta_{d} \leqslant n)\\&\quad +\textbf{P}^{w}(W(\delta _{d} \wedge \delta _{0}) \geqslant f \mid \delta_{d} > n)P(\delta_{d} > n)  \biggr]\\
       &:=\sup_{\textbf{D}}  \bigl(\textbf{P}^{w}_{1}+ \textbf{P}^{w}_{2} \bigr),
       \end{split}		\notag
       \end{equation}    
    }
       in which $\textbf{P}^{w} $ denotes conditional probability given $ W(0)=w\geqslant0$.
       \begin{remark}
       	{\rm(1)}. Firstly, we define``quasi-ideal value" and``ideal value". If wealth equals $ w^{q_{0}}=\frac{H_{n}f}{r+a-c+H_{n}}$ which is solved from the equation $ (r+a-c)w^{q_{0}}=H_{n}(f-w^{q_{0}}) $, then it's called the ``quasi-ideal value". It means that if wealth reaches  $ \frac{H_{n}f}{r+a-c+H_{n}}$, then the policyholder purchases $n$-year term life insurance of $ f- \frac{H_{n}f}{r+a-c+H_{n}} $ via a premium paid continuously, and if he/she dies within $n$ years after purchasing insurance, then the individual's total death benefit becomes $ f$.  If the policyholder dies after $n$ years of purchasing insurance, he/she may not receive the death benefit, so in this case, he/she can't achieve the financial goal $ f$. Thus, we call$\frac{H_{n}f}{r+a-c+H_{n}} $ is the ``quasi-ideal value".\\
       	{\rm(2)}. Supposing the ``ideal value" is $ w^{i_{0}} $, if wealth equals $ w^{i_{0}} $, then it's optimal for the policyholder to purchase $n$-year term life insurance of $ f- \frac{H_{n}f}{r+a-c+H_{n}} $, then whether or not he/she can get the death benefit, he/she will achieve the financial goal $ f$. Deriving from our setting, we obtain $ w^{i_{0}} $ by following equation 
       	\begin{equation}
       	(r+a-c)w^{i_{0}}-H_{n}\biggl(f-\frac{H_{n}f}{r+a-c+H_{n}}\biggr)=f-w^{i_{0}} ,\notag
       	\end{equation}
       	thus, we get $ w^{i_{0}}=\displaystyle \frac{\bigl(r+a-c+H_{n}+(r+a-c)H_{n}\bigr)f}{(r+a-c+H_{n})(r+a-c+1)}$.
       \end{remark}
       
       To motivate the verification lemma for this problem, we first denote $ w_{b_{0}}$ which is called the buy level, and it's value will be given in subsequent discussions. We give the control equation about $ \Phi_{n} $ as follows: 
       {\small
       \begin{align}
       \lambda \Bigl[ \Phi_{n} -&(1-e^{-\lambda n})\mathbbm{1}_{\left\{ w\geqslant w_{b_{0}}\right\}} \Bigr]
       =\Bigl[ (r+a-c)w-H_{n}\Bigl(f-w\mathbbm{1}_{\left\{ w<\frac{H_{n}f}{r+a-c+H_{n}}+\frac{(\mu -r)l}{\sigma (r+a-c+H_{n})}\right\}}\notag\\-&\frac{H_{n}f}{r+a-c+H_{n}}\mathbbm{1}_{\left\{ w\geqslant \frac{H_{n}f}{r+a-c+H_{n}}+\frac{(\mu -r)l}{\sigma (r+a-c)} \right\}}\Bigr)\mathbbm{1}_{\left\{ w_{b_{0}}\leqslant w\leqslant w^{i_{0}}\right\}} \Bigr](\Phi_{n}) _{w}\notag\\+&\max_{\pi }\Bigl[ (\mu -r)\pi (\Phi_{n}) _{w}+\frac{1}{2}(\sigma \pi +l)^{2}(\Phi_{n}) _{ww} \Bigr] \label{e2}
       \end{align}
    }
       \textbf{Step1}. We simplify the last term in above control equation. 
       \begin{align}\label{e3}
       &\max_{\pi }\Bigl[ (\mu -r)\pi (\Phi_{n}) _{w}+\frac{1}{2}(\sigma \pi +l)^{2}(\Phi_{n})_{ww} \Bigr]\\=&
       \max_{\pi }\Bigl[ (\mu -r)\pi(\Phi_{n}) _{w}+\frac{1}{2}(\sigma^{2} \pi^{2} +l^{2}+2\sigma l \pi)(\Phi_{n}) _{ww} \Bigr]\notag\\
       =&\max_{\pi }\Bigl[\frac{1}{2}\sigma^{2}(\Phi_{n}) _{ww}\pi ^{2}+[(\mu -r)(\Phi_{n})_{w}+\sigma l(\Phi_{n})_{ww}]\pi +\frac{1}{2}l^{2}(\Phi_{n}) _{ww} \Bigr]\notag\\
       =&-\frac{1}{2}\Bigl(\frac{\mu -r}{\sigma }\Bigr)^{2}\frac{(\Phi_{n})_{w}^{2}}{(\Phi_{n}) _{ww}}-\frac{(\mu -r)l}{\sigma }(\Phi_{n})_{w}\notag
       \end{align}
       \textbf{Step2}. We simplify the above control equation into three parts.\\
       \noindent (\textbf{\romannumeral1}). If $ w<w_{b_{0}}$ (it can be seen from the solution of the latter equation that in fact $\frac{(\mu -r)l}{\sigma (r+a-c)} \leqslant w <w_{b_{0}} $), then the control equation can be replaced with the following equivalent expression
       \begin{equation}
       \lambda\Phi_{n}=\Bigl[(r+a-c)w-\frac{(\mu -r)l}{\sigma }\Bigr](\Phi_{n}) _{w}-\frac{1}{2}\Bigl(\frac{\mu -r}{\sigma }\Bigr)^{2}\frac{(\Phi_{n})_{w}^{2}}{(\Phi_{n})_{ww}}. \label{e4}
       \end{equation}   
       \noindent (\textbf{\romannumeral2}). If $ w_{b_{0}} \leqslant w<\frac{H_{n}f}{r+a-c+H_{n}}+\frac{(\mu -r)l}{\sigma (r+a-c+H_{n})}$, then the control equation can be replaced with the following equivalent expression
       {\small 
       \begin{equation}
       \lambda\Bigl( \Phi_{n} -(1-e^{-\lambda n})\Bigr)=\Bigl[(r+a-c+H_{n})w-(H_{n}f+\frac{(\mu -r)l}{\sigma })\Bigr](\Phi_{n}) _{w}-\frac{1}{2}\Bigl(\frac{\mu -r}{\sigma }\Bigr)^{2}\frac{(\Phi_{n}) _{w}^{2}}{(\Phi_{n})_{ww}}. \label{e5}
       \end{equation}
    }
       \noindent (\textbf{\romannumeral3}). If $ \frac{H_{n}f}{r+a-c+H_{n}}+\frac{(\mu -r)l}{\sigma (r+a-c)} \leqslant w<w^{i_{0}}$, then the control equation can be replaced with the following equivalent expression
       {\small 
       \begin{equation}
       \lambda\Bigl( \Phi_{n} -(1-e^{-\lambda n})\Bigr)=\Bigl[(r+a-c)w-(\frac{(r+a-c)H_{n}f}{r+a-c+H_{n}}+\frac{(\mu -r)l}{\sigma })\Bigr](\Phi_{n}) _{w}-\frac{1}{2}\Bigl(\frac{\mu -r}{\sigma }\Bigr)^{2}\frac{(\Phi_{n})_{w}^{2}}{(\Phi_{n})_{ww}}. \label{e6}
       \end{equation}
    }
       These observations lead to the following verification Lemma \ref{l1}. 
       \begin{lemma}\label{l1}
       	Define a differential operator  $ \mathscr{L}_{1}^{\pi , D}$: 
       	{\small 
       	\begin{equation}
       	\mathscr{L}_{1}^{\pi,D}\tilde{F}=\bigl[(r+a-c)w+(\mu-r)\pi-H_{n}D\bigr]\tilde{F} _{w}+\frac{1}{2}(\sigma \pi +l)^{2}\tilde{F} _{ww}-\lambda \bigl[\tilde{F}-\lambda(1-e^{-\lambda n}) \mathbbm{1}_{\left \{ w+D\geqslant f \right \}}\bigr]. \notag
       	\end{equation}
       }
       	Let $ F=F(w)$ be a function that is non-decreasing, continuous, and piecewise differentiable on $ [0,w^{i_{0}}) $, except that $ F $ might not be differentiable at $ 0$, $ w_{0}$, $ w_{1}$ and $ w_{2}$. Suppose $ F$ satisfies $ F _{w} >0$, $ F _{ww}<0$ on $ (0,w^{i_{0}})\setminus \bigl((0,w_{0}) \cup (w_{1},w_{2}) \bigr)$. If $ F$ satisfies the following boundary-value problem on $ [0,w^{i_{0}}]$:
       	\begin{equation}
       	\begin{cases}
       	\max_{\pi,D\geqslant0 } \mathscr{L}_{1}^{\pi,D}F(w)=0 ,\\
       	F(0)=F(w_{0})=0, \quad F(w_{1})=F(w_{2})=1-e^{-\lambda n}, \quad F(w^{i_{0}})=1.
       	\end{cases}	 \notag
       	\end{equation}
       	Then, on $ [0,w^{i_{0}}]$, 
       	\begin{equation}
       	\Phi_{n}=F. \notag
       	\end{equation}
       	The optimal investment amount in risk market is 
       	\begin{equation}
       	\pi ^{*}_{t}=-\frac{\mu -r}{\sigma }^{2}\frac{(\Phi_{n}) _{w}(w^{*}_{t})}{(\Phi_{n})_{ww}(w^{*}_{t})}, \notag
       	\end{equation}
       	in which $ w^{*}_{t} $ is the optimal control wealth value at time $ t$.
       	The optimal $n$-year term life insurance purchase amount is
       	\begin{equation}
       	D ^{*}_{t}=\Bigl(f-w\mathbbm{1}_{\left\{ w<w_{1 }\right\}}-\frac{H_{n}f}{r+a-c+H_{n}}\mathbbm{1}_{\left\{ w\geqslant w_{2} \right\}}\Bigr)\mathbbm{1}_{\left\{ w_{b_{0}}\leqslant w\leqslant w^{i_{0}}\right\}} , \notag
       	\end{equation}
       	in which 
       	{\footnotesize
       	\begin{equation}
       	w_{b_{0}}=\rm{inf}\left\{w\geqslant 0: \lambda(1-e^{-\lambda n})-H_{n}\Bigl(f-w\mathbbm{1}_{\left\{ w<w_{1 }\right\}}-\frac{H_{n}f}{r+a-c+H_{n}}\mathbbm{1}_{\left\{ w\geqslant w_{2} \right\}}\Bigr)F_{w}(w) \geqslant0 \right\} \wedge f, \notag
       	\end{equation}
       }
   {\small
       	\begin{equation}
       	w_{0}=\frac{(\mu -r)l}{\sigma (r+a-c)}, w_{1}=\frac{H_{n}f}{r+a-c+H_{n}}+\frac{(\mu -r)l}{\sigma (r+a-c+H_{n})}, \notag
       	\end{equation}
       	\begin{equation}
       	w_{2}=\frac{H_{n}f}{r+a-c+H_{n}}+\frac{(\mu -r)l}{\sigma (r+a-c)}, \notag 
       	\end{equation}
       }
       \end{lemma}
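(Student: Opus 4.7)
The plan is to adapt the verification machinery used in Lemmas \ref{L1}--\ref{L3} to the stochastic setting. First, I would recast $\Phi _{n}$ in integrated form by conditioning on the exponential lifetime: since $\delta _{d}\sim \mathrm{Exp}(\lambda )$ is independent of $\mathbb{F}$, Fubini gives
\begin{equation}
\Phi _{n}(w)=\sup_{\pi,\textbf{D}}\textbf{E}^{w}\!\left[\int _{0}^{\delta _{0}}\lambda e^{-\lambda t}\Bigl((1-e^{-\lambda n})\mathbbm{1}_{\{W(t)+D(t)\geqslant f\}}+e^{-\lambda n}\mathbbm{1}_{\{W(t)\geqslant f\}}\Bigr)dt\right], \notag
\end{equation}
which matches the running reward encoded in $\mathscr{L}_{1}^{\pi ,D}$.

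To show $F\geqslant \Phi _{n}$, I would fix an arbitrary admissible pair $(\pi ,\textbf{D})$, localize with $\tau _{k}=\inf\{t:|W(t)|\vee \int _{0}^{t}\pi _{s}^{2}ds\geqslant k\}\wedge \delta _{0}\wedge k$, and apply It\^o's formula to $e^{-\lambda t}F(W(t))$. Because $F$ is only piecewise $C^{2}$, I would first mollify $F$ (preserving monotonicity and the one-sided concavity bound), apply classical It\^o to the mollified function, and then let the smoothing parameter vanish; the HJB inequality $\mathscr{L}_{1}^{\pi ,D}F\leqslant 0$ then yields
\begin{equation}
\textbf{E}^{w}\!\left[e^{-\lambda \tau _{k}}F(W(\tau _{k}))\right]\leqslant F(w)-\textbf{E}^{w}\!\left[\int _{0}^{\tau _{k}}\lambda e^{-\lambda s}(1-e^{-\lambda n})\mathbbm{1}_{\{W(s)+D(s)\geqslant f\}}ds\right]. \notag
\end{equation}
Boundedness of $F$ on $[0,w^{i_{0}}]$, the boundary data $F(0)=0$ (ruin) and $F(w^{i_{0}})=1$ (goal already reached), and the dominated/monotone convergence theorem let me pass $k\to \infty $ and conclude. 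The reverse inequality follows by plugging in the candidate feedback $(\pi ^{*},D^{*})$: by construction these are the pointwise maximizers in the HJB, so $\mathscr{L}_{1}^{\pi ^{*},D^{*}}F\equiv 0$ on the smooth region, every step above becomes an equality, and the closed form for $\pi ^{*}$ is obtained by solving the quadratic (\ref{e3}).

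The main obstacle will be the regularity bookkeeping at the four junction points $\{0,w_{0},w_{1},w_{2}\}$ and on the non-concave subregions $(0,w_{0})$ and $(w_{1},w_{2})$. At the junctions the piecewise character of $F$ forces me to invoke the Meyer--It\^o local-time formula (or to mollify), and the local-time contributions must be shown to vanish; I expect this to follow from continuity of $F_{w}$ across the junctions enforced by the smooth-pasting boundary data. On the non-concave subregions the unconstrained quadratic optimization in $\pi $ used in display (\ref{e3}) is not directly justified, so I would argue separately, using the boundary data $F\equiv 0$ on $[0,w_{0}]$ and $F\equiv 1-e^{-\lambda n}$ on $[w_{1},w_{2}]$, that any admissible trajectory contributes zero extra value while in these flat regions and that the supremum is still attained in the natural extended sense, so the verification identity is preserved throughout $[0,w^{i_{0}}]$.
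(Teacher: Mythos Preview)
Your plan is essentially the same verification argument the paper runs: recast $\Phi_n$ as an integrated running-reward problem, apply It\^o to $e^{-\lambda t}F(W_t)$ with a localizing sequence, use $\mathscr{L}_1^{\pi,D}F\leqslant 0$ to get $F\geqslant \Phi_n$, and close with the candidate feedback controls. A few points of comparison. First, your integrated representation carries the extra term $e^{-\lambda n}\mathbbm{1}_{\{W(t)\geqslant f\}}$, which appears neither in the paper's integral form nor in the operator $\mathscr{L}_1^{\pi,D}$ as stated; on $[0,w^{i_0}]$ one has $w^{i_0}<f$, so this term is identically zero and the $\delta_d>n$ contribution is instead absorbed into the boundary value $F(w^{i_0})=1$. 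Your sentence ``which matches the running reward encoded in $\mathscr{L}_1^{\pi,D}$'' is therefore not literally correct, though the subsequent display drops the extra term anyway, so no real damage is done. Second, for the regularity issues the paper does not mollify or invoke Meyer--It\^o at the junctions; instead it first imposes two auxiliary hypotheses ($F$ bounded below and $F_w(0)<\infty$), uses $F_w^2(w)\leqslant F_w^2(0)$ from concavity to make the stochastic integral a true martingale, and then removes the auxiliary hypotheses by the horizontal shift $F^{\varepsilon_n}(w)=F(w+\varepsilon_n)$, checking that $\mathscr{L}_1^{\pi,D}F^{\varepsilon_n}\leqslant 0$ follows from $\mathscr{L}_1^{\pi,D}F\leqslant 0$ together with $F_w\geqslant 0$. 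Third, for the reverse inequality the paper simply invokes the boundary data and cites \cite{WY2012} rather than running the explicit equality argument you outline. Your route is more careful on the regularity side (mollification plus local time, and a direct treatment of the flat regions), while the paper's is terser and leans on the shift trick and an external reference; both reach the same conclusion.
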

       \begin{proof}
       	\rm Rewrite the express for $ \Phi_{n}$ as follows: 
       	{\small 
       	\begin{equation}
       	\Phi_{n} (w)=\sup_{\pi , \textbf{D}}\textbf{E}^{w}\left [ \int_{0}^{\delta _{0}}\lambda e^{-\lambda t}(1-e^{-\lambda n})\mathbbm{1}_{\left \{W(t)+D(t)\geqslant f\right \}}dt \right ] . \notag
       	\end{equation} 
       }
       \indent 	Let F satisfies the conditions in the Lemma, firstly prove that the conclusions hold under the additional conditions: (1). $F$ has a lower bound, i.e. $ F \geqslant V>- \infty$; (2). $ F_{w}(0)<+\infty$.\\
      \indent Denote $ \delta^{a}_{n}:=\rm{inf} \left\{ s>0:\int_{0}^{s}\pi ^{2}_{t}dt \geqslant n\right\} $, let $ \delta_{n}=\delta _{0} \wedge \delta^{a}_{n}$, then applying the It\^o's formula for $ e^{-\lambda \tau _{n}}F(W_{\tau _{n}})$, we have
       	\begin{align}
       	e^{-\lambda \tau _{n}}F(W_{\tau _{n}})=F(w)+&\int_{0}^{\delta _{n}}e^{-\lambda t}F_{w}(W_{t})(\sigma \pi _{t}+l)dB_{t} \notag \\+&
       	\int_{0}^{\delta _{n}}e^{-\lambda t}[\mathscr{L}_{1}^{\pi,D}F(W_{t})-\lambda (1-e^{-\lambda n})\mathbbm{1}_{\left\{ W_{t}+D_{t}\geqslant f\right\}}]dt \notag \\+&
       	\sum_{0\leqslant t\leqslant \delta _{n}}^{}e^{-\lambda t}[F(W_{t})-F(W_{t-})]. \notag
       	\end{align}
       	Due to
       	\begin{equation}
       	F^{2}_{w}(w)\leqslant F^{2}_{w}(0),  w\geqslant 0 , \quad \textbf{E}^{w}\left [ \int_{0}^{\delta _{n}}e^{-2\lambda t}F^{2}_{w}(W_{t})(\sigma \pi _{t}+l)^{2}dt \right ]< \infty , \notag
       	\end{equation}
       	therefore
       	\begin{equation}
       	\textbf{E}^{w}\left [ \int_{0}^{\delta _{n}}e^{-\lambda t}F_{w}(W_{t})(\sigma \pi _{t}+l)dB_{t}\right ]=0. \notag
       	\end{equation}
       	Then we have 
       	\begin{equation}
       	\textbf{E}^{w}[e^{-\lambda \delta _{n}}V]\leqslant \textbf{E}^{w}[e^{-\lambda \delta _{n}}F(W_{\delta _{n}})]\leqslant 
       	F(w)-\textbf{E}^{w}[\int_{0}^{\delta _{n}}e^{-\lambda t}\lambda (1-e^{-\lambda n})\mathbbm{1}_{\left\{ W_{t}+D_{t}\geqslant f\right\}}dt]. \notag
       	\end{equation}
       	When $ n\rightarrow \infty $, then $ \delta _{n}\rightarrow \infty$, naturally 
       	\begin{equation}
       	F(w) \geqslant \textbf{E}^{w}\left [ \int_{0}^{\delta _{0}}\lambda e^{-\lambda t}(1-e^{-\lambda n})\mathbbm{1}_{\left \{W(t)+D(t)\geqslant f\right \}}dt \right ], \notag
       	\end{equation}
       	that means $ F \geqslant \Phi_{n}$.\\
       \indent	Next we prove that the conclusion $ F \geqslant \Phi_{n}$ still holds when the above additional conditions (1) and (2) are removed.
       	Suppose $ \varepsilon _{n}$ monotonically decreases to $ 0$, denote $F^{\varepsilon _{n}}(w)=F(w+\varepsilon _{n}) $, then
       	\begin{equation}
       	0 \geqslant \mathscr{L}_{1}^{\pi,D}F(w+\varepsilon _{n})=\mathscr{L}_{1}^{\pi,D}F^{\varepsilon _{n}}(w)+(r+a-c+1)\varepsilon _{n}F^{\varepsilon _{n}}_{w}(w), \notag
       	\end{equation}
       	due to $ (r+a-c+1)\varepsilon _{n}F^{\varepsilon _{n}}_{w}(w) \geqslant 0$, then $ =\mathscr{L}_{1}^{\pi,D}F^{\varepsilon _{n}}(w) \leqslant 0$, $ F^{\varepsilon _{n}}$ satisfied the conditions in Lemma, repeat the above steps to get $ F^{\varepsilon _{n}} \geqslant \Phi_{n}$.
       	From $ F(w)=\displaystyle \lim_{n \to \infty }F^{\varepsilon _{n}}(w) \geqslant \Phi_{n}(w)$, we can obtain 
       	\begin{equation}
       	F \geqslant \Phi_{n}. \notag
       	\end{equation}
       \indent 	By \cite{WY2012}, for $ F$ satisfies the boundary conditions, then $F = \Phi_{n} $.
       	Then, $\pi ^{*}_{t}$ can be obtained from Step1, and $ D^{*}_{t}$ naturally be given by the control equation
       	(\ref{e2}).
        \end{proof}
       \textbf{Step3}. We separately solve the above three control equations. Since the three equations are basically the same in form, we take the first equation as an example for a specific solution.
       \begin{lemma}\label{l2}
       	The solution to equation (\ref{e4}) is of the form as follows
       	\begin{equation}\label{e7}
       	\Phi_{n}(w)=D_{1}\Bigl(w-\frac{(\mu -r)l}{\sigma (r+a-c)}\Bigr)^{p}
       	\end{equation}
       	in which $ D_{1}$ is a constant to be determined and $ p$ is a known constant.
       \end{lemma}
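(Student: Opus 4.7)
\textbf{Proof proposal for Lemma \ref{l2}.}
The plan is to linearise the drift-like coefficient by a translation, exploit the resulting scale invariance of the reduced ODE, and then verify a monomial ansatz by direct substitution. First I would introduce the shifted variable
\begin{equation*}
y := w - \frac{(\mu - r)l}{\sigma(r+a-c)},
\end{equation*}
so that $(r+a-c)w - \frac{(\mu-r)l}{\sigma} = (r+a-c)y$ and equation (\ref{e4}) collapses to
\begin{equation*}
\lambda \Phi_n = (r+a-c)\, y\, (\Phi_n)_y - \frac{1}{2}\Bigl(\frac{\mu -r}{\sigma}\Bigr)^{\!2} \frac{(\Phi_n)_y^{\,2}}{(\Phi_n)_{yy}} .
\end{equation*}
Both sides are homogeneous of the same order in $(y, \Phi_n)$ under the rescaling $y \mapsto \kappa y$, which is exactly the structural feature that makes a power law the natural trial solution.

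Next I would substitute the ansatz $\Phi_n(w) = D_1 y^{p}$ with $y$ as above. Then $(\Phi_n)_y = D_1 p\, y^{p-1}$ and $(\Phi_n)_{yy} = D_1 p(p-1) y^{p-2}$, so the nonlinear ratio simplifies cleanly:
\begin{equation*}
\frac{(\Phi_n)_y^{\,2}}{(\Phi_n)_{yy}} = \frac{p}{p-1} \, D_1 y^{p}.
\end{equation*}
Substituting and dividing through by $D_1 y^{p}$ turns the ODE into a single algebraic relation
\begin{equation*}
\lambda = (r+a-c)\, p - \frac{1}{2}\Bigl(\frac{\mu -r}{\sigma}\Bigr)^{\!2} \frac{p}{p-1},
\end{equation*}
which, after clearing $p-1$, is a quadratic in $p$ whose coefficients depend only on $r, a, c, \mu, \sigma, \lambda$. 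Thus $p$ is determined as an explicit known constant, while $D_1$ survives as the free constant to be pinned down later by matching to the neighbouring regions via the boundary/continuity data listed in Lemma \ref{l1}.

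Finally I would select the admissible root. The verification Lemma \ref{l1} demands $F_w > 0$ and $F_{ww} < 0$ on this interval, which for the monomial $D_1 y^{p}$ with $D_1 > 0$ and $y > 0$ forces $p \in (0,1)$. I would check that the quadratic does possess exactly one root in $(0,1)$ under the standing assumption $\mu > r$ (the other root being either negative or exceeding $1$), and discard the inadmissible one; this also guarantees that the maximiser in (\ref{e3}) uses the correct sign of $(\Phi_n)_w / (\Phi_n)_{ww}$. The only genuinely subtle step, and the place I would expect resistance, is this root-selection and sign bookkeeping: the algebra reducing the ODE to a quadratic is straightforward, but one must confirm concavity and positivity of the derivative persist on the whole sub-interval where (\ref{e4}) is claimed to hold, so that substituting back into (\ref{e4}) is not a merely formal operation but actually produces the maximising choice and a legitimate candidate value function.
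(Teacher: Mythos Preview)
Your argument is correct and delivers exactly the stated form, but it is \emph{not} the route taken in the paper. The paper proceeds via convex duality: it applies the Legendre transform $\varphi(y)=\sup_w(\Phi_n(w)-wy)$, which converts the nonlinear term $-m\,(\Phi_n)_w^2/(\Phi_n)_{ww}$ into the linear expression $m y^2\varphi''(y)$ and turns (\ref{e4}) into the Euler-type linear ODE
\[
m y^{2}\varphi''(y)+(\lambda-A)y\varphi'(y)-\lambda\varphi(y)-By=0,\qquad A=r+a-c,\ B=\tfrac{(\mu-r)l}{\sigma},\ m=\tfrac{1}{2}\bigl(\tfrac{\mu-r}{\sigma}\bigr)^{2}.
\]
The characteristic equation $x^{2}+\tfrac{\lambda-A-m}{m}x-\tfrac{\lambda}{m}=0$ has one negative root $x_1$ and one root $x_2>1$; the latter is discarded because $\varphi'(y)\le 0$ is required, and inverting the transform with $p=\tfrac{x_1}{x_1-1}\in(0,1)$ yields $\Phi_n(w)=D_1(w-B/A)^{p}$.

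The two routes are consistent: under the involution $x\mapsto x/(x-1)$ your quadratic $Ap^{2}-(A+m+\lambda)p+\lambda=0$ is exactly the paper's characteristic equation, and your concavity-based selection $p\in(0,1)$ corresponds to the paper's choice of $x_1<0$. What your approach buys is brevity and elementarity---translation plus a scale-invariance observation plus a one-line verification---whereas the Legendre route is the standard HJB machinery that \emph{derives} the power law rather than guessing it, and in principle shows no other concave increasing solutions are missed. For the lemma as stated (existence of a solution of that form), your verification is entirely sufficient.
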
 
       \begin{proof}
       	\rm Let $  m= \frac{1}{2}\Bigl(\frac{\mu -r}{\sigma }\Bigr)^{2}$, $ A=r+a-c$, $ B= \frac{\mu -r}{\sigma }$. Then
       	\begin{equation}
       	\lambda \Phi_{n}=(Aw-B)(\Phi_{n})_{w}-m\frac{(\Phi_{n}) _{w}^{2}}{(\Phi_{n}) _{ww}}. \notag
       	\end{equation} 
       	\indent We consider the Legendre transform: 
       	\begin{equation}
       	\varphi(y)=\sup_{w}(\Phi_{n}(w)-wy) .\notag
       	\end{equation}
       	Then, 
       	\begin{equation}
       	\Phi_{n}(w)=\inf_{y}(\varphi (y)+wy)=\varphi (y)-y \varphi '(y), \notag
       	\end{equation}
       	\begin{equation}
       	\varphi '(y)=-w,\quad \varphi ''(y)=-\frac{1}{\Phi_{n}''(w)}.\notag
       	\end{equation}
       	Therefore, 
       	\begin{equation}
       	\lambda [\varphi(y)--y \varphi '(y)]=(-A\varphi '(y)-B)y+my^{2} \varphi ''(y). \notag
       	\end{equation}
       	Equivalently, we obtain
       	\begin{equation}\label{e8}
       	my^{2} \varphi ''(y)+(\lambda-A)y\varphi '(y)-\lambda \varphi(y)-By=0.
       	\end{equation}
       \indent	Firstly, we solve the Euler equation as follows
       	\begin{equation}
       	y^{2} \varphi ''(y)+\frac{(\lambda-A)}{m}y\varphi '(y)-\frac{\lambda}{m} \varphi(y)=0. \notag
       	\end{equation}
       	The corresponding characteristic equation is
       	\begin{equation}
       	x^{2}+\frac{\lambda -A-m}{m}x-\frac{\lambda }{m}=0.\notag
       	\end{equation}
       	Solve the above characteristic equation to get the characteristic root as follows
       	\begin{equation}
       	x_{1}=\frac{A-\lambda +m-\sqrt{(A-\lambda +m)^{2}+4\lambda m}}{2m}<0, \notag
       	\end{equation}
       	\begin{equation}
       	x_{2}=\frac{A-\lambda +m+\sqrt{(A-\lambda +m)^{2}+4\lambda m}}{2m}>1  , \notag	
       	\end{equation}
       	So the general solution of this Euler equation is
       	\begin{equation}
       	\tilde{\varphi }(y)=D_{1}y^{x_{1}}+D_{2}y^{x_{2}}, \notag
       	\end{equation}
       	in which $ D_{1}$ is a constant and $ D_{2}=0$ because $ \varphi '(y) \leqslant 0$ holds for any $ y$.\\
       	A particular solution to the original equation (\ref{e8}) is obtained as follows
       	\begin{equation}
       	\varphi^{*}(y)=\tilde{D} y^{x_{1}}-\frac{B}{A}y .\notag
       	\end{equation}
       	in which $ \tilde{D}$ is a constant.\\
       	So we obtain the solution of original equation (\ref{e8})
       	\begin{equation}
       	\varphi(y)=2\tilde{D_{1}} y^{x_{1}}-\frac{B}{A}y .\notag
       	\end{equation}
       	in which $ \tilde{D_{1}} $ is a constant.\\
       	\indent Due to the Legendre transform, we have
       	\begin{equation}
       	\Phi_{n}(w)=2\tilde{D_{1}} (\Phi_{n}'(w))^{x_{1}}-\frac{B}{A}\Phi_{n}'(w)+w\Phi_{n}'(w). \notag
       	\end{equation}
       	Let $ p= \frac{x_{1}}{x_{1}-1} \in (0,1)$, the solution $ \Phi_{n}(w)$ should be of the form as follows
       	\begin{equation}
       	\Phi_{n}(w)=D_{1}\Bigl(w-\frac{B}{A}\Bigr)^{p}. \notag
       	\end{equation}
       	Substitute the value of $ A$ and $ B$, we have 
       	\begin{equation}
       	\Phi_{n}(w)=D_{1}\Bigl(w-\frac{(\mu -r)l}{\sigma (r+a-c)}\Bigr)^{p}. \notag
       	\end{equation}
       	in which $ D_{1}$ is a constant to be determined. 
       \end{proof}
       Similarly, we have the Lemma \ref{l3} and Lemma \ref{l4}.
       \begin{lemma}\label{l3}
       	The solution to equation (\ref{e5}) is of the form as follows
       	\begin{equation}\label{e9}
       	\Phi_{n}(w)=1-e^{-\lambda n}+D_{2}\Bigl(w-\frac{(\mu -r)l+\sigma H_{n}f}{\sigma (r+a-c+H_{n})}\Bigr)^{q}
       	\end{equation}
       	in which $ D_{1}$ is a constant to be determined, $ q=\frac{k_{1}}{k_{1}-1}>1$, 
       	\begin{equation}
       	k_{1}=\frac{A_{1}-\lambda -m+\sqrt{(A_{1}-\lambda +m)^{2}+4\lambda m}}{2m}, \quad A_{1}=r+a-c+H_{n}\notag
       	\end{equation}
       \end{lemma}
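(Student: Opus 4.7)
The plan is to mirror the Legendre-transform approach used in the proof of Lemma \ref{l2}, after first absorbing the constant term on the left-hand side of equation (\ref{e5}). Concretely, I would set $\Psi_n(w) := \Phi_n(w) - (1 - e^{-\lambda n})$, which leaves $\Psi_{n,w} = (\Phi_n)_w$ and $\Psi_{n,ww} = (\Phi_n)_{ww}$ unchanged, so (\ref{e5}) reduces to
\begin{equation}
\lambda \Psi_n = (A_1 w - B_1)\,\Psi_{n,w} - m\,\frac{\Psi_{n,w}^2}{\Psi_{n,ww}}, \notag
\end{equation}
where $A_1 = r+a-c+H_n$, $B_1 = H_n f + (\mu-r)l/\sigma$, and $m = \tfrac{1}{2}((\mu-r)/\sigma)^2$. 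This has precisely the structural form of the equation solved in Lemma \ref{l2}, with $(A,B)$ replaced by $(A_1,B_1)$.

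Next, applying the Legendre transform $\varphi(y) = \sup_w(\Psi_n(w) - wy)$ and using $\varphi'(y) = -w$, $\varphi''(y) = -1/\Psi_n''(w)$, the HJB equation linearises to the second-order inhomogeneous Euler-type ODE
\begin{equation}
m y^2 \varphi''(y) + (\lambda - A_1)\, y\,\varphi'(y) - \lambda\,\varphi(y) - B_1 y = 0. \notag
\end{equation}
I would solve this by adding a particular solution of the form $-(B_1/A_1)\,y$ to the general solution of the homogeneous Euler equation, whose characteristic equation is $m x^2 + (\lambda - A_1 - m)x - \lambda = 0$. Denoting the two real roots by $k_0<0$ and $k_1>1$, and discarding the branch that is incompatible with $\varphi'(y)\le 0$ (i.e., keeping only $k_1$), one obtains $\varphi(y) = \tilde D_2\, y^{k_1} - (B_1/A_1)\, y$.

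Finally, inverting the Legendre transform gives $\Psi_n(w) = D_2 (w - B_1/A_1)^q$ with $q = k_1/(k_1-1)$, and a direct computation confirms $B_1/A_1 = ((\mu-r)l + \sigma H_n f)/(\sigma(r+a-c+H_n))$, producing the claimed expression (\ref{e9}).

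The main obstacle is the selection of the correct characteristic root and the verification that the resulting $\Phi_n$ meets the regularity conditions required by Lemma \ref{l1}. In the present buying regime $[w_{b_0}, w_1)$, the role of $\Psi_n$ is opposite to that in Lemma \ref{l2}: we now need $\Psi_n$ to increase from a small positive value up to the matching value at $w_1$, and the boundary values together with the concavity requirement $(\Phi_n)_{ww}<0$ force the exponent $q$ to satisfy $q>1$, equivalently $k_1>1$. This fixes the choice of root to be the one with the positive sign in front of the square root, yielding $k_1$ as displayed in the statement, and one must separately check that this choice is consistent with $\Psi_n\ge 0$ and with smooth pasting at $w_1$, which pins down the sign and magnitude of the constant $D_2$.
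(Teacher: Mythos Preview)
Your proposal is correct and follows essentially the same approach as the paper, which in fact gives no separate proof of this lemma beyond the sentence ``Similarly, we have the Lemma~\ref{l3} and Lemma~\ref{l4}'' after the proof of Lemma~\ref{l2}; your substitution $\Psi_n=\Phi_n-(1-e^{-\lambda n})$ and replacement $(A,B)\to(A_1,B_1)$ is precisely what that ``similarly'' is meant to convey. Your additional discussion of why the root $k_1>1$ (rather than the negative root used in Lemma~\ref{l2}) must be selected here---driven by the boundary condition $\Psi_n(w_1)=0$ with $\Psi_n<0$ on $[w_{b_0},w_1)$ and the concavity requirement---is a useful clarification that the paper leaves implicit, and it is the correct reason; the paper's own ``$\varphi'(y)\le 0$'' criterion from Lemma~\ref{l2} does not by itself discriminate between the two roots once the sign of the coefficient is free.
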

       \begin{lemma}\label{l4}
       	The solution to equation (\ref{e6}) is of the form as follows
       	\begin{equation}\label{e10}
       	\Phi_{n}(w)=1-e^{-\lambda n}+D_{3}\Biggl(w-\Bigl(\frac{H_{n}f}{r+a-c+H_{n}}+\frac{(\mu -r)l}{\sigma (r+a-c)}\Bigr)\Biggr)^{p}
       	\end{equation}
       	in which $ D_{3}$ is a constant to be determined, the value of $ p$ is equal to that in the Lemma \ref{l2}.
       \end{lemma}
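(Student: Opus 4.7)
The plan is to reduce equation (\ref{e6}) to the form already solved in Lemma \ref{l2} by two successive substitutions, and then invoke the Legendre-transform argument of that lemma verbatim. First I would subtract the constant $1-e^{-\lambda n}$ from $\Phi_n$: set $\tilde{\Phi}_n(w):=\Phi_n(w)-(1-e^{-\lambda n})$. Since $\tilde{\Phi}_n$ has the same first and second derivatives as $\Phi_n$, (\ref{e6}) becomes
\begin{equation}
\lambda\tilde{\Phi}_n=\Bigl[(r+a-c)w-\Bigl(\tfrac{(r+a-c)H_{n}f}{r+a-c+H_{n}}+\tfrac{(\mu-r)l}{\sigma}\Bigr)\Bigr](\tilde{\Phi}_n)_{w}-\tfrac{1}{2}\Bigl(\tfrac{\mu-r}{\sigma}\Bigr)^{2}\tfrac{(\tilde{\Phi}_n)_{w}^{2}}{(\tilde{\Phi}_n)_{ww}}. \notag
\end{equation}

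Next I would shift the spatial variable by $y=w-w_{2}$, where $w_{2}=\frac{H_{n}f}{r+a-c+H_{n}}+\frac{(\mu-r)l}{\sigma(r+a-c)}$. A short computation gives $(r+a-c)w_{2}=\frac{(r+a-c)H_{n}f}{r+a-c+H_{n}}+\frac{(\mu-r)l}{\sigma}$, so the drift coefficient simplifies to exactly $(r+a-c)y$. Writing $\hat{\Phi}(y):=\tilde{\Phi}_n(y+w_{2})$, the ODE reduces to
\begin{equation}
\lambda\hat{\Phi}=(r+a-c)y\,\hat{\Phi}_{y}-\tfrac{1}{2}\Bigl(\tfrac{\mu-r}{\sigma}\Bigr)^{2}\tfrac{\hat{\Phi}_{y}^{2}}{\hat{\Phi}_{yy}}, \notag
\end{equation}
which is exactly the equation analysed in Lemma \ref{l2} with $A=r+a-c$ and the affine constant $B$ set to zero.

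At this point I would essentially quote the proof of Lemma \ref{l2}: apply the Legendre transform $\varphi(z)=\sup_{y}(\hat{\Phi}(y)-yz)$ to convert the equation into the linear Euler equation $m z^{2}\varphi''+(\lambda-A)z\varphi'-\lambda\varphi=0$, extract the negative characteristic root $x_{1}=\frac{A-\lambda+m-\sqrt{(A-\lambda+m)^{2}+4\lambda m}}{2m}$, and take $\varphi(z)=\tilde{D}_{3}z^{x_{1}}$ (the positive root is discarded so that $\varphi'\leqslant 0$); since the inhomogeneous $By$ term is now absent there is no particular solution to add. Inverting the transform gives $\hat{\Phi}(y)=D_{3}y^{p}$ with $p=\frac{x_{1}}{x_{1}-1}\in(0,1)$, and undoing the two substitutions yields the claimed form (\ref{e10}).

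The only genuinely new content beyond Lemma \ref{l2} is the algebraic identity $(r+a-c)w_{2}=\frac{(r+a-c)H_{n}f}{r+a-c+H_{n}}+\frac{(\mu-r)l}{\sigma}$ that makes the drift after shifting purely linear in $y$; this is mechanical and will be the only step I expect to write out carefully. The exponent $p$ is the same as in Lemma \ref{l2} because the coefficient $A=r+a-c$ of the drift is unchanged between (\ref{e4}) and (\ref{e6}); the differing drift coefficient in (\ref{e5}) is precisely what forces the different exponent $q$ in Lemma \ref{l3}. No separate verification lemma is required here, since uniqueness of the solution with this power-function structure is inherited from the reduction to the already-solved problem.
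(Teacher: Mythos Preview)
Your proposal is correct and is essentially the approach the paper intends: the paper states Lemma~\ref{l4} with the remark ``Similarly, we have the Lemma~\ref{l3} and Lemma~\ref{l4}'' immediately after the detailed Legendre-transform proof of Lemma~\ref{l2}, and your two substitutions (subtract $1-e^{-\lambda n}$, then shift by $w_2$) are exactly what ``similarly'' amounts to here. The only cosmetic difference is that you eliminate the affine constant by shifting $w$ before applying the Legendre transform, whereas the paper's template in Lemma~\ref{l2} keeps the constant $B$ and absorbs it via the particular solution $-\tfrac{B}{A}y$; both routes arrive at the same power law with the same exponent $p$ since the drift coefficient $A=r+a-c$ is unchanged.
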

   
      \indent  $ D_{1}$, $ D_{2}$, $ D_{3}$ can be determined according to continuity and critical value, the specific method for their determination will be presented in the appendix. We will directly give the concrete values of this parameters in the following propositions.
       \begin{proposition}\label{p1}
       	The maximum probability of achieving the financial goal $ f$ before ruining is given by 
       	\begin{equation}
       	\Phi_{n}(w)=\begin{cases}
       	\displaystyle 0,&0 \leqslant w <w_{0} ,\\
       	\displaystyle (1-e^{-\lambda n})\frac{q(1-p)}{q-p}\left ( \frac{w-w_{0} }{w_{b_{0}}-w_{0}} \right )^{p} ,&w_{0}\leqslant w < w_{b_{0}},\\
       	\displaystyle (1-e^{-\lambda n})\left ( 1-\frac{p(q-1)}{q-p}\Biggl(\frac{w-w_{1}}{w_{b_{0}}-w_{1}}\Biggr)^{q} \right ),&w_{b_{0}} \leqslant w< w_{1},\\
       	\displaystyle 1-e^{-\lambda n},&w_{1} \leqslant w<w_{2}  ,\\
       	\displaystyle 1-e^{-\lambda n}+e^{-\lambda n} \left ( \frac{w- w_{2}}{w^{i_{0}}-w_{2}} \right )^{p},& w_{2}  \leqslant w< w^{i_{0}},
       	\end{cases}	 \notag
       	\end{equation}
       	in which 
   {\footnotesize
       	\begin{equation}
       	w_{0}=\frac{(\mu -r)l}{\sigma (r+a-c)}, w_{1}=\frac{H_{n}f}{r+a-c+H_{n}}+\frac{(\mu -r)l}{\sigma (r+a-c+H_{n})},
       	w_{2}=\frac{H_{n}f}{r+a-c+H_{n}}+\frac{(\mu -r)l}{\sigma (r+a-c)}, \notag 
       	\end{equation}
       }
      	{\small
       	\begin{equation}
       	p= \frac{k_{1}}{k_{1}-1}, \quad k_{1}=\frac{r+a-c-\lambda +m-\sqrt{(r+a-c-\lambda +m)^{2}+4\lambda m}}{2m} \notag
       	\end{equation}
       	\begin{equation}
       	q= \frac{k_{2}}{k_{2}-1}, \quad k_{2}=\frac{r+a-c+H_{n}-\lambda +m+\sqrt{(r+a-c+H_{n}-\lambda +m)^{2}+4\lambda m}}{2m} , \notag
       	\end{equation}
       	\begin{equation}
       	w_{b_{0}}=\frac{(1-p)w_{1}-(1-q)w_{0}}{q-p} , \notag
       	\end{equation}
       }
       	\indent The associated optimal $n$-year term life insurance purchasing strategy is not to purchase until the wealth reaches $ w_{b_{0}}$, and if $ w_{b_{0}} \leqslant w< w_{1}$, it's optimal to buy $n$-year term life insurance of $ f-w$; if $ w_{1} \leqslant w< w^{i_{0}}$, it's optimal to buy $n$-year term life insurance of 
       	$ f-\frac{H_{n}f}{r+a-c+H_{n}}$.\\
       	\indent The optimal investment strategy in risky market is not to invest until the wealth reaches $ w_{0}+\frac{\sigma l(1-p)}{\mu -r}$ and when the wealth is in a small range, i.e. $w_{1}+\frac{\sigma l(1-q)}{\mu -r} \leqslant w<w_{2}+\frac{\sigma l(1-p)}{\mu -r} $, the individual also doesn't invest. In other cases, the optimal investment amount is equal to
       	\begin{equation}
       	\pi(w)=\begin{cases}
       	\displaystyle \frac{(\mu -r)(w-w_{0})+\sigma l(p-1)}{\sigma ^{2}(1-p)} ,&w_{0}+\frac{\sigma l(1-p)}{\mu -r}\leqslant w < w_{b_{0}},\\
       	\displaystyle\frac{(\mu -r)(w-w_{1})+\sigma l(q-1)}{\sigma ^{2}(1-q)} ,&w_{b_{0}}\leqslant w< w_{1}+\frac{\sigma l(1-q)}{\mu -r} ,\\
       	\displaystyle\frac{(\mu -r)(w-w_{2})+\sigma l(p-1)}{\sigma ^{2}(1-p)} ,& w_{2} +\frac{\sigma l(1-p)}{\mu -r}  \leqslant w< w^{i_{0}},
       	\end{cases}	 \notag
       	\end{equation}
       \end{proposition}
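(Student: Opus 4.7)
The plan is to apply the verification Lemma \ref{l1} with a piecewise candidate $F$ assembled from the explicit solution forms of Lemmas \ref{l2}--\ref{l4}. First I would identify which form fits each of the five regions: on $(w_0, w_{b_0})$ and $(w_2, w^{i_0})$ no additional insurance is purchased, so the HJB equation reduces to (\ref{e4}) and (\ref{e6}) respectively, giving the $p$-power forms $D_1(w-w_0)^p$ and $1-e^{-\lambda n} + D_3(w-w_2)^p$; on $(w_{b_0}, w_1)$ the policyholder buys term insurance of amount $f-w$, so the HJB reduces to (\ref{e5}), giving the $q$-power form of Lemma \ref{l3} whose natural shift is exactly $w_1$; on the flat interval $[w_1, w_2]$ the buy amount $f - H_n f/(r+a-c+H_n)$ makes $\Phi_n \equiv 1-e^{-\lambda n}$; and on $[0, w_0]$ one extends by $F \equiv 0$, consistent with $F(w_0) = 0$ and the boundary condition $F(0) = 0$ in the verification lemma.

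Next I would pin down the three constants $D_1, D_2, D_3$ and the free boundary $w_{b_0}$ by continuity and $C^1$ smooth pasting at $w_{b_0}$, $w_1$, $w_2$, together with the terminal condition $F(w^{i_0}) = 1$. Matching at $w_2$ gives $D_3 (w^{i_0}-w_2)^p = e^{-\lambda n}$; continuity of value $1-e^{-\lambda n}$ at $w_1^-$ is automatic from the $q$-branch evaluated at its shift; matching value and derivative at $w_{b_0}$ produces two linear relations whose resolution, combined with $F(w_0^+) = 0$ and the known exponents $p, q$, yields the explicit prefactors $(1-e^{-\lambda n}) q(1-p)/(q-p)$ on the $p$-branch in $[w_0, w_{b_0})$ and $(1-e^{-\lambda n}) p(q-1)/(q-p)$ on the $q$-branch in $[w_{b_0}, w_1)$, together with the closed form $w_{b_0} = ((1-p)w_1 - (1-q)w_0)/(q-p)$ stated in the proposition.

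After the candidate is fixed, I would verify the remaining hypotheses of Lemma \ref{l1}: monotonicity $F_w > 0$ and strict concavity $F_{ww} < 0$ on each non-flat piece, both inherited from the signs of the coefficients together with $p \in (0,1)$ and $q > 1$; and the global HJB inequality $\max_{\pi, D \geq 0} \mathscr{L}_1^{\pi, D} F \leq 0$ off the currently-active buy region, which reduces to a one-variable inequality of the kind treated in Lemma~3.4 of \cite{Bayraktar2014}. The optimal investment $\pi^*$ is then read off from the inner maximisation carried out in (\ref{e3}) as $\pi^*(w) = -(\mu-r) F_w(w)/(\sigma^2 F_{ww}(w)) - l/\sigma$, and substituting each power branch gives the piecewise expressions in the proposition; the non-investment subintervals correspond exactly to the $w$-ranges where this unconstrained formula becomes non-positive so that the no-short constraint binds. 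The optimal insurance amount $D^*$ is recovered directly from the indicator structure of the control equation (\ref{e2}).

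The main obstacle will be pinning down $w_{b_0}$ self-consistently: the smooth-pasting system at that point couples the buy/no-buy switch with a change in the governing HJB equation from (\ref{e4}) to (\ref{e5}) and a change in the power exponent from $p$ to $q$, and one must verify that the resulting $w_{b_0}$ lies strictly in $(w_0, w_1)$ and satisfies the threshold condition $\lambda(1-e^{-\lambda n}) - H_n (f - w_{b_0}) F_w(w_{b_0}) = 0$ encoded in Lemma \ref{l1}. A secondary technical point is that the candidate is only $C^1$ at $w_{b_0}$, not $C^2$; this is acceptable because $\mathscr{L}_1^{\pi, D}$ depends on $F_{ww}$ only through the optimal $\pi$, and continuity of the optimal controls across the junction lets the It\^o-based verification of Lemma \ref{l1} go through without a second-order smooth fit.
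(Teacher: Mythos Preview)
Your proposal is correct and follows essentially the same route as the paper: invoke the verification Lemma \ref{l1}, assemble the candidate from the power-type solutions supplied by Lemmas \ref{l2}--\ref{l4} on the appropriate subintervals, fix the free constants and the buy level $w_{b_0}$ by continuity and matching at the junctions (the paper does this in its Appendix), then check the regularity and boundary conditions required by Lemma \ref{l1} and read off $\pi^*$ and $D^*$. Your write-up is in fact more explicit than the paper's own proof about the smooth-pasting system at $w_{b_0}$, the consistency of the resulting $w_{b_0}$ with the infimum definition in Lemma \ref{l1}, and the $C^1$-but-not-$C^2$ issue at the junction; the paper simply asserts these as ``direct application'' and relegates the constant determination to an appendix.
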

       \begin{proof}
       	\rm Our general guideline is to use Lemma \ref{l1} to prove this proposition. In fact, it's a direct application of Lemma \ref{l1}.\\
       	\indent By Lemma \ref{l1} and Step 2, we can get three equations (\ref{e4}), (\ref{e5}), (\ref{e6}), by Lemma \ref{l2}, \ref{l3}, \ref{l4} and the continuity at critical points, we obtain $ \Phi_{n}$ in Proposition \ref{p1}.
     Notice that $\Phi_{n}(w)$ is non-decreasing continuous, and piecewise differentiable on $ [0,w^{i_{0}}) $, except at $ 0$, $ w_{0}$, $ w_{1}$ and $ w_{2}$. $ \Phi_{n}$ satisfies $ (\Phi_{n}) _{w} >0$, $ (\Phi_{n})_{ww}<0$ on $ (0,w^{i_{0}})\setminus \bigl((0,w_{0}) \cup (w_{1},w_{2}) \bigr)$, and obviously, $ F(0)=F(w_{0})=0,  F(w_{1})=F(w_{2})=1-e^{-\lambda n}, F(w^{i_{0}})=1$. After the above analysis, the proposition is the direct application of Lemma \ref{l1}. It is important to note that, since the investment amount in risky market is non-negative, to ensure that it's meaningful, we make adjustments to the range of $ \pi(w)$.
       \end{proof}
       \subsection{$n$-year term life insurance purchased by a continuously paid premium under the model \uppercase\expandafter{\romannumeral2}}
The specific components of model \uppercase\expandafter{\romannumeral2} are as follows.
       \begin{equation}
       \begin{cases}
       dS(t)=\mu S(t)dt+\sigma S(t)dB(t),\\
       dY(t)=adt+ldB(t),\\
       c(W)=c,
       \end{cases} \notag
       \end{equation}
       in which $ Y=\left\{Y(t)\right\}_{t\geqslant0}$ represents the price process of income, $ c$ represents the consumption rate. $ B=\left\{B(t)\right\}_{t\geqslant0}$ is a standard Brownian motion on a filtered probability space $ (\Omega ,\mathcal{F}, \mathbb{F}=\left\{\mathcal{F}_{t}\right\}_{t\geqslant 0},\mathbb{P})$, $\mu $, $ \sigma$, $ a$, $ l$, $ c$ are all constants, $ \mu >r$.
       Therefore, the wealth follows the dynamics
       {\small 
       \begin{equation}
       \begin{cases}
       dW(t)=\bigl [ rW(t)+(\mu -r)\pi _{t}-H_{n}D(t)\mathbbm{1}_{\left \{ t\leqslant n \right \}} +a-c\bigr ]dt+(\sigma \pi _{t}+l)dB_{t},&\ 0\leqslant t<\delta _{d},\\W(\delta _{d})=W(\delta _{d}-)+D(\delta _{d}-)\mathbbm{1}_{\left \{ \delta _{d}\leqslant n \right \}}.
       \end{cases} \notag
       \end{equation}
    }    
       \subsubsection{ The case for which $ a-c=0$}
       In this case, the problem is equivalent to the case where $ a-c=0$ in previous section, so we directly give the conclusion. The ``quasi-ideal value'' $w^{q_{1}}=\frac{H_{n}f}{r+H_{n}} $ and the ``ideal value'' $ w^{i_{1}}=\frac{\bigl(r+H_{n}+rH_{n}\bigr)f}{(r+H_{n})(r+1)}$.
       \begin{proposition}\label{p2}
       	The maximum probability of achieving the financial goal $ f$ before ruining is given by 
       	\begin{equation}
       	\Phi_{n}(w)=\begin{cases}
       	\displaystyle 0,&0 \leqslant w <w_{3} ,\\
       	\displaystyle (1-e^{-\lambda n})\frac{q(1-p)}{q-p}\left ( \frac{w-w_{3} }{w_{b_{1}}-w_{3}} \right )^{p} ,&w_{3}\leqslant w < w_{b_{1}},\\
       	\displaystyle (1-e^{-\lambda n})\left ( 1-\frac{p(q-1)}{q-p}\Biggl(\frac{w-w_{4}}{w_{b_{1}}-w_{4}}\Biggr)^{q} \right ),&w_{b_{1}} \leqslant w< w_{4},\\
       	\displaystyle 1-e^{-\lambda n},&w_{4} \leqslant w<w_{5}  ,\\
       	\displaystyle 1-e^{-\lambda n}+e^{-\lambda n} \left ( \frac{w- w_{2}}{w^{i_{1}}-w_{2}} \right )^{p},& w_{5}  \leqslant w< w^{i_{1}},
       	\end{cases}	 \notag
       	\end{equation}
       	in which 
       	\begin{equation}
       	w_{3}=\frac{(\mu -r)l}{\sigma r}, w_{4}=\frac{H_{n}f}{r+H_{n}}+\frac{(\mu -r)l}{\sigma (r+H_{n})}, 
       	w_{5}=\frac{H_{n}f}{r+H_{n}}+\frac{(\mu -r)l}{\sigma r}, \notag 
       	\end{equation}
       	\begin{equation}
       	p= \frac{k_{3}}{k_{3}-1}, \quad k_{3}=\frac{r-\lambda +m-\sqrt{(r-\lambda +m)^{2}+4\lambda m}}{2m} \notag
       	\end{equation}
       	\begin{equation}
       	q= \frac{k_{4}}{k_{4}-1}, \quad k_{4}=\frac{r+H_{n}-\lambda +m+\sqrt{(r+H_{n}-\lambda +m)^{2}+4\lambda m}}{2m} , \notag
       	\end{equation}
       	\begin{equation}
       	w_{b_{1}}=\frac{(1-p)w_{4}-(1-q)w_{3}}{q-p} , \notag
       	\end{equation}
       	\indent The associated optimal $n$-year term life insurance purchasing strategy is not to purchase until the wealth reaches $ w_{b_{1}}$, and if $ w_{b_{1}} \leqslant w< w_{4}$, it's optimal to buy $n$-year term life insurance of $ f-w$; if $ w_{4} \leqslant w< w^{i_{1}}$, it's optimal to buy $n$-year term life insurance of 
       	$ f-\frac{H_{n}f}{r+H_{n}}$.\\
       	\indent The optimal investment strategy in risky market is not to invest until the wealth reaches $ w_{3}+\frac{\sigma l(1-p)}{\mu -r}$ and when the wealth is in a small range, i.e. $w_{4}+\frac{\sigma l(1-q)}{\mu -r} \leqslant w<w_{5}+\frac{\sigma l(1-p)}{\mu -r} $, the individual also doesn't invest. In other cases, the optimal investment amount is equal to
       	\begin{equation}
       	\pi(w)=\begin{cases}
       	\displaystyle \frac{(\mu -r)(w-w_{3})+\sigma l(p-1)}{\sigma ^{2}(1-p)} ,&w_{3}+\frac{\sigma l(1-p)}{\mu -r}\leqslant w < w_{b_{1}},\\
       	\displaystyle\frac{(\mu -r)(w-w_{4})+\sigma l(q-1)}{\sigma ^{2}(1-q)} ,&w_{b_{1}} \leqslant w< w_{4}+\frac{\sigma l(1-q)}{\mu -r} ,\\
       	\displaystyle\frac{(\mu -r)(w-w_{5})+\sigma l(p-1)}{\sigma ^{2}(1-p)} ,& w_{5} +\frac{\sigma l(1-p)}{\mu -r}  \leqslant w< w^{i_{1}},
       	\end{cases}	 \notag
       	\end{equation}
       \end{proposition}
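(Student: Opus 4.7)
The plan is to exploit the structural equivalence pointed out in the text: under Model II with $a-c=0$, the wealth SDE reduces to
\begin{equation}
dW(t) = \bigl[rW(t) + (\mu-r)\pi_t - H_n D(t)\mathbbm{1}_{\{t\leqslant n\}}\bigr]dt + (\sigma\pi_t + l)dB_t, \notag
\end{equation}
which coincides exactly with the Model I dynamics of Section 4.1 upon the formal substitution $r+a-c \mapsto r$. So rather than starting from scratch, I would reproduce the proof of Proposition \ref{p1} verbatim, tracking this one substitution through every computation.

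First I would identify the ``\emph{quasi-ideal value}'' and ``\emph{ideal value}'' by solving $rw^{q_{1}} = H_n(f-w^{q_{1}})$ and the corresponding equation for $w^{i_{1}}$, yielding $w^{q_{1}} = H_n f/(r+H_n)$ and $w^{i_{1}} = (r+H_n+rH_n)f/[(r+H_n)(r+1)]$. Next I would write the HJB equation for $\Phi_n$, perform the pointwise maximization over $\pi$ via the first-order condition exactly as in Step 1 (equation (\ref{e3})), and split the state space into the three regimes $[w_{3},w_{b_{1}})$, $[w_{b_{1}},w_{4})$, $[w_{5},w^{i_{1}})$ demarcated by the critical points $w_{3}$, $w_{4}$, $w_{5}$, giving three ODEs structurally identical to (\ref{e4}), (\ref{e5}), (\ref{e6}) but with $r+a-c$ replaced by $r$ (and with $r+a-c+H_n$ replaced by $r+H_n$).

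Then I would apply the Legendre-transform ansatz from Lemma \ref{l2} on each piece to obtain candidate solutions of the form $D_{1}(w-w_{3})^{p}$, $1-e^{-\lambda n}+D_{2}(\cdot)^{q}$, and $1-e^{-\lambda n}+e^{-\lambda n}[(w-w_{5})/(w^{i_{1}}-w_{5})]^{p}$ respectively, where the characteristic roots $k_{3}, k_{4}$ arise from the Euler equations driven by the coefficients $r$ and $r+H_{n}$. The constants $D_{1},D_{2},D_{3}$ together with the buy threshold $w_{b_{1}}$ are pinned down by $C^{1}$-matching at $w_{b_{1}}$ and by the prescribed values at $w_{3},w_{4},w_{5},w^{i_{1}}$; the explicit formula $w_{b_{1}}=[(1-p)w_{4}-(1-q)w_{3}]/(q-p)$ then drops out of the smooth-pasting system exactly as in Proposition \ref{p1}.

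Finally I would appeal to the verification lemma (Lemma \ref{l1}) with the substitution $r+a-c \mapsto r$, observing that its hypotheses depend on the wealth dynamics only through the drift coefficient, so the proof of the lemma carries over unchanged. The main obstacle is verifying the regularity conditions required by the verification argument: monotonicity $(\Phi_{n})_{w}>0$ and strict concavity $(\Phi_{n})_{ww}<0$ on $(0,w^{i_{1}})\setminus\bigl((0,w_{3})\cup(w_{4},w_{5})\bigr)$, which hinges on showing $p \in (0,1)$ and $q>1$ via the sign analysis of $k_{3}$ and $k_{4}$, and checking the $\mathscr{L}_{1}^{\pi,D}$-inequality globally (in particular across the flat region $[w_{4},w_{5}]$ where $\Phi_{n}$ is constant). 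The candidate optimal $\pi(w)$ obtained from the first-order condition must then be truncated at zero on the subintervals where the unconstrained formula would be negative, producing the two ``no-investment'' regions $[w_{3}, w_{3}+\sigma l(1-p)/(\mu-r))$ and $[w_{4}+\sigma l(1-q)/(\mu-r), w_{5}+\sigma l(1-p)/(\mu-r))$ in the statement.
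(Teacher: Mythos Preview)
Your proposal is correct and matches the paper's approach exactly: the paper likewise observes that Model~II with $a-c=0$ is equivalent to Model~I with $a-c=0$ (i.e., the substitution $r+a-c\mapsto r$) and then simply states the conclusion without further argument. If anything, your outline is more detailed than the paper's one-line justification, but the underlying route is identical.
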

       \subsubsection{The case for which $ a-c>0$}
       In this case, we firstly give the new ``quasi-ideal value" $w^{q_{2}}=\frac{H_{n}f+c-a}{r+H_{n}} $ and the new ``ideal value" $w^{i_{2}}=\frac{(r+H_{n}+rH_{n})f-(c-a)H_{n}}{(r+H_{n})(r+1)} $, which solving methods are the same as that in the model \uppercase\expandafter{\romannumeral1}. Denote $ w_{b_{2}}$ which is called the buy level, then the control equation about $ \Phi_{n}$ as follows:
       {\small
       \begin{align}
      & \lambda \Bigl[ \Phi_{n} -(1-e^{-\lambda n})\mathbbm{1}_{\left\{ w\geqslant w_{b_{2}}\right\}} \Bigr]
       =\Bigl[ rw-H_{n}\Bigl(f-w\mathbbm{1}_{\left\{ w<\frac{H_{n}f+c-a}{r+H_{n}}+\frac{(\mu -r)l}{\sigma (r+H_{n})}\right\}}\notag\\-&\frac{H_{n}f}{r+a-c+H_{n}}\mathbbm{1}_{\left\{ w\geqslant \frac{H_{n}f+c-a}{r+H_{n}}+\frac{(\mu -r)l}{\sigma r} \right\}}\Bigr)\mathbbm{1}_{\left\{ w_{b_{2}}\leqslant w\leqslant w^{i_{2}}\right\}}+(c-a) \Bigr](\Phi_{n} )_{w}\notag\\+&\max_{\pi }\Bigl[ (\mu -r)\pi (\Phi_{n} )_{w}+\frac{1}{2}(\sigma \pi +l)^{2}(\Phi_{n} )_{ww} \Bigr] \notag
       \end{align}
    }
       Similarly, we simplify the above control equation into three parts.\\
       \noindent (\textbf{\romannumeral1}). If $ w<w_{b_{2}}$ (it can be seen from the solution of the latter equation that in fact $\frac{(\mu -r)l}{\sigma r}+\frac{c-a}{r} \leqslant w <w_{b_{2}} $), then the control equation can be replaced with the following equivalent expression
       \begin{equation}
       \lambda \Phi_{n}=\Bigl[rw-\frac{(\mu -r)l}{\sigma }+a-c\Bigr](\Phi_{n} ) _{w}-\frac{1}{2}\Bigl(\frac{\mu -r}{\sigma }\Bigr)^{2}\frac{(\Phi_{n} ) _{w}^{2}}{(\Phi_{n} ) _{ww}}. \label{e12}
       \end{equation}   
       \noindent (\textbf{\romannumeral2}). If $ w_{b_{2}}\leqslant w<\frac{H_{n}f+c-a}{r+H_{n}}+\frac{(\mu -r)l}{\sigma (r+H_{n})}$, then the control equation can be replaced with the following equivalent expression
       {\small 
       \begin{equation}\label{e13}
       \lambda\Bigl( \Phi_{n} -(1-e^{-\lambda n})\Bigr)=\Bigl[(r+H_{n})w-(H_{n}f+\frac{(\mu -r)l}{\sigma })+a-c\Bigr](\Phi_{n} )_{w}-\frac{1}{2}\Bigl(\frac{\mu -r}{\sigma }\Bigr)^{2}\frac{(\Phi_{n} )_{w}^{2}}{(\Phi_{n} )_{ww}}. 
       \end{equation}
    }
       \noindent (\textbf{\romannumeral3}). If $ \frac{H_{n}f+c-a}{r+H_{n}}+\frac{(\mu -r)l}{\sigma r} \leqslant w<w^{i_{2}}$, then the control equation can be replaced with the following equivalent expression
       {\small 
       \begin{equation}\label{e14}
       \lambda\Bigl( \Phi_{n} -(1-e^{-\lambda n})\Bigr)=\Bigl[rw-(\frac{(r+a-c)H_{n}f}{r+a-c+H_{n}}+\frac{(\mu -r)l}{\sigma })+a-c\Bigr](\Phi_{n} ) _{w}-\frac{1}{2}\Bigl(\frac{\mu -r}{\sigma }\Bigr)^{2}\frac{(\Phi_{n} ) _{w}^{2}}{(\Phi_{n} ) _{ww}}. 
       \end{equation}
    }
       These observations lead to the following verification Lemma \ref{l5}, which proof is same as the Lemma \ref{l1}.
       \begin{lemma}\label{l5}
       	Define a differential operator  $ \mathscr{L}_{2}^{\pi , D}$: 
       	{\small 
       	\begin{equation}
       	\mathscr{L}_{2}^{\pi,D}\tilde{F}=\bigl[(r+a-c)w+(\mu-r)\pi-HD\bigr]\tilde{F} _{w}+\frac{1}{2}(\sigma \pi +l)^{2}\tilde{F} _{ww}-\lambda \bigl[\tilde{F}-\lambda(1-e^{-\lambda n})\mathbbm{1}_{\left \{ w+D\geqslant f \right \}}\bigr]. \notag
       	\end{equation}
       }
       	Let $ F=F(w)$ be a function that is non-decreasing, continuous, and piecewise differentiable on $ [0,w^{i_{2}}) $, except that $ F $ might not be differentiable at $ 0$, $ w_{6}$, $ w_{7}$ and $ w_{8}$. Suppose $ F$ satisfies $ F _{w} >0$, $ F _{ww}<0$ on $ (0,w^{i_{2}})\setminus \bigl((0,w_{6}) \cup (w_{7},w_{8}) \bigr)$. If $ F$ satisfies the following boundary-value problem on $ [0,w^{i_{2}}]$:
       	\begin{equation}
       	\begin{cases}
       	\max_{\pi,D\geqslant0 } \mathscr{L}_{2}^{\pi,D}F(w)=0 ,\\
       	F(0)=F(w_{6})=0, \quad F(w_{7})=F(w_{8})=1-e^{-\lambda n}, \quad F(w^{i_{2}})=1.
       	\end{cases}	 \notag
       	\end{equation}
       	Then, on $ [0,w^{i_{2}}]$, 
       	\begin{equation}
       	\Phi_{n}=F. \notag
       	\end{equation}
       	The optimal investment amount in risk market is 
       	\begin{equation}
       	\pi ^{*}_{t}=-\frac{\mu -r}{\sigma }^{2}\frac{(\Phi_{n}) _{w}(w^{*}_{t})}{(\Phi_{n}) _{ww}(w^{*}_{t})}, \notag
       	\end{equation}
       	in which $ w^{*}_{t} $ is the optimal control wealth value at time $ t$.
       	The optimal $n$-year term life insurance purchase amount is
       	\begin{equation}
       	D ^{*}_{t}=\Bigl(f-w\mathbbm{1}_{\left\{ w<w_{7 }\right\}}-\frac{H_{n}f+c-a}{r+H_{n}}\mathbbm{1}_{\left\{ w\geqslant w_{8} \right\}}\Bigr)\mathbbm{1}_{\left\{ w_{b_{2}}\leqslant w\leqslant w^{i_{2}}\right\}} , \notag
       	\end{equation}
       	in which 
       	{\small 
       	\begin{equation}
       	w_{b_{2}}=\rm{inf}\left\{w\geqslant 0: \lambda(1-e^{-\lambda n})-H_{n}\Bigl( f-w\mathbbm{1}_{\left\{ w<w_{7 }\right\}}-\frac{H_{n}f+c-a}{r+H_{n}}\mathbbm{1}_{\left\{ w\geqslant w_{8} \right\}}\Bigr)F_{w}(w) \geqslant0 \right\} \wedge f, \notag
       	\end{equation}
       }
       	\begin{equation}
       	w_{6}=\frac{(\mu -r)l}{\sigma r}+\frac{c-a}{r}, w_{7}=\frac{H_{n}f+c-a}{r+H_{n}}+\frac{(\mu -r)l}{\sigma (r+H_{n})}, 
       	w_{8}=\frac{H_{n}f+c-a}{r+H_{n}}+\frac{(\mu -r)l}{\sigma r}. \notag 
       	\end{equation}
       \end{lemma}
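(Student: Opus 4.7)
The plan is to follow the proof of Lemma \ref{l1} essentially verbatim, since the only structural change in passing from Model \uppercase\expandafter{\romannumeral1} to the present Model \uppercase\expandafter{\romannumeral2} with $a-c>0$ is the replacement of the proportional income term $aW(t)$ by the constant $a-c$, which merely shifts the drift of the wealth process. First I would rewrite the objective as
\begin{equation}
\Phi_{n}(w)=\sup_{\pi,\textbf{D}}\textbf{E}^{w}\left[\int_{0}^{\delta_{0}}\lambda e^{-\lambda t}(1-e^{-\lambda n})\mathbbm{1}_{\{W(t)+D(t)\geqslant f\}}dt\right], \notag
\end{equation}
and work under the temporary hypotheses $F\geqslant V>-\infty$ and $F_{w}(0)<\infty$, which will be removed at the end of the argument.

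Next, with the localization $\delta_{n}^{a}:=\inf\{s>0:\int_{0}^{s}\pi_{t}^{2}dt\geqslant n\}$ and $\delta_{n}=\delta_{0}\wedge\delta_{n}^{a}$, I would apply It\^o's formula to $e^{-\lambda t}F(W_{t})$ on $[0,\delta_{n}]$, obtaining
\begin{align}
e^{-\lambda\delta_{n}}F(W_{\delta_{n}})&=F(w)+\int_{0}^{\delta_{n}}e^{-\lambda t}F_{w}(W_{t})(\sigma\pi_{t}+l)dB_{t} \notag\\
&\quad+\int_{0}^{\delta_{n}}e^{-\lambda t}\Bigl[\mathscr{L}_{2}^{\pi,D}F(W_{t})-\lambda(1-e^{-\lambda n})\mathbbm{1}_{\{W_{t}+D_{t}\geqslant f\}}\Bigr]dt. \notag
\end{align}
The bound $F_{w}^{2}(w)\leqslant F_{w}^{2}(0)$ makes the stochastic integral a true martingale, so its expectation vanishes. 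Using $\max_{\pi,D\geqslant 0}\mathscr{L}_{2}^{\pi,D}F=0$, taking $\textbf{E}^{w}$, and letting $n\to\infty$ via monotone convergence gives $F(w)\geqslant\Phi_{n}(w)$. The reverse inequality follows by plugging in the feedback controls $(\pi_{t}^{*},D_{t}^{*})$ that attain the maximum in $\mathscr{L}_{2}^{\pi,D}F$ and repeating the It\^o--expectation calculation with equality; together with the stated boundary data $F(0)=F(w_{6})=0$, $F(w_{7})=F(w_{8})=1-e^{-\lambda n}$, $F(w^{i_{2}})=1$, this yields $F=\Phi_{n}$ as in \cite{WY2012}.

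The main obstacle, as in Lemma \ref{l1}, is the removal of the auxiliary regularity hypotheses on $F$. I would use the shift $F^{\varepsilon_{n}}(w):=F(w+\varepsilon_{n})$ with $\varepsilon_{n}\downarrow 0$; the critical point is to verify that the residual $\mathscr{L}_{2}^{\pi,D}F(w+\varepsilon_{n})-\mathscr{L}_{2}^{\pi,D}F^{\varepsilon_{n}}(w)$ has the right sign. In Lemma \ref{l1} this residual was $(r+a-c+1)\varepsilon_{n}F^{\varepsilon_{n}}_{w}(w)$; here the analogous computation produces $r\varepsilon_{n}F^{\varepsilon_{n}}_{w}(w)$ together with a nonnegative indicator-jump term, both nonnegative since $r>0$ and $F^{\varepsilon_{n}}_{w}\geqslant 0$. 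Hence $\max_{\pi,D\geqslant 0}\mathscr{L}_{2}^{\pi,D}F^{\varepsilon_{n}}\leqslant 0$, the preceding argument applies to $F^{\varepsilon_{n}}$, and passing to the limit $F(w)=\lim_{n\to\infty}F^{\varepsilon_{n}}(w)\geqslant\Phi_{n}(w)$ completes the removal. The optimal feedback $\pi_{t}^{*}=-\frac{\mu-r}{\sigma^{2}}\cdot\frac{F_{w}}{F_{ww}}$ comes from the first-order condition in $\pi$, while the linearity of $\mathscr{L}_{2}^{\pi,D}F$ in $D$ with coefficient $-H_{n}F_{w}(w)\leqslant 0$ forces $D_{t}^{*}$ to take the extreme admissible values $0$, $f-w$, or $f-\frac{H_{n}f+c-a}{r+H_{n}}$ according to the regions $w<w_{b_{2}}$, $w_{b_{2}}\leqslant w<w_{7}$, or $w_{8}\leqslant w\leqslant w^{i_{2}}$, yielding the stated expression.
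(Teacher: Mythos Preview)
Your proposal is correct and follows exactly the approach the paper takes: the paper simply states that the proof of this lemma ``is same as the Lemma \ref{l1}'' and gives no further details, so your verbatim adaptation of that argument---It\^o's formula, localization, martingale property of the stochastic integral, monotone convergence, and the $\varepsilon_{n}$-shift to remove the auxiliary regularity---is precisely what is intended. Your identification of the residual as $r\varepsilon_{n}F^{\varepsilon_{n}}_{w}(w)$ (rather than $(r+a-c+1)\varepsilon_{n}F^{\varepsilon_{n}}_{w}(w)$ as in Lemma \ref{l1}) correctly reflects the change in drift from $(r+a-c)W(t)$ to $rW(t)+a-c$ in Model \uppercase\expandafter{\romannumeral2}.
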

       Since the solution of the maximum probability is exactly similar to the previous section, we directly give the following conclusions.
       \begin{proposition}\label{p3}
       	The maximum probability of achieving the financial goal $ f$ before ruining is given by 
       	\begin{equation}
       	\Phi_{n}(w)=\begin{cases}
       	\displaystyle 0,&0 \leqslant w <w_{6} ,\\
       	\displaystyle (1-e^{-\lambda n})\frac{q(1-p)}{q-p}\left ( \frac{w-w_{6} }{w_{b}-w_{6}} \right )^{p} ,&w_{6}\leqslant w < w_{b_{2}},\\
       	\displaystyle (1-e^{-\lambda n})\left ( 1-\frac{p(q-1)}{q-p}\Biggl(\frac{w-w_{7}}{w_{b}-w_{7}}\Biggr)^{q} \right ),&w_{b_{2}} \leqslant w< w_{7},\\
       	\displaystyle 1-e^{-\lambda n},&w_{7} \leqslant w<w_{8}  ,\\
       	\displaystyle 1-e^{-\lambda n}+e^{-\lambda n} \left ( \frac{w- w_{8}}{w^{i_{2}}-w_{8}} \right )^{p},& w_{8}  \leqslant w< w^{i_{2}},
       	\end{cases}	 \notag
       	\end{equation}
       	in which 
       	\begin{equation}
       	w_{6}=\frac{(\mu -r)l}{\sigma r}+\frac{c-a}{r}, w_{7}=\frac{H_{n}f+c-a}{r+H_{n}}+\frac{(\mu -r)l}{\sigma (r+H_{n})}, 
       	w_{8}=\frac{H_{n}f+c-a}{r+H_{n}}+\frac{(\mu -r)l}{\sigma r}, \notag 
       	\end{equation}
       	\begin{equation}
       	p= \frac{k_{3}}{k_{3}-1}, \quad k_{3}=\frac{r-\lambda +m-\sqrt{(r-\lambda +m)^{2}+4\lambda m}}{2m} \notag
       	\end{equation}
       	\begin{equation}
       	q= \frac{k_{4}}{k_{4}-1}, \quad k_{4}=\frac{r+H_{n}-\lambda +m+\sqrt{(r+H_{n}-\lambda +m)^{2}+4\lambda m}}{2m} , \notag
       	\end{equation}
       	\begin{equation}
       	w_{b_{2}}=\frac{(1-p)w_{7}-(1-q)w_{6}}{q-p} , \notag
       	\end{equation}
       	\indent The associated optimal $n$-year term life insurance purchasing strategy is not to purchase until the wealth reaches $ w_{b_{2}}$, and if $ w_{b_{2}}\leqslant w< w_{7}$, it's optimal to buy $n$-year term life insurance of $ f-w$; if $ w_{7} \leqslant w< w^{i_{2}}$, it's optimal to buy $n$-year term life insurance of 
       	$ f-\frac{H_{n}f+c-a}{r+H_{n}}$.\\
       	\indent The optimal investment strategy in risky market is not to invest until the wealth reaches $ w_{6}+\frac{\sigma l(1-p)}{\mu -r}$ and when the wealth is in a small range, i.e. $w_{7}+\frac{\sigma l(1-q)}{\mu -r} \leqslant w<w_{8}+\frac{\sigma l(1-p)}{\mu -r} $, the individual also doesn't invest. In other cases, the optimal investment amount is equal to
       	\begin{equation}
       	\pi(w)=\begin{cases}
       	\displaystyle \frac{(\mu -r)(w-w_{6})+\sigma l(p-1)}{\sigma ^{2}(1-p)} ,&w_{6}+\frac{\sigma l(1-p)}{\mu -r}\leqslant w < w_{b_{2}},\\
       	\displaystyle\frac{(\mu -r)(w-w_{7})+\sigma l(q-1)}{\sigma ^{2}(1-q)} ,&w_{b_{2}} \leqslant w< w_{7}+\frac{\sigma l(1-q)}{\mu -r} ,\\
       	\displaystyle\frac{(\mu -r)(w-w_{8})+\sigma l(p-1)}{\sigma ^{2}(1-p)} ,& w_{8} +\frac{\sigma l(1-p)}{\mu -r}  \leqslant w< w^{i_{2}},
       	\end{cases}	 \notag
       	\end{equation}
       \end{proposition}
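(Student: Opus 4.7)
The plan is to apply the verification Lemma \ref{l5} directly, mirroring the proof of Proposition \ref{p1}. First I would start from the HJB-type control equation displayed just before \eqref{e12}, and maximize the quadratic in $\pi$ by completing the square exactly as in Step 1 of Section 4.1; this produces the term $-\tfrac12\bigl(\tfrac{\mu-r}{\sigma}\bigr)^2(\Phi_n)_w^2/(\Phi_n)_{ww}$ together with an affine contribution $-\tfrac{(\mu-r)l}{\sigma}(\Phi_n)_w$. Partitioning according to whether the policyholder buys no insurance, buys $D=f-w$, or buys $D=f-\tfrac{H_nf+c-a}{r+H_n}$, and according to the indicator $\mathbbm{1}_{\{w_{b_2}\le w\le w^{i_2}\}}$, reduces the control equation to the three one-dimensional ODEs \eqref{e12}, \eqref{e13}, \eqref{e14}.

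Next I would solve each ODE separately via the Legendre transform trick of Lemma \ref{l2}. The only structural change from Section 4.1 is that the drift coefficient multiplying $w$ is $r$ in \eqref{e12} and \eqref{e14} (instead of $r+a-c$) and $r+H_n$ in \eqref{e13}, while the constant piece $a-c$ shifts the centers from $\tfrac{(\mu-r)l}{\sigma r}$ and $\tfrac{(\mu-r)l+\sigma H_n f}{\sigma (r+H_n)}$ by $\tfrac{c-a}{r}$ and $\tfrac{c-a}{r+H_n}$ respectively. This explains the definitions of $w_6$, $w_7$, $w_8$ in the statement. The characteristic equations of the resulting Euler ODEs yield the exponents $p=k_3/(k_3-1)$ and $q=k_4/(k_4-1)$ stated in the proposition, and each ODE admits a candidate of the form $D_i(w-w_\ast)^{p\text{ or }q}+\text{const}$.

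To fix the constants $D_1,D_2,D_3$ and the free buy level $w_{b_2}$, I would impose the boundary values $\Phi_n(0)=\Phi_n(w_6)=0$, $\Phi_n(w_7)=\Phi_n(w_8)=1-e^{-\lambda n}$, $\Phi_n(w^{i_2})=1$ together with $C^1$ smooth pasting at $w_{b_2}$ between the candidate solving \eqref{e12} and the one solving \eqref{e13}. The two pasting conditions (value and derivative) reduce algebraically to the closed-form
\begin{equation*}
w_{b_2}=\frac{(1-p)w_7-(1-q)w_6}{q-p},
\end{equation*}
and the remaining constants follow by direct substitution, giving the prefactors $(1-e^{-\lambda n})\tfrac{q(1-p)}{q-p}$ and $(1-e^{-\lambda n})\tfrac{p(q-1)}{q-p}$ displayed in the statement. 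The optimal $\pi^\ast(w)=-\tfrac{\mu-r}{\sigma^2}(\Phi_n)_w/(\Phi_n)_{ww}$ is then computed piecewise from these power formulas, and the non-negativity constraint $\pi\ge 0$ is enforced by truncating to the sub-intervals where the raw first-order condition is already non-negative, which produces the "no-invest" gaps described in the proposition.

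The main obstacle is not the calculus itself but the verification step at the end: one must check that the assembled $\Phi_n$ is non-decreasing and concave on each sub-interval where the diffusion is active (so that Lemma \ref{l5} applies), that $w_{b_2}\in(w_6,w_7)$ so the pasting is geometrically meaningful, and that in the "inactive" regions the variational inequality $\max_{\pi,D\ge 0}\mathscr{L}_2^{\pi,D}\Phi_n\le 0$ holds with the prescribed maximizers. Each of these reductions is routine and completely parallel to the corresponding checks carried out in the proof of Proposition \ref{p1}, so after the constants are identified the argument closes by a verbatim application of Lemma \ref{l5}.
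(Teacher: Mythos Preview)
Your proposal is correct and follows essentially the same approach as the paper: the paper in fact omits a standalone proof of this proposition, stating only that ``the solution of the maximum probability is exactly similar to the previous section,'' and relies on the reader to redo the Lemma~\ref{l5} verification with the three ODEs \eqref{e12}--\eqref{e14} solved via the Legendre transform of Lemma~\ref{l2} and constants fixed by continuity and smooth pasting, exactly as you outline. Your write-up is actually more explicit than the paper's own treatment.
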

       \subsubsection{The case for which $ a-c<0$}
       In this case, the control equation about $ \Phi_{n}$, the ``quasi-ideal value" $ w^{q_{2}}$ and the ``ideal value" $w^{i_{2}}$ are the same as in the previous section. We only consider the case when the consumption rate is large enough but less than a critical value, and the premium rate is small enough. \\
       \indent We first calculate the critical value of the consumption rate, denote this critical value is $ C_{0}$, then we obtain
       \begin{equation}
       \frac{\bigl(r+H_{n}+rH_{n}\bigr)f-(C_{0}-a)H_{n}}{(r+H_{n})(r+1)} =\frac{C_{0}-a}{r}. \notag
       \end{equation}
       Therefore, 
       \begin{equation}
       C_{0}=\frac{\bigl(r+H_{n}+rH_{n}\bigr)rf}{(r+H_{n})(r+1)+rH_{n}}+a. \notag
       \end{equation}
       \begin{remark}
       	{\rm(1)}. If $ c>C_{0}$, that means the consumption is too high, the problem is going to get more complicated and we won't discuss it in this paper.\\	
       	{\rm(2)}. If $ c \leqslant C_{0}$, due to the existence of noise terms $ l$ in the income process, we need to consider two cases $ l=0$ and $ l \neq 0$. 
       	\begin{itemize}
       	\item If $ l \neq 0$, $ w_{b_{2}}$ can't be equal to zero, then the associated conclusions are same as Proposition \ref{p3};
       	\item If $ l = 0$, $ w_{b_{2}}$ can be equal to zero, in this case, if the consumption rate is large enough and the premium rate is small enough, assume $ c \geqslant C_{1}$, then the individual purchase $ n$-year term life insurance at all wealth level.
       	The associated conclusions can be obtained by taking $ w_{b_{2}}=0$ and $ l=0$ in the Proposition \ref{p3}. For specific conclusions, see the following Proposition \ref{p4}.
       \end{itemize}
       \end{remark}
       \begin{proposition}\label{p4}
       	If $ l=0$, $ C_{1}+a \leqslant c \leqslant C_{0} $, $ H_{n} \leqslant \tilde{H}$, in which 
       	\begin{equation}
       	C_{1}=H_{n}f\Bigl(\frac{(r+H_{n})q}{\lambda}-1\Bigr), \notag
       	\end{equation}
       	$H_{n} \leqslant \tilde{H}$ is the solution of the following inequality
       	\begin{equation}
       	H_{n}f\Bigl(\frac{(r+H_{n})q}{\lambda}-1\Bigr) \leqslant\frac{\bigl(r+H_{n}+rH_{n}\bigr)rf}{(r+H_{n})(r+1)+rH_{n}}. \notag
       	\end{equation}
       	Then, the maximum probability of achieving the financial goal $ f$ before ruining is given by 
       	\begin{equation}
       	\Phi_{n}(w)=\begin{cases}
       	\displaystyle (1-e^{-\lambda n})\left ( 1-\Biggl(1-\frac{w}{w_{9}}\Biggr)^{q} \right ),&0 \leqslant w< w_{9},\\
       	\displaystyle 1-e^{-\lambda n}+e^{-\lambda n} \left ( \frac{w- w_{9}}{w^{i_{2}}-w_{9}} \right )^{p},& w_{9}  \leqslant w< w^{i_{2}},
       	\end{cases}	 \notag
       	\end{equation}
       	in which 
       	\begin{equation}
       	w_{9}=\frac{H_{n}f+c-a}{r+H_{n}},\notag 
       	\end{equation}
       	\begin{equation}
       	p= \frac{k_{3}}{k_{3}-1}, \quad k_{3}=\frac{r-\lambda +m-\sqrt{(r-\lambda +m)^{2}+4\lambda m}}{2m} \notag
       	\end{equation}
       	\begin{equation}
       	q= \frac{k_{4}}{k_{4}-1}, \quad k_{4}=\frac{r+H_{n}-\lambda +m+\sqrt{(r+H_{n}-\lambda +m)^{2}+4\lambda m}}{2m}. \notag
       	\end{equation}
       	\indent The associated optimal $n$-year term life insurance purchasing strategy is purchasing $n$-year term life insurance of $ f-w$ if $0 \leqslant w< w_{9}$; if $ w_{9} \leqslant w< w^{i_{2}}$, it's optimal to buy $n$-year term life insurance of 
       	$ f-\frac{H_{n}f+c-a}{r+H_{n}}$.\\
       	\indent The optimal investment amount is equal to
       	\begin{equation}
       	\pi(w)=\begin{cases}
       	\displaystyle\frac{(\mu -r)(w-w_{9})}{\sigma ^{2}(1-q)} ,&0 \leqslant w< w_{9} ,\\
       	\displaystyle\frac{(\mu -r)(w-w_{9})}{\sigma ^{2}(1-p)} ,& w_{9}  \leqslant w< w^{i_{2}},
       	\end{cases}	 \notag
       	\end{equation}
       \end{proposition}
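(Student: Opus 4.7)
The plan is to apply verification Lemma \ref{l5} with $w_{b_2}=0$ and $l=0$. When $l=0$ the two buffer radii coincide, $w_7=w_8=w_9=(H_n f+c-a)/(r+H_n)$, so the five-region structure of Proposition \ref{p3} collapses: the flat region $[w_7,w_8]$ of constant probability shrinks to a single point, and once we establish that the buy-level is zero the no-purchase region $[0,w_{b_2})$ also disappears. What remains is the two-region picture stated in the proposition: on $[0,w_9)$ the policyholder buys $n$-year term insurance of $f-w$, while on $[w_9,w^{i_2})$ the insurance amount is the constant $f-w_9$.

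On each region, substituting the prescribed $D$ and maximizing out $\pi$ as in the argument leading to Lemma \ref{l5} converts the control equation into an Euler-type ODE of the form $\lambda(\Phi_n-\text{const})=\alpha(w-w_9)(\Phi_n)_w-\tfrac{1}{2}\bigl(\tfrac{\mu-r}{\sigma}\bigr)^{2}(\Phi_n)_w^2/(\Phi_n)_{ww}$, with $\text{const}=0$ and $\alpha=r+H_n$ on the lower region, and $\text{const}=1-e^{-\lambda n}$ and $\alpha=r$ on the upper region. After the shift $u=w-w_9$, each ODE reduces to the Euler equation treated in Lemma \ref{l2}, and its Legendre-transform analysis produces power-law solutions with exponents $q=k_4/(k_4-1)>1$ and $p=k_3/(k_3-1)\in(0,1)$ respectively. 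The two free multiplicative constants are pinned down by the boundary conditions $\Phi_n(0)=0$ and $\Phi_n(w^{i_2})=1$; continuity at $w_9$ is automatic since both pieces carry the same additive constant $1-e^{-\lambda n}$. Monotonicity and concavity on each region follow from $q>1$ and $0<p<1$, which in turn makes the formula $\pi(w)=-\tfrac{\mu-r}{\sigma^{2}}(\Phi_n)_w/(\Phi_n)_{ww}$ non-negative as required.

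The main obstacle, and the reason for the hypotheses on $c$ and $H_n$, is the consistency check $w_{b_2}=0$ in Lemma \ref{l5}. At $w=0^+$ the bracketed quantity defining $w_{b_2}$ is non-negative if and only if $\lambda(1-e^{-\lambda n})\geq H_n f\cdot(\Phi_n)_w(0)$. Direct differentiation of the candidate gives $(\Phi_n)_w(0)=q(1-e^{-\lambda n})/w_9$, and substituting the value of $w_9$ simplifies the inequality to $c\geq a+H_n f\bigl((r+H_n)q/\lambda-1\bigr)=a+C_1$, which is the lower hypothesis on $c$. The upper bound $c\leq C_0$ is a structural assumption keeping the ``ideal value'' $w^{i_2}$ above the no-insurance survival level, and $H_n\leq\tilde{H}$ is precisely the solvability condition $C_1+a\leq C_0$ that makes the admissible interval $[C_1+a,C_0]$ non-empty. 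Under these hypotheses the candidate satisfies every requirement of Lemma \ref{l5} with $w_{b_2}=0$, and the verification argument delivers $\Phi_n=F$ together with the claimed optimal insurance and investment strategies.
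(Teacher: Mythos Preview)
Your approach is essentially the one the paper takes: invoke the verification Lemma \ref{l5} with $l=0$ so that $w_7=w_8=w_9$, recognize the candidate $\Phi_n$ as built from the power-law solutions of the reduced Euler-type equations on the two regions, check monotonicity, concavity and the boundary conditions, and then identify the hypothesis $c\geq a+C_1$ as the condition guaranteeing the buy level is zero. Your derivation of $C_1$ from the inequality $\lambda(1-e^{-\lambda n})\geq H_n f\,(\Phi_n)_w(0)$ is exactly the computation behind the paper's Remark following the proposition.

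There is one point where your argument is slightly thinner than the paper's. You check the defining inequality for $w_{b_2}$ only at $w=0^+$, whereas the paper states that one must verify
\[
\lambda(1-e^{-\lambda n})-H_n(f-w)(\Phi_n)_w(w)\geq 0
\]
for \emph{all} $w$ in the relevant range (and then defers to \cite{Bayraktar2016} for the actual check). Verifying the endpoint alone gives the threshold $C_1$, but the HJB equation $\max_{\pi,D}\mathscr{L}_2^{\pi,D}F=0$ requires that the prescribed $D=f-w$ dominate $D=0$ at every $w\in(0,w_9)$, not just at $w=0$. A one-line monotonicity argument closes this: on $(0,w_9)$ one has $(f-w)(\Phi_n)_w(w)\propto (f-w)(1-w/w_9)^{q-1}$, and since $q>1$ both factors are positive and decreasing, so the product is maximised at $w=0$. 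You should state this explicitly; otherwise the claim that ``the candidate satisfies every requirement of Lemma \ref{l5}'' is not fully justified.
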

       \begin{proof}
       	Our general guideline is to use lemma \ref{l5} to prove this proposition. Obviousely, $ (\Phi_{n} )_{w} >0$, $  (\Phi_{n} )_{ww}<0$ on $ (0, w^{i_{2}})$ and $ \Phi_{n}$ satisfies the boundary conditions, then we only need to prove  
       	\begin{equation}
       	\lambda(1-e^{-\lambda n})-H_{n}( f-w)  (\Phi_{n} )_{w}(w) \geqslant 0 \notag
       	\end{equation}
       	for all $ 0 \leqslant w \leqslant w^{i_{2}}$ .
       	As same as \cite{Bayraktar2016}, when $ c \geqslant C_{1} $, the above inequality holds.
       \end{proof}
       \begin{remark}
       	We give an explanation of $ C_{1}$ from another point of view.\\
       	If $w_{b_{2}}=0 $, then $ p=1$, $ q=\frac{\lambda}{r+H_{n}}$, and
       	\begin{equation}
       	1-\left ( 1-\frac{w}{\frac{C_{1}+H_{n}f}{r+H_{n}}} \right )^{q}=1-\left ( 1-\frac{w}{\frac{H_{n}f}{r+H_{n}}} \right )^{\frac{\lambda }{r+H_{n}}}, \notag
       	\end{equation}
       	\begin{equation}
       	\left ( 1-\frac{w}{\frac{H_{n}f}{r+H_{n}}} \right )^{\frac{1}{r+H_{n}}}=\left ( 1-\frac{w}{\frac{H_{n}fq}{\lambda}} \right )^{\frac{q}{\lambda}}	.\notag
       	\end{equation}
       	Combining the above two equations, we can get
       	\begin{equation}
       	C_{1}=H_{n}f\Bigl(\frac{(r+H_{n})q}{\lambda}-1\Bigr). \notag
       	\end{equation}
       \end{remark}

       \section{Summary and conclusions}\label{S5}
       We researched the problem about purchasing $m$-year deferred $n$-year term life insurance or $m$-year deferred $n$-year term pure endowment by a single premium or via a premium paid continuously in financial planning, to maximize the probability of achieving the given financial  goal $ f$. The problems were solved by giving the optimal purchasing strategies through establishing new deterministic control equations, ``\emph{quasi-ideal value}" and "\emph{ideal value}", and solving the associated variational inequalities.\\
       \indent  In Section \ref{S2}, it was shown that the optimal  purchasing strategy in the case where the policyholder purchases insurance by paying a single premium, is to buy it only when the asset has reached ``\emph{ideal value}". At that point, it is optimal to buy $m$-year deferred $n$-year term life insurance of $ f-D$, in which $ D$ represents the compensation available for other types of financial products that the policyholder owns before purchasing deferred term insurance. For the case of continuous premium payment, we also examined this situation by comparing the force of interest $r$ and the force of death $ \lambda$. We respectively gave the maximum probability of achieving the given financial goal and the related strategies for purchasing life insurance in the following two cases, if the policyholder has enough time to reach the ``\emph{ideal value}" (i.e. $\lambda \leqslant r$) and if he/she does not have enough time to reach the ``\emph{ideal value}"(i.e. $\lambda >r$). When $\lambda \leqslant r $, we got the similar conclusions as for the single premium; when $ \lambda >r$, the optimal purchasing strategy is to purchase insurance of $ f-w$ if $ w $ is less than the critical value $ w ^{0} \in (0,H^{*})$, otherwise, the policyholder does not buy insurance until the wealth reaches then ``\emph{ideal value}". In particular, if $m >0, n \rightarrow \infty$, our viewpoint also shed light on reaching a bequest goal by purchasing deferred whole life insurance. It is worth noting that if $m=0$, $ n \rightarrow \infty$, our problem is equivalent to achieving the just mentioned bequest goal by purchasing whole life insurance. In this case, the maximum probability and life insurance purchasing strategies we provided are consistent with those in \cite{Bayraktar2014}. \\
       \indent Besides the risk of death, there is another risk that deserves attention, namely the risk of longevity. To deal with this risk, we thought about the situation for which the policyholder purchased $m$-year deferred $n$-year term pure endowment in personal financial planning in Section \ref{S3}. We used some similar methods as in the previous section and got the corresponding optimal purchasing strategies, but the maximum probability of achieving the financial goal has changed. The associated results when $ m=0$ which mean the policyholder purchases $n$-year term pure endowment were given in the end of this section.\\
       \indent In Section \ref{S4}, we considered two models under the stochastic framework. In model \uppercase\expandafter{\romannumeral1}, the drift term about price process of income and the consumption rate correlated with the wealth value, the amount of investment in risky market followed the geometric Brownian motion. In this situation, we found the optimal $ n$-year term life insurance purchasing strategy is not to purchase until the wealth reaches the ``quasi-ideal value", and the associated optimal investment strategy in risky market is not to invest until the wealth reaches a relatively large value. We also found an interesting thing, when the wealth reaches a relatively large value but is in a small range, the individual also doesn't invest, see Proposition \ref{p1}. In model \uppercase\expandafter{\romannumeral2}, the amount of investment in risky market also followed the geometric Brownian motion, but the drift term about price process of income and the consumption rate were constants $ a$ and $ c$. Due to the change of the model, we discussed the above two constants in three cases. Specifically, the following results were obtained:
       \begin{itemize}    
 \item If $ a-c=0$, the problem was equivalent to the case where $ a-c=0$ in Section \ref{S2}, see Proposition \ref{p2};
 \item If $ a-c >0$, in this case, the idea of the problem and the solution methods were similar to those in the previous sections, but the ``quasi-ideal value", ``ideal value"  and the relevant critical values had changed, see Proposition \ref{p3};
  \item If $ a-c<0$, we considered the case when the consumption rate was large enough and the premium rate was small enough. If $ l \neq 0$, $ w_{b_{2}}$ can't be equal to zero, then the associated conclusions are same as Proposition \ref{p3}. If $ l = 0$, $ w_{b_{2}}$ can be equal to zero, in this case, if the consumption rate is large enough,then the individual purchase $ n$-year term life insurance at all wealth level, see Proposition \ref{p4}.
    \end{itemize}  
  
       \indent In our future work, on the one hand, we expect to complete the results in Section \ref{S4} when $ c>C_{0}$. On the other hand, we will extend our work to consider the individual purchasing irreversible insurance or $m$-year deferred $n$-year term life insurance with income and consumption to maximize the probability of achieving the financial goal, then find the optimal purchasing strategies in above cases. We expect this to be an important guide for personal financial management.
       \section*{Appendix}
       \noindent   \textbf{The specific method of determination about $D_{1}$, $D_{2} $, $ D_{3}$}.\\
       We first give the form of the solution of the following equation, the method refers to \cite{AV2018}, \cite{VI2020}:
       \begin{equation}
       y=xy'_{x}+a(y'_{x})^{n}, \notag
       \end{equation}
       \begin{equation}
       y=Ax^{\frac{n}{n-1}}, \quad aA^{n-1}n^{n}=-(n-1)^{n-1}, \quad n\neq 1. \notag
       \end{equation}
       Therefore, by Lemma \ref{l2} and Lemma \ref{l3}, we have
       \begin{equation}
       Dx_{1}(D_{1}p)^{x_{1}-1}=Ck_{1}(D_{2}q)^{k_{1}-1}=-1, \label{ea1}
       \end{equation}   
       in which $ C$, $ D$ are the constants to be determined.
       By the continuous of $ w_{b_{0}}$, 
       \begin{equation}
       D_{1}(w_{b_{0}}-w_{0})^{p}=1-e^{-\lambda n}+D_{2}(w_{b_{0}}-w_{1})^{q}, \notag
       \end{equation}
       thus, 
       \begin{equation}
       D_{2}= \frac{D_{1}(w_{b_{0}}-w_{0})^{p}-(1-e^{-\lambda n})}{(w_{b_{0}}-w_{1})^{q}} \label{ea2}
       \end{equation}
       Rewrite (\ref{ea1}) and by (\ref{ea2}), we obtain
       \begin{equation}
       D(x_{1}-1)D_{1}^{x_{1}-1}p^{x_{1}}=C(k_{1}-1)q^{k_{1}}\frac{\left[D_{1}(w_{b_{0}}-w_{0})^{p}-(1-e^{-\lambda n})\right]^{k_{1}-1}}{(w_{b_{0}}-w_{1})^{k_{1}}}, \notag
       \end{equation}
       i.e.
       \begin{equation}
       D_{1}^{x_{1}-1}=\frac{k_{1}-1}{x_{1}-1}\frac{q^{k_{1}}}{p^{x_{1}}}\left[D_{1}(w_{b_{0}}-w_{0})^{p}-(1-e^{-\lambda n})\right]^{k_{1}-1}\frac{C}{D}\frac{1}{(w_{b_{0}}-w_{1})^{k_{1}}}. \notag
       \end{equation}
       Therefore, we have the following form about $ D_{1}$ and $\displaystyle \frac{C}{D}$
       \begin{equation}
       D_{1}=R\frac{1}{(w_{b_{0}}-w_{0})^{p}}	,\quad  \frac{C}{D}=(w_{b_{0}}-w_{1})^{k_{1}}(w_{b_{0}}-w_{0})^{x_{1}}, \notag
       \end{equation}
       in which $ R$ is the solution of equation
       \begin{equation}
       R^{x_{1}-1}=\frac{k_{1}-1}{x_{1}-1}\bigl(R-(1-e^{-\lambda n})\bigr)^{k_{1}-1}\frac{q^{k_{1}}}{p^{x_{1}}}. \notag
       \end{equation}
       Then, we have
       \begin{equation}
       D_{1}=(1-e^{-\lambda n})\frac{q(1-p)}{q-p}\left ( \frac{1}{w_{b_{0}}-w_{0}} \right )^{p}, \notag
       \end{equation}
       \begin{equation}
       D_{2}=(1-e^{-\lambda n})\frac{p(1-q)}{q-p}(\frac{1}{w_{b_{0}}-w_{1}})^{q} , \notag
       \end{equation}
       $ D_{3}$ can be obtained in the same way as above
       \begin{equation}
       D_{3}=e^{-\lambda n} \left ( \frac{1}{w^{i_{0}}-w_{2}} \right )^{p}. \notag 
       \end{equation}$\hfill\blacksquare $

\bibliography{references}

\begin{thebibliography}{32}
\providecommand{\natexlab}[1]{#1}
\providecommand{\url}[1]{\texttt{#1}}
\expandafter\ifx\csname urlstyle\endcsname\relax
  \providecommand{\doi}[1]{doi: #1}\else
  \providecommand{\doi}{doi: \begingroup \urlstyle{rm}\Url}\fi

\bibitem[Andrei and Valentin(2018)]{AV2018}
D.~P. Andrei and F.~Z. Valentin.
\newblock \emph{Handbook of ordinary differential equations}.
\newblock CRC, 2018.

\bibitem[Arnold(2020)]{VI2020}
V.~I. Arnold.
\newblock \emph{Mathematical methods of classical mechanics (second edition)}.
\newblock Springer-Verlag, 2020.

\bibitem[Arup(2017)]{Arup2017}
S.~K. Arup.
\newblock Analysis of individual investors behavior of stock market.
\newblock \emph{IJTSRD}, 1\penalty0 (5):\penalty0 922--931, 2017.

\bibitem[Bajtelsmit and Wang(2018)]{VL2018}
V.~L. Bajtelsmit and T.~Y. Wang.
\newblock Household financial planning strategies for managing longevity risk.
\newblock \emph{FPR}, 1\penalty0 (2):\penalty0 e1007, 2018.

\bibitem[Bayraktar and Young(2006)]{Bayraktar2006}
E.~Bayraktar and V.~R. Young.
\newblock Minimizing the probability of lifetime ruin under borrowing
  constraints.
\newblock \emph{Insurance: Math. Econ.}, 41:\penalty0 196--221, 2006.

\bibitem[Bayraktar and Young(2013)]{B2013}
E.~Bayraktar and V.~R. Young.
\newblock Life insurance purchasing to maximize utility of household
  consumption.
\newblock \emph{North Amer. Actuarial J}, 17\penalty0 (2):\penalty0 114--135,
  2013.

\bibitem[Bayraktar et~al.(2014)Bayraktar, Promislow, and Young]{Bayraktar2014}
E.~Bayraktar, S.~D. Promislow, and V.~R. Young.
\newblock Purchasing life insurance to reach a bequest goal.
\newblock \emph{Insurance : Math. Econ.}, 58:\penalty0 204--216, 2014.

\bibitem[Bayraktar et~al.(2015)Bayraktar, Promislow, and Young]{Bayraktar2015}
E.~Bayraktar, S.~D. Promislow, and V.~R. Young.
\newblock Purchasing life insurance to reach a bequest goal:time-dependent
  case.
\newblock \emph{North Amer. Actuarial J.}, 19:\penalty0 224--236, 2015.

\bibitem[Bayraktar et~al.(2016)Bayraktar, Promislow, and Young]{Bayraktar2016}
E.~Bayraktar, S.~D. Promislow, and V.~R. Young.
\newblock Purchasing life insurance to reach a bequest goal while consuming.
\newblock \emph{SIAM J. Financial Math.}, 7:\penalty0 183--214, 2016.

\bibitem[Bender et~al.(2022)Bender, James, and et~al]{Bender2022}
S.~Bender, J.~C. James, and D.~D. et~al.
\newblock Millionaires speak: What drives their personal investment decisions?
\newblock \emph{J. Financ. Econ.}, 146\penalty0 (1):\penalty0 305--330, 2022.

\bibitem[Biradar et~al.(2021)Biradar, Adinoto, and et~al]{Biradar2021}
L.~S. Biradar, N.~Adinoto, and B.~F. et~al.
\newblock Personal financial planning: An approach towards insurance buying
  behavior.
\newblock \emph{JCMT}, 12\penalty0 (4):\penalty0 209--220, 2021.

\bibitem[Deimena(2014)]{KD2014}
K.~Deimena.
\newblock Individual investor investment alternatives assessment criteria
  modelling.
\newblock \emph{ATP}, 16:\penalty0 114--128, 2014.

\bibitem[Dhanasekaran and Kumar(2016)]{DR2016}
M.~Dhanasekaran and N.~R. Kumar.
\newblock Individual investors' behaviour towards investment on stocks.
\newblock \emph{AJRSSH}, 6\penalty0 (6):\penalty0 1656--1667, 2016.

\bibitem[Drive et~al.(2018)Drive, Freudenberg, Brimble, and Hunt]{DFB2018}
T.~Drive, B.~Freudenberg, M.~Brimble, and K.~Hunt.
\newblock Insurance literacy in australia: Not knowing the value of personal
  insurance.
\newblock \emph{Financ. Plan. Res. J}, 4:\penalty0 53--75, 2018.

\bibitem[Huang(2016)]{H2016}
L.~S. Huang.
\newblock Personal financial planning for college graduates.
\newblock \emph{Tech. Invest.}, 7:\penalty0 123--134, 2016.

\bibitem[Lee(2021)]{L2021}
H.~S. Lee.
\newblock Life insurance and subsistence consumption with an exponential
  utility.
\newblock \emph{J. Mathematics.}, 9:\penalty0 358, 2021.

\bibitem[Liang and Young(2019)]{LY2019}
X.~Q. Liang and V.~R. Young.
\newblock Reaching a bequest goal with life insurance : ambiguity about the
  risky asset's drift and mortality's hazard rate.
\newblock \emph{ASTIN Bulletin: The Journal of the IAA}, 50\penalty0
  (1):\penalty0 187--221, 2019.

\bibitem[Liang and Zhao(2016)]{LZ2016}
Z.~X. Liang and X.~Y. Zhao.
\newblock Optimal investment, consumption and life insurance under stochastic
  framework(in chinese).
\newblock \emph{Sci. Sin. Math}, 46:\penalty0 1863--1882, 2016.

\bibitem[Merton(1969)]{RM1969}
R.~C. Merton.
\newblock Lifetime portfolio selection under uncertainty: The continuous-time
  case.
\newblock \emph{Rev. Econ. stat.}, 51:\penalty0 247--257, 1969.

\bibitem[Merton(1971)]{RM1971}
R.~C. Merton.
\newblock Optimum consumption and portfolio rules in a continuous-time model.
\newblock \emph{J. Econ. Theory}, 3:\penalty0 373--413, 1971.

\bibitem[Michael et~al.(2021)Michael, Jeffrey, and et~al]{MJ2021}
J.~N. Michael, T.~C. Jeffrey, and A.~J. et~al.
\newblock \emph{Personal Financial Planning for Executives and Entrepreneurs
  (first edition)}.
\newblock Springer, 2021.

\bibitem[Milevsky et~al.(2006)Milevsky, Moore, and Young]{MY2006}
M.~A. Milevsky, K.~S. Moore, and V.~R. Young.
\newblock Asset allocation and annuity-purchase strategies to minimize the
  probability of financial ruin.
\newblock \emph{Math. Finance.}, 16:\penalty0 647--671, 2006.

\bibitem[Pietrzyk and Rokita(2015)]{PR2015}
R.~Pietrzyk and P.~Rokita.
\newblock Stochastic goals in financial planning for a two-person household.
\newblock \emph{Stat. Transit}, 16:\penalty0 111--136, 2015.

\bibitem[Promislow(2011)]{DP2011}
S.~D. Promislow.
\newblock \emph{Fundamentals of Actuarial Mathematics (third edition)}.
\newblock John Wiley Sons, 2011.

\bibitem[Richard(1975)]{FR1975}
S.~F. Richard.
\newblock Optimal consumption, portfolio and life insurance rules for an
  uncertain lived individual in a continuous time model.
\newblock \emph{J. Fina. Econ}, 2:\penalty0 187--203, 1975.

\bibitem[Scriven(2008)]{DS2008}
D.~Scriven.
\newblock \emph{Guide to Life Protection and Planning (second edition)}.
\newblock CCH Australia Limited, 2008.

\bibitem[Subbakrishna and Murali(2018)]{SM2018}
K.~R. Subbakrishna and S.~Murali.
\newblock \emph{Personal Financial Planning(Wealth Management) (first
  edition)}.
\newblock Himalaya Publishing House Pvt. Ltd, 2018.

\bibitem[Topa et~al.(2018)Topa, Lunceford, and Boyatzis]{GG2018}
G.~Topa, G.~Lunceford, and R.~E. Boyatzis.
\newblock Financial planning for retirement: A psychosocial perspective.
\newblock \emph{Front. Psychol}, 8:\penalty0 2338, 2018.

\bibitem[Wang and Young(2012)]{WY2012}
T.~Wang and V.~R. Young.
\newblock Optimal commutable annuities to minimize the probability of lifetime
  ruin.
\newblock \emph{Insurance: Math. Econ}, 50:\penalty0 200--216, 2012.

\bibitem[Weedige et~al.(2019)Weedige, Ouyang, and et~al]{WOG2019}
S.~S. Weedige, H.~B. Ouyang, and Y.~G. et~al.
\newblock Decision making in personal insurance: Impact of insurance literacy.
\newblock \emph{Sustainability}, 11\penalty0 (23):\penalty0 6795, 2019.

\bibitem[Xiong and Shen(2020)]{XS2020}
F.~S. Xiong and Z.~Z. Shen.
\newblock \emph{Life insurance actuarial science (second edition)}.
\newblock Wuhan University Press, 2020.

\bibitem[Young(2004)]{Y2004}
V.~R. Young.
\newblock Optimal investment strategy to minimize the probability of lifetime
  ruin.
\newblock \emph{North Amer. Actuarial J}, 8\penalty0 (4):\penalty0 106--126,
  2004.

\end{thebibliography}
\end{document}